\definecolor{DarkRed}{rgb}{0.5,0.1,0.1}
\definecolor{DarkBlue}{rgb}{0.1,0.1,0.5}
\def\BState{\State\hskip-\ALG@thistlm}
\newtheorem{theorem}{Theorem}
\newtheorem{lemma}{Lemma}[section]
\newtheorem{proposition}[lemma]{Proposition}
\newtheorem{claim}[lemma]{Claim}
\newtheorem{fact}[lemma]{Fact}
\newtheorem{definition}{Definition}
\newtheorem{problem}{Problem}
\newtheorem{remark}[lemma]{Remark}
\newtheorem*{claim*}{Claim}
\newtheorem*{proposition*}{Proposition}
\newtheorem*{lemma*}{Lemma}
\newtheorem*{problem*}{Problem}
\newtheorem{mdresult}{Result}
\newenvironment{result}{\begin{mdframed}[backgroundcolor=lightgray!40,topline=false,rightline=false,leftline=false,bottomline=false,innertopmargin=2pt]\begin{mdresult}}{\end{mdresult}\end{mdframed}}
\renewcommand{\qed}{\nobreak \ifvmode \relax \else
      \ifdim\lastskip<1.5em \hskip-\lastskip
      \hskip1.5em plus0em minus0.5em \fi \nobreak
      \vrule height0.75em width0.5em depth0.25em\fi}
\newcommand{\ourinfo}{Supported in part by National Science Foundation grants CCF-1552909, CCF-1617851, and IIS-1447470.}
\newcommand{\toShrink}{-.20cm}
\newcommand{\toShrinkEnu}{-.2cm}
\newcommand{\itfacts}[1]{Fact~\ref{fact:it-facts}-(\ref{part:#1})\xspace}
\newcommand{\algline}{
  \rule{0.5\linewidth}{.1pt}\hspace{\fill}%
  \par\nointerlineskip \vspace{.1pt}
}
\newcommand{\Ot}{\ensuremath{\widetilde{O}}}
\newcommand{\eps}{\ensuremath{\varepsilon}}
\newcommand{\Bracket}[1]{\Big[#1\Big]}
\newcommand{\bracket}[1]{\left[#1\right]}
\newcommand{\paren}[1]{\ensuremath{\left(#1\right)}\xspace}
\newcommand{\card}[1]{\left\vert{#1}\right\vert}
\newcommand{\Omgt}{\ensuremath{\widetilde{\Omega}}}
\newcommand{\norm}[1]{\ensuremath{\|#1\|}}
\newcommand{\set}[1]{\ensuremath{\left\{ #1 \right\}}}
\newcommand{\poly}{\mbox{\rm poly}}
\newcommand{\polylog}{\mbox{\rm  polylog}}
\newcommand{\OPT}{\ensuremath{\mbox{\sc opt}}\xspace}
\newcommand{\opt}{\textnormal{\ensuremath{\mbox{opt}}}\xspace}
\newcommand{\alg}{\ensuremath{\mathcal{A}}\xspace}
\DeclareMathOperator*{\Exp}{\ensuremath{{\mathbb{E}}}}
\renewcommand{\Pr}{\Prob}
\newcommand{\EX}{\Exp}
\newcommand{\Ex}{\Exp}
\newcommand{\etal}{et al.\xspace}
\newcommand{\event}[1]{\ensuremath{{\sf E}_{#1}}}
\newenvironment{tbox}{\begin{tcolorbox}[
		enlarge top by=5pt,
		enlarge bottom by=5pt,
		 boxsep=0pt,
                  left=4pt,
                  right=4pt,
                  top=10pt,
                  arc=0pt,
                  boxrule=1pt,toprule=1pt,
                  colback=white
                  ]
	}
{\end{tcolorbox}}
\newcommand{\Yes}{\ensuremath{\textnormal{\textsf{Yes}}}\xspace}
\newcommand{\No}{\ensuremath{\textnormal{\textsf{No}}}\xspace}
\newcommand{\dist}{\ensuremath{\mathcal{D}}}
\newcommand{\Prot}{\ensuremath{\Pi}}
\newcommand{\prot}{\ensuremath{\pi}}
\newcommand{\bA}{\bm{A}}
\newcommand{\bB}{\ensuremath{\bm{B}}}
\newcommand{\bC}{\ensuremath{\bm{C}}}
\newcommand{\bD}{\ensuremath{\bm{D}}}
\newcommand{\supp}[1]{\ensuremath{\textsc{supp}(#1)}}
\newcommand{\ICost}[2]{\ensuremath{\textnormal{\textsf{ICost}}_{#2}(#1)}\xspace}
\newcommand{\IC}[3]{\ensuremath{\textnormal{\textsf{IC}}_{#2}^{#3}(#1)}\xspace}
\newcommand{\CC}[3]{\ensuremath{\textnormal{\textsf{CC}}_{#2}^{#3}(#1)}\xspace}
\renewcommand{\event}{\mathcal{E}}
\newcommand{\istar}{\ensuremath{i^{\star}}}
\newcommand{\textbox}[2]{
{
\begin{tbox}
\textbf{#1}
{#2}
\end{tbox}
}
}
\newcommand{\topt}{\ensuremath{\widetilde{\textnormal{opt}}}}
\renewcommand{\event}[1]{\mathcal{E}(#1)}
\renewcommand{\OPT}{\ensuremath{\textnormal{\textsf{OPT}}}}
\renewcommand{\event}[1]{\ensuremath{\mathcal{E}\paren{#1}}}
\renewcommand{\bA}{\ensuremath{\overline{A}}}
\renewcommand{\bB}{\ensuremath{\overline{B}}}
\newcommand{\FC}{\ensuremath{\mathcal{F}}}
\renewcommand{\event}{\ensuremath{\mathcal{E}}\xspace}
\newcommand{\estar}{\ensuremath{e^{\star}}}
\newcommand{\Disj}{\ensuremath{\textnormal{\textsf{Disj}}}\xspace}
\newcommand{\Disjt}{\ensuremath{\textnormal{\textsf{Disj}}_{t}}\xspace}
\newcommand{\SetCover}{\ensuremath{\textnormal{\textsf{SetCover}}}\xspace}
\newcommand{\distDisj}{\ensuremath{\dist_{\Disj}}\xspace}
\newcommand{\distDisjY}{\ensuremath{\dist^{\textsf{Y}}_{\Disj}}\xspace}
\newcommand{\distDisjN}{\ensuremath{\dist^{\textsf{N}}_{\Disj}}\xspace}
\newcommand{\distSC}{\ensuremath{\dist_{\textnormal{\textsf{SC}}}}\xspace}
\renewcommand{\bar}[1]{\ensuremath{\overline{#1}}}
\renewcommand{\SS}{\ensuremath{\mathcal{S}}}
\newcommand{\ST}{\ensuremath{\mathcal{T}}}
\newcommand{\SC}{\ensuremath{\mathcal{C}}}
\newcommand{\SZ}{\ensuremath{\mathcal{Z}}}
\renewcommand{\Ex}[1]{\ensuremath{\mathbb{E}\bracket{#1}}}
\renewcommand{\Pr}[1]{\ensuremath{\mathbb{P}\paren{#1}}}
\newcommand{\protSC}{\ensuremath{\prot_{\textsf{SC}}}\xspace}
\newcommand{\protDisj}{\ensuremath{\prot_{\Disj}}\xspace}
\newcommand{\II}{\ensuremath{\mathbb{I}}}
\newcommand{\HH}{\ensuremath{\mathbb{H}}}
\renewcommand{\bA}{\ensuremath{\bm{A}}}
\renewcommand{\bB}{\ensuremath{\bm{B}}}
\newcommand{\PR}{\ensuremath{\mathbb{P}}}
\newcommand{\errs}{\ensuremath{\textnormal{errs}}\xspace}
\newcommand{\distN}{\ensuremath{\dist^{\textsf{N}}}}
\renewcommand{\prot}{\Pi}
\title{Tight Space-Approximation Tradeoff for the Multi-Pass Streaming Set Cover Problem}
\author{Sepehr Assadi\thanks{\ourinfo}\\University of Pennsylvania \\ {sassadi@cis.upenn.edu}}
\date{}
\begin{document}
\maketitle

\thispagestyle{empty}
\begin{abstract}
	We study the classic set cover problem in the streaming model: the sets that comprise the instance are revealed one by one in a stream and the goal is to solve
	the problem by making one or few passes over the stream while maintaining a sublinear space $o(mn)$ in the input size; here $m$ denotes the number of the sets and $n$ is the universe size.
	Notice that in this model, we are mainly concerned with the space requirement of the algorithms and hence do not restrict their computation time.
	
	Our main result is a resolution of the space-approximation tradeoff for the streaming set cover problem: we show that any $\alpha$-approximation algorithm
	for the set cover problem requires $\Omgt(mn^{1/\alpha})$ space, even if it is allowed $\polylog{(n)}$ passes over the stream, and even if the sets are arriving in a random order in the stream.  
	This space-approximation tradeoff matches the best known bounds achieved by the recent algorithm of Har-Peled~\etal (PODS 2016) that requires
	only $O(\alpha)$ passes over the stream in an adversarial order, hence settling the space complexity of approximating the set cover problem in data streams in a quite robust manner.  
	Additionally, our approach yields tight lower bounds for the space complexity of $(1- \eps)$-approximating the streaming maximum coverage problem studied in several recent works. 
\end{abstract}
\clearpage
\setcounter{page}{1}

\newcommand{\PC}{\ensuremath{\textnormal{\textsf{P}}}}
\newcommand{\NPC}{\ensuremath{\textnormal{\textsf{NP}}}}

\section{Introduction}\label{sec:intro}

The \emph{set cover} problem is one of the most fundamental optimization problems in computer science, with a wide range of 
applications in various domains including data mining and information retrieval~\cite{Anagnostopoulos15,SahaG09}, web host analysis~\cite{ChierichettiKT10}, operation 
research~\cite{GrossmanA97}, and many others.  
In this problem, we are given a collection of $m$ sets from a universe $[n]$ and the goal is to output a smallest number of sets whose union is $[n]$, or in other words, \emph{cover} the universe. 
The set cover problem is one of Karp's original 21 \NPC-hard problems~\cite{Karp72}. A simple greedy algorithm that iteratively picks the set that covers the most number of uncovered elements
achieves a $(\ln{n})$-approximation~\cite{Johnson74a,Slavik97} and this is best possible unless $\PC = \NPC$~\cite{DinurS14,Feige98,LundY94,Moshkovitz15}. 

The aforementioned results focus on the tradeoff between approximation guarantee and time complexity of the set cover problem. 
Nevertheless, in many settings, \emph{space complexity} of the algorithms is crucial to optimize. A canonical example is in applications in big data analysis: in such settings, one would like
to design algorithms capable of processing massive datasets using only few passes over the input and limited space. 
The well-established \emph{streaming model} of computation~\cite{AlonMS96,Muth05} precisely captures this setting. 

In the \emph{streaming set cover} problem, originally introduced by
Saha and Getoor~\cite{SahaG09}, the input sets are provided one by one in a stream and the algorithms are allowed to make a small number of passes
over the stream while maintaining a sublinear space $o(mn)$ for processing the stream. The streaming set cover problem and the closely related maximum coverage problem have
received quite a lot of attention in recent years~\cite{SahaG09,CormodeKW10,EmekR14,DemaineIMV14,BadanidiyuruMKK14,IndykMV15,HarPeledIMV16,ChakrabartiW16,AssadiKL16,McGregorVu16,BateniEM16}; 
we refer the reader to~\cite{AssadiKL16,McGregorVu16} for a comprehensive summary of these results.

Particularly relevant to our work, Demaine~\etal~\cite{DemaineIMV14}, have shown an $\alpha$-approximation algorithm that uses $O(\alpha)$ passes over the stream and needs $\Ot(mn^{\Theta(1/\log{\alpha})})$ space. 
Recently, Har-Peled~\etal~\cite{HarPeledIMV16} provide a significant improvement over this algorithm: they developed an 
$\alpha$-approximation, $O(\alpha)$-pass streaming algorithm that requires $\Ot(mn^{\Theta(1/\alpha)})$ space. They further conjectured that the tradeoff between the number of 
passes and the space in their algorithm is almost tight: this is supported by a lower bound of $\Omgt(mn^{1/2p})$ space for $p$-pass streaming algorithms that
compute an \emph{exact} set cover solution~\cite{HarPeledIMV16}. 

Notice however that the algorithm of~\cite{HarPeledIMV16} (and~\cite{DemaineIMV14}) exhibits a somewhat unusual behavior: allowing a larger number of passes 
over the stream results in a weaker approximation guarantee obtained by the algorithm. This highlights the following natural question: can we achieve
a (fixed) \emph{constant} approximation in $p$-passes and $\Ot(mn^{\Theta(1/p)})$ space? (a recent algorithm of Bateni~\etal~\cite{BateniEM16} achieves a fixed $\log{n}$-approximation within these bounds.)   
In general, what is the \emph{space-approximation tradeoff} for the streaming set cover problem if we consider algorithms that are allowed a relatively small number of passes, say up to $\polylog{(n)}$, over the stream? This is precisely the question addressed in this work. 

\subsection{Our Contributions}\label{sec:results}

Our main result is a tight resolution of the space-approximation tradeoff for the streaming set cover problem: 

\begin{result}[Main result, formalized as Theorem~\ref{thm:sc-lower}]\label{res:sc-lower}
	 Any streaming $\alpha$-approximation $\polylog{(n)}$-pass algorithm for the set cover problem requires $\Omgt(mn^{{1}/{\alpha}})$ space even on {random arrival streams}.
	 This lower bound applies even for the weaker goal of estimating the optimal value of the set cover instance (as opposed to finding the actual sets that cover the universe). 
\end{result}

Prior to our work, the best known lower bounds for \emph{randomized multi-pass} streaming algorithms ruled out the possibility of $(\log{n}/2)$-approximation in $p$ passes and $o(m/p)$ space~\cite{Nisan02}, and
exact solution in $p$ passes and $o(mn^{1/2p})$ space~\cite{HarPeledIMV16} (the later holds only if $m = O(n)$). These results left open the possibility of obtaining, say, a $2$-approximation in 
two passes or even an exact answer in $O(\log{n})$ passes and $\Ot(m)$ space. On the other hand, Result~\ref{res:sc-lower} smoothly extends the bounds in~\cite{Nisan02} to the whole range of 
approximation factors $\alpha = o(\log{n})$, proving the first \emph{super-linear} in $m$ lower bound for approximating set cover in \emph{multi-pass} streams. It also significantly improves the 
bounds in~\cite{HarPeledIMV16} to $\Omgt(mn/p)$ (and all range of $m = \poly(n)$) for $p$ pass streaming algorithms that recover an exact answer\footnote{Note that this result also implies that the ``right'' tradeoff between space 
and number of passes for obtaining an \emph{exact} solution to the streaming set cover is in fact linear as opposed to exponential, i.e., $n/p$ as opposed to $n^{1/p}$, as was previously shown in~\cite{HarPeledIMV16}.}. 

As mentioned earlier, Har-Peled~\etal~\cite{HarPeledIMV16} designed an $\alpha$-approximation algorithm for the set cover problem that requires $\Ot(mn^{\Theta(1/\alpha)})$ space 
(for some unspecified constant larger than $2$ in the $\Theta$-notation in the exponent). We can show that with proper modifications, 
this algorithm in fact only requires $\Ot(mn^{1/\alpha})$ space (see Theorem~\ref{thm:sc-upper}), hence proving a \emph{tight} upper bound for Result~\ref{res:sc-lower} (up to logarithmic factors). 
These results together resolve the space-approximation tradeoff for streaming set cover
problem in multi-pass streams. It is worth mentioning that the space-approximation tradeoff for \emph{single-pass} streaming algorithms
 of set cover has been previously resolved in~\cite{AssadiKL16}. 

Finally, we point out that the lower bound in Result~\ref{res:sc-lower} is quite \emph{robust} in the sense that it holds even when the sets are arriving in a random order. 
This is particularly relevant to the streaming set cover problem as most known techniques for this problem are based on element and set sampling and a-priori one may expect
that random arrival streams can facilitate the use of such techniques, resulting in better bounds than the ones achievable in adversarial streams. We point that in general, 
many streaming problems are known to be distinctly easier in random arrival streams compared to adversarial streams (see, e.g.,~\cite{GuhaM09,KonradMM12,KapralovKS14}). 

We further show an application of our techniques in establishing Result~\ref{res:sc-lower} to the \emph{streaming maximum coverage} problem
 that has been studied in several recent works~\cite{SahaG09,AusielloBGLP12,BadanidiyuruMKK14,McGregorVu16,BateniEM16,EpastoLVZ16,ChenNZ16}. 
In this problem, we are given a collection of $m$ sets from a universe $[n]$ and an integer $k \geq 1$, and the goal is to find $k$ sets that cover the most number of elements in $[n]$. We prove that,

\begin{result}[Formalized as Theorem~\ref{thm:kc-lower}]\label{res:kc-lower}
 	Any streaming $(1-\eps)$-approximation $\polylog{(n)}$-pass algorithm for the maximum coverage problem requires $\Omgt(m/\eps^2)$ space even on {random arrival streams}. This lower 
	bound applies even for the case $k = O(1)$. 
\end{result}

Single-pass $(1-\eps)$-approximation algorithms for this problem that use, respectively, $\Ot(mk/\eps^2)$ space and $\Ot(m/\eps^3)$ have been proposed recently 
in~\cite{McGregorVu16,BateniEM16}, and~\cite{BateniEM16}. Our Result~\ref{res:kc-lower} is hence \emph{tight} for any $k = O(1)$ (up to logarithmic factors) and 
within an $O(1/\eps)$ factor of the best upper bound for the larger values of $k$. 

McGregor and Vu~\cite{McGregorVu16} have very recently proved an $\Omgt(m)$ lower bound for $\polylog{(n)}$-pass streaming algorithms that approximate
the maximum coverage problem to within a factor better than $(1-1/e)$ (a single-pass $(1-1/e)$-approximation algorithm in $\Ot(m)$ space is also developed in~\cite{McGregorVu16,BateniEM16}).
The importance of Result~\ref{res:kc-lower} is thus in establishing the tight dependence on the parameter $\eps$ for this problem. This is important as $(1-\eps)$-approximation algorithms
for this problem for very small values of $\eps$, i.e., $\eps = 1/n^{\Omega(1)}$, are typically used as a sub-routine in approximating the streaming set cover problem in
multiple passes~\cite{DemaineIMV14,HarPeledIMV16,BateniEM16} (see Section~\ref{sec:sc-upper} for more details). 

En route, we also obtain the following result which may be of independent interest: the communication complexity of computing an exact solution to the set cover problem or the maximum coverage problem in 
the two-player communication model is $\Omgt(mn)$ bits (see Theorems~\ref{thm:sc-cc-lower} and~\ref{thm:kc-cc-lower}). This improves upon the previous 
$\Omega(m)$ lower bounds of Nisan~\cite{Nisan02} (for set cover) and McGregor and Vu~\cite{McGregorVu16} (for maximum coverage). The two-player communication model 
for set cover has also been studied in~\cite{AssadiKL16,ChakrabartiW16,DemaineIMV14,HarPeledIMV16}.  

We conclude this section by highlighting the following important aspect of our lower bounds. 
\begin{remark}\label{rem:np-hardness} 
In the hard instances we consider in proving Results~\ref{res:sc-lower} and~\ref{res:kc-lower}, the minimum set cover size and the parameter $k$ in maximum
coverage are \emph{small constants} and hence these instances admit a trivial poly-time algorithm in the classical (offline) setting. Our results hence establish the ``hardness'' of these instances
under the space restrictions of the streaming model, independent of the \NPC-hardness of approximating these problems.  
\end{remark}



\subsection{Technical Overview}\label{sec:techniques} 

We focus here on providing a technical overview of the proof of Result~\ref{res:sc-lower} - Result~\ref{res:kc-lower} is also proven
along similar lines. The starting point of our work is~\cite{AssadiKL16}, which proved a tight space lower bound for single-pass streaming algorithms
of set cover by analyzing the \emph{one-way communication complexity} of this problem (see Section~\ref{SEC:PRELIM} for details on communication complexity).

The overall approach of~\cite{AssadiKL16} can be summarized as follows. Consider a communication problem whereby
 Alice is given a collection of sets $S_1,\ldots,S_m$, Bob is given a set $T$, and they need to compute an $\alpha$-approximation
 of the set cover instance $(S_1,\ldots,S_m,T)$ in the one-way communication model. 
The input to the players are correlated in that there exists a set $S_{\istar}$ in Alice's collection which together with Bob's set $T$ cover the whole universe except for a single element. 
However, if the content of the set $S_{\istar}$ is unknown to Bob, i.e., Alice's message does not reveal almost all $S_{\istar}$, 
Bob needs to cover $[n] \setminus T$ (which is a subset of $S_{\istar}$ except for one element) with sets other than $S_{\istar}$ to ensure that the single 
element outside $S_{\istar}$ is covered. The collection $S_1,\ldots,S_m$ is designed to satisfy the so-called \emph{$r$-covering} property~\cite{LundY94} that states that no small collection of $S_i$'s set
can cover another set $S_j$ entirely\footnote{It is worth mentioning that essentially all known lower bounds for the streaming set cover problem, on their core, are based on some variant of this $r$-covering property; 
see~\cite{ChakrabartiW16} for more details.}, hence forcing  Bob to use many sets to cover the universe. The authors then use the \emph{information complexity} paradigm to reduce
the set cover problem on this distribution to multiple instances of a simpler problem (called the \emph{Trap} problem) and prove a lower bound for this new problem. 

In this paper, we extend this approach to lower bound the two-way communication complexity of the set cover problem and ultimately obtain the desired lower bound in Result~\ref{res:sc-lower} 
for multi-pass streaming algorithms. To do this, we need to address the following issues: 

First, the type of distribution used in~\cite{AssadiKL16} is clearly not suitable for proving lower bounds in the two-way model. In particular, we need a distribution 
with both Alice and Bob having $\Omega(m)$ sets and additionally, no clear ``signal'' to either party as which of the sets are more important, i.e., correspond to the sets $S_{\istar}$ and $T$ in the above distribution. To achieve this, 
we employ the $r$-covering property in a novel way: we first design a collection of sets $Z_1,\ldots,Z_m$ such that no collection of $\alpha$ sets $Z_i$'s can cover the universe $[n]$ unless they contain a  
single set $Z_{\istar}$ which is in fact equal to $[n]$ already (for remaining sets $Z_i$, we have $\card{Z_i} \approx n-n^{1-1/\alpha}$). 
Next, we decompose each $Z_i$ into two sets $S_i$ and $T_i$ and provide Alice with $S_i$, and Bob with $T_i$. This way, the sets $S_{\istar}$ and $T_{\istar}$ form a 
set cover of size two, and the $r$-covering property ensures that no other collection of $\alpha$ pairs $(S_i,T_i)$ can cover the universe; we further prove that ``mix and matching'' the sets (i.e., picking $S_i$ but not $T_i$ or vice 
versa) in the solution is not helpful either, hence implying that any $\alpha$-approximation algorithm for set cover needs to find the sets $S_{\istar}$ and $T_{\istar}$. 

The next step is to prove the lower bound for the above distribution. Unlike the lower bound in the one-way model that was based on hiding
 the content of the set $S_{\istar}$, here we need to argue that in fact the index $\istar$ itself is hidden from the players (as otherwise, one more round of
 communication can reveal the content of the sets $S_{\istar}$ and $T_{\istar}$ as well). Similar to~\cite{AssadiKL16}, we also use the information complexity paradigm to prove the communication lower bound
 for this distribution. We embed different instances of the well-known \emph{set disjointness} problem in each pair $(S_i,T_i)$ such that all embedded instances are \emph{intersecting} except for the instance for 
 $S_{\istar}$ and $T_{\istar}$ which is \emph{disjoint}. As we seek a direct-sum style argument for two-way protocols, we need a more careful argument than the one in~\cite{AssadiKL16} that was tailored for one-way protocols. In particular, we now use the notion of \emph{internal} information complexity (as opposed to \emph{external} information complexity used in~\cite{AssadiKL16}) that allows us to use the powerful techniques 
 developed in~\cite{BarakBCR10,Braverman12,BravermanR11} to obtain the direct-sum result. 
 
 Finally, we need to lower bound the information complexity of the set disjointness problem on the specific distribution induced by the set cover instances. The set cover distribution is designed in a way to ensure 
 that the distribution of underlying set disjointness instances matches the known hard input distributions for this problem. However, there is a subtlety here; known information complexity lower bounds for set disjointness (that 
 we are aware of) are all over distributions that are supported only on
 disjoint sets, i.e., $\Yes$-instances of the problem (see, e.g.,~\cite{Bar-YossefJKS02-S,BravermanGPW13,WeinsteinW15})\footnote{We remark that this is not just a coincidence and in fact is crucial for performing the typical 
 reduction to the AND problem used in proving the lower bound for set disjointness, see, e.g.,~\cite{WeinsteinW15} for more details.}. However, for our purpose, we need to lower bound the information cost of set disjointness protocols 
 on distributions that are intersecting. We achieve this using an application of the ``information odometer'' of~\cite{BravermanW15} (and subsequent work in~\cite{GoosJP015}) 
 to relate the information cost of the protocols on \Yes and \No instances of the problem together and obtain the result.

We are not done though, as we seek a lower bound for random arrival streams and for this, we extend the previous communication complexity lower bound to the 
case when the input sets are partitioned randomly across the players, in a similar way as done in previous work~\cite{AssadiKL16} (itself based on~\cite{ChakrabartiCM08}). 
There are however some technical differences needed to execute this approach in our two-way communication model
in compare to the one-way model in~\cite{AssadiKL16} (see Lemma~\ref{lem:dist-random} for details).

\newcommand{\REM}[1]{}

\renewcommand{\prot}{\ensuremath{\pi}}
\renewcommand{\Prot}{\ensuremath{\Pi}}

\section{Preliminaries}\label{SEC:PRELIM}

\paragraph{Notation.} For any integer $a \geq 1$, we let $[a]:=\set{1,\ldots,a}$. We say that a set $S \subseteq [n]$ with $\card{S}=s$ is a \emph{$s$-subset} of $[n]$. 
For a $k$-dimensional tuple $X = (X_1,\ldots,X_k)$ and index $i \in [k]$, we define $X^{<i}:= (X_1,\ldots,X_{i-1})$ and $X^{-i}:=(X_1,\ldots,X_{i-1},X_{i+1},\ldots,X_k)$.

We use capital letters to denote random variables. For a random variable $A$, $\supp{A}$ denotes the support of $A$ and $\card{A} := \log{\card{\supp{A}}}$. 
We use $``A \perp B \mid C"$ to mean that the random variables $A$ and $B$ are independent conditioned on $C$. The notation $``A \in_R U"$ indicates that $A$ is chosen uniformly at random from the set $U$.   

We denote the \emph{Shannon Entropy} of a random variable $A$ by $\HH(A)$ and the \emph{mutual information} of two random variables $ A$ and $ B$ by
$\II( A : B) = \HH( A) - \HH( A \mid  B) = \HH( B) - \HH( B \mid  A)$. If the distribution
$\dist$ of the random variables is not clear from the context, we use $\HH_\dist( A)$
(resp. $\II_{\dist}( A : B)$). Appendix~\ref{app:info} summarizes the relevant information theory tools that we use in this paper. 

\paragraph{Concentration bounds.} We use the following standard version of Chernoff bound (see, e.g.,~\cite{ConcentrationBook}). 

\begin{proposition}\label{prop:chernoff}
	Let $X_1,\ldots,X_n$ be $n$ independent random variables taking values in $[0,1]$ and let $X:= \sum_{i=1}^{n} X_i$. Then, for any $0 \leq \eps \leq 1$,
	\[ \Pr{\card{X - \Ex{X}} > \eps \cdot \Ex{X}} \leq 2 \cdot \exp\paren{-\frac{\eps^{2}\cdot\Ex{X}}{2}} \]
\end{proposition}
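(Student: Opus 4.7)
The plan is to prove the two-sided bound via the standard Bernstein/Chernoff moment-generating-function approach, handling the upper and lower tails separately and then combining them with a union bound to pick up the factor of $2$.

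First I would apply Markov's inequality to the exponentiated random variable: for any $t > 0$,
\[
\Pr{X \geq (1+\eps)\Ex{X}} \leq e^{-t(1+\eps)\Ex{X}} \cdot \Ex{e^{tX}}.
\]
Since the $X_i$ are independent, $\Ex{e^{tX}} = \prod_i \Ex{e^{tX_i}}$. The key convexity step is the inequality $e^{tx} \leq 1 + x(e^t - 1)$ valid for all $x \in [0,1]$ (the right-hand side is the secant line of $e^{tx}$ on $[0,1]$ and dominates by convexity of $e^{tx}$ in $x$). Taking expectations gives $\Ex{e^{tX_i}} \leq 1 + \Ex{X_i}(e^t - 1) \leq \exp\!\bigl(\Ex{X_i}(e^t - 1)\bigr)$, and multiplying yields $\Ex{e^{tX}} \leq \exp\!\bigl(\Ex{X}(e^t - 1)\bigr)$.

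Next I would optimize the tilt parameter. Choosing $t = \ln(1+\eps)$ gives the sharp form
\[
\Pr{X \geq (1+\eps)\Ex{X}} \leq \left(\frac{e^{\eps}}{(1+\eps)^{1+\eps}}\right)^{\Ex{X}},
\]
and then use the elementary inequality $(1+\eps)\ln(1+\eps) - \eps \geq \eps^2/2$ for $\eps \in [0,1]$ (which follows from a second-order Taylor expansion of $f(\eps) = (1+\eps)\ln(1+\eps)$ together with the fact that $f''(\eps) = 1/(1+\eps) \geq 1/2$ on this range, integrated twice from $0$) to conclude the upper-tail bound $\Pr{X \geq (1+\eps)\Ex{X}} \leq \exp(-\eps^2\Ex{X}/2)$.

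For the lower tail I would repeat the argument with $t < 0$, i.e., apply Markov to $e^{-tX}$ to get $\Pr{X \leq (1-\eps)\Ex{X}} \leq \bigl(e^{-\eps}/(1-\eps)^{1-\eps}\bigr)^{\Ex{X}}$, and use the analogous estimate $(1-\eps)\ln(1-\eps) + \eps \geq \eps^2/2$ on $[0,1]$ to obtain $\Pr{X \leq (1-\eps)\Ex{X}} \leq \exp(-\eps^2\Ex{X}/2)$. A union bound over the two tails yields the stated factor of $2$. The main obstacle is the tail-inequality calibration, specifically verifying that the constant $1/2$ in the exponent is in fact correct on the full range $\eps \in [0,1]$; this is purely analytic and reduces to the two Taylor-expansion estimates mentioned above, so no subtle probabilistic input beyond independence and boundedness is required.
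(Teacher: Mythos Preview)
The paper does not give its own proof of this proposition; it simply cites it as a standard concentration inequality from a textbook. Your moment-generating-function approach is the standard one and is correct in outline, but there is a genuine gap in the upper-tail step.

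The inequality $(1+\eps)\ln(1+\eps) - \eps \geq \eps^{2}/2$ that you invoke for the upper tail is \emph{false} on $(0,1]$: at $\eps = 1$ the left-hand side equals $2\ln 2 - 1 \approx 0.386 < 0.5$. Your own justification reveals the slip: with $f(\eps) = (1+\eps)\ln(1+\eps)$ one has $f''(\eps) = 1/(1+\eps) \geq 1/2$, and integrating twice from $0$ gives $f(\eps) - f(0) - \eps f'(0) \geq \eps^{2}/4$, i.e.\ $(1+\eps)\ln(1+\eps) - \eps \geq \eps^{2}/4$, not $\eps^{2}/2$. The sharp elementary estimate here is $(1+\eps)\ln(1+\eps) - \eps \geq \eps^{2}/(2 + 2\eps/3)$, which for $\eps \in [0,1]$ only yields the exponent $-\eps^{2}\Ex{X}/3$. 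Your lower-tail inequality $(1-\eps)\ln(1-\eps) + \eps \geq \eps^{2}/2$ \emph{is} correct, so the asymmetry is real: the tight Chernoff form does not deliver the constant $1/2$ for the upper tail, and the two-sided bound your argument actually proves is $2\exp(-\eps^{2}\Ex{X}/3)$, which is the version most references state. In short, the approach is right but the claimed constant in the upper-tail Taylor estimate is not supported by the calculation you describe.
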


We also prove the following useful auxiliary lemma that upper bounds the number of elements that a collection of large random sets can cover. 

\begin{lemma}\label{lem:coverage-lemma}
	Let $\SS = \set{S_1,\ldots,S_k}$ be a collection of $(n-s)$-subsets of $[n]$ that are chosen independently and uniformly at random. Suppose $U \subseteq [n]$ is another set chosen independent of 
	$\SS$; if $k = o(e^{s})$, then, 
	\begin{align*}
		\Pr{\card{U \setminus \paren{S_1 \cup \ldots \cup S_k}} < \frac{\card{U}}{2} \cdot \paren{\frac{s}{2n}}^{k}} < 2 \cdot \exp \paren{ - \frac{\card{U}}{8} \cdot \paren{\frac{s}{2n}}^{k}}
	\end{align*}
\end{lemma}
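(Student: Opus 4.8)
The plan is to compute the expected size of the uncovered portion of $U$ and then apply the Chernoff bound of Proposition~\ref{prop:chernoff} to each element's indicator of being uncovered, after dealing with the fact that these indicators are not independent.

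First I would fix an element $x \in U$ and compute the probability that $x$ is not covered by any $S_i$. Since each $S_i$ is a uniformly random $(n-s)$-subset of $[n]$ (equivalently, its complement $\bar{S_i}$ is a uniformly random $s$-subset), we have $\Pr{x \notin S_i} = s/n$, and since the $S_i$ are independent, $\Pr{x \notin S_1 \cup \cdots \cup S_k} = (s/n)^k$. By linearity of expectation, $\Ex{\card{U \setminus (S_1 \cup \cdots \cup S_k)}} = \card{U} \cdot (s/n)^k$ (using that $U$ is independent of $\SS$, so we may condition on $U$ and work with a fixed set of size $\card{U}$). Note this expectation is comfortably above the threshold $\tfrac{\card{U}}{2}(s/2n)^k$ appearing in the statement — there is a factor $2^{k+1}$ of slack, which is exactly what lets a concentration bound go through.

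The main obstacle is that the events $\{x \notin S_1 \cup \cdots \cup S_k\}$ for different $x \in U$ are \emph{negatively correlated} rather than independent — knowing that many elements avoid a given random $(n-s)$-subset $S_i$ makes it less likely that another element does. The clean way to handle this is to observe that the indicator random variables $\mathbb{1}[x \in \bar{S_i}]$ across $x$ (for fixed $i$) form a negatively associated collection (a uniform $s$-subset is a standard example of negative association), and products and sums preserve the relevant inequalities; hence the variables $Y_x := \mathbb{1}[x \notin S_1 \cup \cdots \cup S_k] = \prod_i \mathbb{1}[x \in \bar{S_i}]$ are negatively associated, and the upper-tail / lower-tail Chernoff bound of Proposition~\ref{prop:chernoff} applies to $X := \sum_{x \in U} Y_x$ exactly as in the independent case. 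Applying the lower-tail bound with $\eps = 1/2$ gives $\Pr{X < \tfrac12 \card{U}(s/n)^k} \le 2\exp(-\tfrac18 \card{U}(s/n)^k)$, and since $(s/n)^k \ge (s/2n)^k$ this is at most the claimed bound $2\exp(-\tfrac{\card{U}}{8}(s/2n)^k)$, while $\tfrac12\card{U}(s/n)^k \ge \tfrac{\card{U}}{2}(s/2n)^k$ so the event on the left of the lemma statement contains our event — giving the result.

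Alternatively, if one prefers to avoid invoking negative association, I would instead condition on the complements $\bar{S_1},\ldots,\bar{S_k}$ one at a time: writing $X = \sum_{x\in U} Y_x$, the hypothesis $k = o(e^s)$ guarantees $(s/n)^k = \omega(?)$ — more precisely it ensures $\Ex{X}$ is not exponentially tiny relative to what the bound needs, so the deviation term in the exponent is meaningful. A martingale (Azuma–Hoeffding) argument on the Doob martingale exposing $\bar{S_1}, \dots, \bar{S_k}$ also works since revealing one $\bar{S_i}$ changes $X$ by a controlled amount, but the negative-association route is cleanest and matches the Chernoff statement already in hand. I expect the write-up to be short: the only genuinely substantive point is justifying concentration despite dependence, and the role of the hypothesis $k = o(e^s)$ is simply to keep $\Ex{X} = \card U (s/n)^k$ from collapsing so fast that the tail bound becomes vacuous.
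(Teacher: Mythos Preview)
Your negative-association route is correct and is a genuinely different argument from the paper's. The paper explicitly flags that \emph{negative correlation} alone does not give a lower-tail Chernoff bound, and so it does \emph{not} appeal to any dependence-tolerant concentration inequality. Instead it couples to an independent model: it defines an auxiliary distribution $\dist'$ in which each element of $[n]$ is dropped from each $S_i$ independently with probability $p = s/(2n)$, so that the indicators $X_e$ become fully independent and Proposition~\ref{prop:chernoff} applies verbatim, yielding $\PR_{\dist'}(X < \eta/2) \le e^{-\eta/8}$ for $\eta = \card{U}(s/2n)^k$. It then shows that under $\dist'$ each $\card{S_i} \ge n-s$ with probability at least $1 - e^{-s/2}$, uses the hypothesis $k = o(e^s)$ and a union bound to get this simultaneously for all $k$ sets with probability $\ge 1/2$, and finally observes that $\dist$ is obtained from $\dist'$ (conditioned on this event) by deleting further elements from each $S_i$, which can only increase $X$. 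This is where the factor-of-two slack in $s/(2n)$ versus $s/n$, and the extra factor of $2$ in front of the exponential, come from.

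Two remarks on your write-up. First, Proposition~\ref{prop:chernoff} as stated in the paper is only for independent variables; your argument needs the (true, standard) fact that both tails of Chernoff hold for negatively associated $\{0,1\}$ variables, so you should cite that rather than Proposition~\ref{prop:chernoff}. Second, your account of the role of $k = o(e^s)$ is off: in your NA argument that hypothesis is simply \emph{unused} --- the bound $\Pr{X < \tfrac12 \card{U}(s/n)^k} \le 2\exp(-\tfrac18 \card{U}(s/n)^k)$ holds for all $k$, and implies the lemma's statement since $(s/n)^k \ge (s/2n)^k$. In the paper's coupling proof, by contrast, $k = o(e^s)$ is essential: it is exactly what makes the union bound over the $k$ events $\{\card{S_i} < n-s\}$ go through. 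So your approach is slightly stronger (no hypothesis on $k$) at the cost of invoking the NA machinery; the paper's approach stays within elementary Chernoff but pays with the coupling step and the extra assumption.
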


We first briefly explain the bounds in Lemma~\ref{lem:coverage-lemma}. Note that each element $e \in [n]$, is not covered by a set $S_i \in \SS$ w.p. $\frac{s}{n}$ (as $S_i$ is a random set of size $(n-s)$). Moreover, 
since the sets are chosen independent of each other, the probability that $e$ is not covered by $\SS$ is $\paren{\frac{s}{n}}^{k}$. Hence, in expectation $\card{U} \cdot \paren{\frac{s}{n}}^{k}$ elements in $U$ are not 
covered by $\SS$. We then wish to argue, by means of some concentration bound, that with a very high probability the number of elements not covered by $\SS$ is at least half
of this number (notice that the bounds in the lemma statement are quite similar but not exactly equal to this quantity). 

However, there is an important subtlety here. The random variables defined in the above process 
are \emph{negatively correlated} and hence one cannot readily use a Chernoff-Hoeffding bound (or even similar variants defined for negatively 
correlated random variables) to bound this probability. This is because we need to bound the probability of the sum of these random variable being too small as opposed to being too large which
already follows from known results (see, e.g.,~\cite{PanconesiS97,ImpagliazzoK10})\footnote{Note that in general, Chernoff bound type inequalities do not hold for bounding the sum of negatively random variables from below.}. 
In the following, we show how to get around this using a careful coupling argument.  

\begin{proof}[Proof of Lemma~\ref{lem:coverage-lemma}]
	
	For any element $e \in U$, define the random variable $X_e \in \set{0,1}$ which is $1$ iff $e \notin S_1 \cup \ldots \cup S_k$. Define $X:= \sum_{e \in U} X_e$; notice that $X$ denotes the number of elements in $U$
	that are not covered by $\SS$. Our goal is then to lower bound the value of $X$. Note that the random variables $X_e$ are \emph{negatively correlated} and hence, as stated earlier, we \emph{cannot} use Chernoff bound (or 
	its generalizations to negatively correlated random variables) to \emph{lower bound} the value of $X$. 
	
	To get around this, we slightly change the distribution each set is chosen from, prove the result in that case, and then relate that distribution to the original distribution of the sets in $\SS$. Formally, let $\dist$ be the distribution of 
	from which the sets in $\SS$ are chosen. Consider the following distribution $\dist'$: we create each set $S_i$ (for $i \in [k]$) by removing each element in $[n]$ from $S_i$ independently and 
	uniformly at random w.p. $p = \frac{s}{2n}$. 
	
	We lower bound the value of the random variable $X$ under this new distribution. We first have, 
	\begin{align*}
		\EX_{\dist'}\bracket{X} &= \sum_{e \in U} \PR_{\dist'}\paren{X_e = 1} = \card{U} \cdot p^{k}  = \card{U} \cdot \paren{\frac{s}{2n}}^{k}
	\end{align*}
	For simplicity, define $\eta:= \card{U} \cdot \paren{\frac{s}{2n}}^{k}$. An important property of $\dist'$ is that now all random variables $X_e$ are \emph{independent} of each other. Hence, we can apply Chernoff bound as 
	follows,
	\begin{align}
		\PR_{\dist'}\paren{X < \eta/2} = \PR_{\dist'}\paren{X < \Ex{X}/2} \leq e^{-{\eta}/{8}}
	\end{align}
	We now argue that $\PR_{\dist}\paren{X < \eta/2}$ is in fact very close to $\PR_{\dist'}\paren{X < \eta/2}$. 
	
	Fix a set $S_i \in \SS$. For each $e \in [n]$, define a random variable $Y_e \in \set{0,1}$ which is $1$ iff $e \notin S_i$. Let $Y = \sum_{e \in [n]} Y_e$, i.e., the number of elements missing from $S_i$. 
	Note that $\EX_{\dist'}\bracket{Y} = \frac{s}{2}$. Under the distribution $\dist'$, for each set $S_i \in \SS$, each
	element $e \in [n]$ belongs to $S_i$ independently; hence a simple application of Chernoff bound ensures that:
	\begin{align}
		\PR_{\dist'}\paren{\card{S_i} < (n-s)} = \PR_{\dist'}\paren{Y > 2 \cdot \Ex{Y}} < e^{-s/2} \label{eq:event-not-large}
	\end{align}
	Define $\event$ as the event that all sets $S_i$ has size at least $(n-s)$; by Eq~(\ref{eq:event-not-large}) and a union bound, $\PR_{\dist'}\paren{\event} \geq 1-k\cdot e^{-s/2} \geq \frac{1}{2}$ (as $k = o(e^{s})$). 
	Notice that to sample a set system from $\dist$, we can first sample a set system from $\dist' \mid \event$ and then make the size of each set exactly equal to $(n-s)$ by removing the extra elements uniformly at random;
	this process does not increase the coverage of the original set system sampled from $\dist'$ (or equivalently decrease the value of $X$). Hence, 
	\begin{align*}
		\PR_{\dist}\paren{X < \eta/2} \leq \PR_{\dist'}\paren{X < \eta/2 \mid \event} \leq \frac{\PR_{\dist'}\paren{X < \eta/2}}{\PR_{\dist'}\paren{\event}} \leq 2 \cdot e^{-\eta/8}
	\end{align*}
	By substituting the value of $\eta$, we obtain the desired bound. 
\end{proof}

\subsection{Communication Complexity and Information Complexity}\label{sec:cc-ic}

Communication complexity and information complexity play an important role in our lower bound proofs. 
We now provide necessary definitions for completeness.

\paragraph{Communication complexity.} Our lowers bounds for streaming algorithms
are established via communication complexity lower bounds. 
We use standard definitions of the \emph{two-party communication} model introduced by Yao~\cite{Yao79}; see~\cite{KN97} for 
an extensive overview of communication complexity. 

Let $P$ be a relation with domain $\mathcal{X} \times \mathcal{Y} \times \mathcal{Z}$.  Alice receives an input $X
\in \mathcal{X}$ and Bob receives $Y \in \mathcal{Y}$, where $(X,Y)$ are chosen from a
joint distribution $\dist$ over $\mathcal{X} \times \mathcal{Y}$. They communicate with each other by exchanging messages such that each message
 depends only on the private input of the player sending the message and the already communicated messages. The last message communicated is the answer
 $Z$ such that $(X,Y,Z) \in P$. We allow players to have access to both public and private randomness.  

We use $\prot$ to denote a protocol used by the players. We always assume that the protocol $\prot$ can be randomized (using both public and
private randomness), \emph{even against a prior distribution $\dist$ of inputs}. For any
$0 < \delta < 1$, we say $\prot$ is a $\delta$-error protocol for $P$ over a distribution
$\dist$, if the probability that for an input $(X,Y)$, $\prot$ outputs some $Z$ where $(X,Y,Z) \notin P$ is at most
$\delta$ (the probability is taken over the randomness of both the distribution and the protocol).

\begin{definition}
  The \emph{communication cost} of a protocol $\prot$ for a problem $P$ on an input
  distribution $\dist$, denoted by $\norm{\prot}$, is the worst-case  bit-length of the transcript
  communicated between Alice and Bob in the protocol $\prot$, when the inputs are chosen from $\dist$.
   \newline The \emph{communication complexity} $\CC{P}{\dist}{\delta}$ of a
  problem $P$ with respect to a distribution $\dist$ is the minimum communication cost of
  a $\delta$-error protocol $\prot$ over $\dist$.
\end{definition}

\paragraph{Information complexity.} There are several possible definitions of information
complexity of a communication problem that have been considered depending on the application (see, e.g.,~\cite{Bar-YossefJKS02,BarakBCR10,BravermanR11,ChakrabartiSWY01,Bar-YossefJKS02-S}).  
We use the notion of \emph{internal information complexity}~\cite{BarakBCR10} that measures the average amount of (Shannon) information each player learns about the input of the other player
by observing the transcript of the protocol. Formally, 
\begin{definition}
  Consider an input distribution $\dist$ and a protocol $\prot$ (for some problem
  $P$). Let $(X,Y) \sim \dist$ be the input of Alice and Bob and assume $\Prot:= \Prot(X,Y)$ denotes 
  the transcript of the protocol \emph{concatenated} with the public randomness $R$ used by $\prot$. 
  The \emph{(internal) information cost} $\ICost{\prot}{\dist}$ of a protocol $\prot$ with respect to
  $\dist$ is then $\II_\dist(\Prot : X \mid Y) + \II_{\dist}(\Prot : Y \mid X)$. 
  \newline The \emph{information complexity} $\IC{P}{\dist}{\delta}$ of $P$ with respect to a distribution $\dist$ is
  the minimum $\ICost{\prot}{\dist}$ taken over all $\delta$-error protocols
  $\prot$ for $P$ over $\dist$.
\end{definition}
Note that any public coin protocol is a distribution over private coins protocols, obtained by
first using public randomness to sample a random string $R=r$ and then running the
corresponding private coin protocol $\prot^r$. We also use $\Prot^r$ to denote the transcript of the protocol $\prot^r$. 
We have the following well-known claim.
\begin{claim}\label{clm:public-random}
	 For any distribution $\dist$ and any protocol $\prot$, let $R$ be the public randomness used in $\prot$; then, $\ICost{\prot}{\dist} = \II_{\dist}(\Prot : X \mid Y, R) + \II_{\dist}(\Prot : Y \mid X, R)$. 
\end{claim}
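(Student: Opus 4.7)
The plan is as follows. Recall that $\Prot$ is defined to be the transcript of the protocol concatenated with the public randomness $R$, so we can write $\Prot = (T, R)$ where $T$ denotes the raw messages exchanged between Alice and Bob. Since $R$ is sampled from public randomness, it is by construction independent of the inputs $(X,Y)$ drawn from $\dist$. By the symmetry between the two summands in the target identity, it suffices to prove that $\II_\dist(\Prot : X \mid Y) = \II_\dist(\Prot : X \mid Y, R)$; the analogous equality $\II_\dist(\Prot : Y \mid X) = \II_\dist(\Prot : Y \mid X, R)$ follows by swapping the roles of the players, and summing yields the claim.

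First I would handle the left-hand side by applying the chain rule for mutual information to the pair $(T, R)$:
\[
\II_\dist(\Prot : X \mid Y) \;=\; \II_\dist\bigl((T, R) : X \mid Y\bigr) \;=\; \II_\dist(R : X \mid Y) + \II_\dist(T : X \mid Y, R).
\]
Since $R$ is independent of $(X, Y)$ under $\dist$, we have $\II_\dist(R : X \mid Y) = 0$, so the left-hand side collapses to $\II_\dist(T : X \mid Y, R)$. Next I would handle the right-hand side: once we condition on $R$, the $R$-component of $\Prot = (T, R)$ is a constant and therefore contributes nothing to the relevant entropies, i.e.\ $\HH(\Prot \mid Y, R) = \HH(T \mid Y, R)$ and $\HH(\Prot \mid X, Y, R) = \HH(T \mid X, Y, R)$. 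Hence $\II_\dist(\Prot : X \mid Y, R) = \II_\dist(T : X \mid Y, R)$ as well.

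Both expressions therefore equal $\II_\dist(T : X \mid Y, R)$, yielding the first term of the desired identity. The symmetric argument with $X$ and $Y$ interchanged gives the second term, and adding the two completes the proof. There is no substantial obstacle here: the argument is an essentially routine application of the chain rule together with the independence of the public coins from the inputs, which is precisely the reason the two standard formulations of internal information cost (with and without an explicit conditioning on $R$) are equivalent.
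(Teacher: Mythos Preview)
Your proof is correct and follows essentially the same approach as the paper: both arguments use the chain rule for mutual information together with the independence of the public randomness $R$ from the inputs $(X,Y)$ to show that each summand $\II(\Prot : X \mid Y)$ equals $\II(\Prot : X \mid Y, R)$ (and symmetrically for the other term). The only cosmetic difference is that you route the argument through the intermediate quantity $\II(T : X \mid Y, R)$ after writing $\Prot = (T,R)$, whereas the paper instead writes $\II(\Prot : X \mid Y) = \II(\Prot, R : X \mid Y)$ (since $R$ is already part of $\Prot$) and applies the chain rule directly to reach $\II(\Prot : X \mid Y, R)$.
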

\begin{proof} 
\begin{align*}
	\ICost{\prot}{\dist} &= \II(\Prot : X \mid Y) + \II(\Prot : Y \mid X) \\
	&= \II(\Prot,R : X \mid Y) + \II(\Prot,R : Y \mid X) \tag{$\Prot$ denotes the transcript and the public randomness} \\
	&= \II(R : X \mid Y)  + \II(\Prot : X \mid Y , R) + \II(R : Y \mid X) + \II(\Prot : Y \mid X,R) \tag{chain rule of mutual information, \itfacts{chain-rule}} \\
	&= \II(\Prot : X \mid Y , R) + \II(\Prot : Y \mid X,R) 
\end{align*}
The last equality is because {$\II( R : X \mid Y) = \II(R : Y \mid X) = 0$ since $R \perp X,Y$ and \itfacts{info-zero}}. 
\end{proof}

The following well-known proposition relates communication complexity and internal information complexity (see, e.g.,~\cite{BravermanR11} for a proof). 
\begin{proposition}\label{prop:cc-ic}
  For any distribution $\dist$ and any protocol $\prot$: $\ICost{\prot}{\dist} \leq \norm{\prot}$. Moreover, for any parameter $0 < \delta < 1$: $ \IC{P}{\dist}{\delta} \leq \CC{P}{\dist}{\delta}$.
\end{proposition}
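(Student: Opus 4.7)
The plan is to first prove the per-protocol bound $\ICost{\prot}{\dist} \leq \norm{\prot}$, from which the second inequality on worst-case complexities follows immediately by taking the infimum over all $\delta$-error protocols $\prot$ for $P$ on each side.

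For the per-protocol bound, I would start from the equivalent formulation from Claim~\ref{clm:public-random}, namely $\ICost{\prot}{\dist} = \II_{\dist}(\Prot : X \mid Y, R) + \II_{\dist}(\Prot : Y \mid X, R)$, where $R$ denotes the public coins and $\Prot$ here denotes the (pure) transcript. Fix $R=r$; since mixtures can only make things no larger, it suffices to prove the bound for every deterministic-public-coin protocol $\prot^r$ whose transcript I denote $\Prot^r = (M_1, M_2, \ldots, M_k)$, where without loss of generality Alice sends $M_i$ for odd $i$ and Bob sends $M_i$ for even $i$. I would then apply the chain rule of mutual information to decompose
\[
\II(\Prot^r : X \mid Y) = \sum_{i=1}^{k} \II(M_i : X \mid Y, M^{<i}),
\]
and similarly for $\II(\Prot^r : Y \mid X)$.

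The key observation is that in any communication protocol with private randomness, each message $M_i$ is a function of the sender's input, the sender's private coins, and the prior transcript $M^{<i}$. Therefore, for each even $i$ (Bob's turn), $M_i$ is independent of $X$ given $(Y, M^{<i})$, so $\II(M_i : X \mid Y, M^{<i}) = 0$; for each odd $i$ (Alice's turn), I bound
\[
\II(M_i : X \mid Y, M^{<i}) \leq \HH(M_i \mid Y, M^{<i}) \leq \HH(M_i) \leq \card{M_i},
\]
using standard entropy inequalities and the fact that entropy of a random variable is at most the bit-length of its support. Summing the bounds gives $\II(\Prot^r : X \mid Y) \leq \sum_{i \text{ odd}} \card{M_i}$, which is the number of bits sent by Alice, and symmetrically $\II(\Prot^r : Y \mid X) \leq \sum_{i \text{ even}} \card{M_i}$, the bits sent by Bob. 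Adding these two inequalities yields $\ICost{\prot^r}{\dist} \leq \norm{\prot^r} \leq \norm{\prot}$, where the last inequality is because $\norm{\prot}$ is the worst-case transcript length over both inputs and coins.

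I do not anticipate a substantive obstacle here; the only mildly subtle point is correctly handling the public randomness so that the entropy bounds on individual messages really do telescope into the worst-case transcript length, which is precisely why I first condition on $R$ and appeal to Claim~\ref{clm:public-random}. The second statement $\IC{P}{\dist}{\delta} \leq \CC{P}{\dist}{\delta}$ then follows because any protocol realizing the communication complexity is in particular a $\delta$-error protocol over which the information complexity is minimized.
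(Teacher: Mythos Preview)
The paper does not actually prove this proposition; it is stated as a well-known fact with a pointer to~\cite{BravermanR11} for a proof. Your argument is correct and is precisely the standard proof: condition on the public coins via Claim~\ref{clm:public-random}, expand each of $\II(\Prot^r:X\mid Y)$ and $\II(\Prot^r:Y\mid X)$ round-by-round with the chain rule, drop the terms corresponding to the other player's rounds, and bound the remaining terms by message lengths so that the two sums together telescope to the total transcript length. The deduction of $\IC{P}{\dist}{\delta}\le\CC{P}{\dist}{\delta}$ by minimizing over $\delta$-error protocols is immediate, as you say.

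One small remark: the step ``$M_i$ is a function of $(Y,R_B,M^{<i})$, therefore $M_i\perp X\mid (Y,M^{<i})$'' implicitly uses the rectangle property of protocols to conclude that Bob's private coins $R_B$ remain independent of $X$ once you condition on $(Y,M^{<i})$. This is true (conditioning on a transcript prefix restricts $(X,R_A)$ and $(Y,R_B)$ to a product set, so no correlation between $X$ and $R_B$ is introduced), but strictly speaking it is the rectangle structure, not merely the functional dependence of $M_i$, that makes the ``therefore'' go through. It would be worth saying this explicitly.
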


\subsection{The Set Disjointness Problem}\label{sec:disjointness} 

We shall use the well-known \emph{set-disjointness} communication problem (denoted by \Disj) in proving Result~\ref{res:sc-lower}. 
Fix an integer $t \geq 1$; in $\Disj_t$, Alice and Bob are given two sets $A \subseteq [t]$ and $B \subseteq [t]$, and
their goal is to return \Yes if $A \cap B = \emptyset$ and \No otherwise. 

The following is a known hard distribution for $\Disj_t$. 

\textbox{Distribution \distDisj. {\textnormal{A hard input distribution for $\Disjt$.}}} {
\begin{itemize}
	\item Start with $A = B = [t]$. 
	\item For each element $e \in [t]$ independently: w.p. $1/3$ drop $e$ from both $A$ and $B$, w.p. $1/3$ drop $e$ from $A$, and w.p. $1/3$ drop $e$ from $B$. 
	\item Pick $Z \in_R \set{0,1}$ uniformly at random. If $Z = 1$, pick a uniformly at random element $\estar \in [t]$ and let $A$ and $B$ both contain $\estar$ (if $Z=0$, keep the sets as before). 
\end{itemize}
}

We further use $\distDisjY$ and $\distDisjN$ to denote, respectively, the
distribution of \Yes and \No instances of \Disj on \distDisj; in other words, $\distDisjY:= \paren{\distDisj \mid Z=0}$ and $\distDisjN := \paren{\distDisj \mid Z=1}$. 

The following proposition on the information complexity of $\Disj$ is well-known (see, e.g.,~\cite{Bar-YossefJKS02-S,BravermanGPW13}). 

\begin{proposition}\label{prop:disjointness}
	For any $\delta < 1/2$ and any $\delta$-error protocol $\protDisj$ of $\Disj_t$ on the distribution $\distDisj$, 
	\[\ICost{\protDisj}{\distDisjY} = \Omega(t). \]
\end{proposition}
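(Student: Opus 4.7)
The plan is to follow the standard direct-sum reduction to a single-coordinate AND problem, tailored to the three-outcome structure of $\distDisj$. I first observe that under $\distDisjY$ (i.e.\ $Z=0$) the coordinates $(A_e, B_e)$ are mutually independent across $e \in [t]$, and each is uniformly distributed on $\set{(0,0),(0,1),(1,0)}$. Introducing an auxiliary ``type'' variable $D_e \in \set{0,1,2}$ recording which of these three cases occurred, the conditional distribution of $(A_e,B_e)$ given $D_e$ is a product in which one of the two bits is deterministically $0$, a structural property that will be used to collapse cross-terms.

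First, I would expand the information cost by the chain rule of mutual information (using the independence across coordinates and the fact that $D = (D_1,\ldots,D_t)$ can be added to the conditioning without increasing the right-hand side):
\[
  \ICost{\protDisj}{\distDisjY} \;\geq\; \sum_{e=1}^{t}\!\Bigl[\II(\Prot : A_e \mid B, A^{<e}, D) + \II(\Prot : B_e \mid A, B^{<e}, D)\Bigr].
\]
Each summand measures only the information about the single ``live'' bit in coordinate $e$ conditioned on its type.

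Next, I would set up the embedding. Given a $\delta$-error protocol $\protDisj$, I construct a protocol $\tau$ for the single-coordinate AND problem on the distribution $\mu^{\textsf{Y}}$ that is uniform on $\set{(0,0),(0,1),(1,0)}$. Using public randomness, Alice and Bob (i) pick an index $j \in_R [t]$, (ii) jointly sample $(D_e)_{e \neq j}$ and thereby fix the (deterministic) $(A_e,B_e)$ for all $e \neq j$, and (iii) plant their AND inputs as $(A_j, B_j)$ with the corresponding consistent $D_j$. They then simulate $\protDisj$ on the resulting instance; by construction, the result is a YES-instance of $\Disjt$ iff the AND of the planted bits is $0$. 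A direct-sum argument in the style of~\cite{Bar-YossefJKS02-S, BarakBCR10, BravermanR11} then yields
\[
  \ICost{\protDisj}{\distDisjY} \;\geq\; t \cdot \ICost{\tau}{\mu^{\textsf{Y}}}.
\]

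Finally, I would invoke the single-coordinate information-complexity lower bound for AND: any $\delta$-error protocol for AND on $\mu^{\textsf{Y}}$ has $\ICost{\tau}{\mu^{\textsf{Y}}} = \Omega(1)$. Combined with the direct-sum inequality, this gives the claimed $\Omega(t)$ bound. The main obstacle is precisely this single-bit AND lower bound: the information is measured only over YES inputs, yet correctness on the input $(1,1)$ (which lies in the support of $\distDisjN$) must be exploited to derive a contradiction from low information. The standard resolution is the Hellinger-distance / ``cut-and-paste'' argument of Bar-Yossef \etal~\cite{Bar-YossefJKS02-S}: if the transcript distributions on $(0,1)$ and $(1,0)$ were too close to that on $(0,0)$, the rectangle property of deterministic protocols would force the transcript on $(1,1)$ to be close to the transcript on $(0,0)$ as well, violating the $\delta$-error guarantee on the NO side. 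This step is the only place where the $\delta$-error correctness requirement (rather than purely distributional properties of $\distDisjY$) enters the proof.
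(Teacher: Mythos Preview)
The paper does not prove this proposition; it is stated as well known and attributed to~\cite{Bar-YossefJKS02-S,BravermanGPW13}. Your outline follows exactly the route of those references (direct-sum reduction to a single-coordinate \textsf{AND}, then the Hellinger/cut-and-paste lower bound), so at the level of strategy you are on target.

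There is, however, a concrete error in your auxiliary variable. You take $D_e\in\set{0,1,2}$ to record \emph{which} of the three outcomes $(0,0),(0,1),(1,0)$ occurred; with that choice $D_e$ determines $(A_e,B_e)$ completely, so every term $\II(\Prot:A_e\mid B,A^{<e},D)$ in your displayed inequality vanishes and the bound degenerates to $\ICost{\protDisj}{\distDisjY}\geq 0$. (Your justification that ``adding $D$ does not increase the right-hand side'' is correct in direction, since $D$ is a deterministic function of $(A,B)$; the problem is that the right-hand side is zero.) The correct auxiliary is \emph{two}-valued: $D_e\in\set{0,1}$ indicates which player's bit is forced to $0$, e.g.\ $D_e=0\Rightarrow A_e=0,\ B_e\sim\textnormal{Bern}(2/3)$ and symmetrically for $D_e=1$. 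This reproduces the uniform marginal on $\set{(0,0),(0,1),(1,0)}$ while leaving one genuinely random bit per coordinate, and crucially makes $A_e\perp B_e\mid D_e$ so that the private/public sampling split in the embedding is possible.

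Relatedly, your step~(ii) of the embedding (``sample $D_{-j}$ and thereby fix the deterministic $(A_e,B_e)$'') is only consistent with the flawed three-valued $D$. With the correct two-valued $D$, sampling $D_e$ fixes only one of the two bits; the remaining bit must be sampled \emph{privately} by the appropriate player (and one uses the $A^{<j},B^{>j}$ public / $A^{>j},B^{<j}$ private split as in~\cite{BarakBCR10,BravermanR11}, exactly as the paper does in Lemma~\ref{lem:direct-sum}). Once this is repaired, the direct-sum inequality and the final Hellinger argument for \textsf{AND} go through as you describe.
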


\section{The Space-Approximation Tradeoff for Set Cover} \label{SEC:SET-COVER}

We prove our main result on the space-approximation tradeoff for the streaming set cover problem in this section. Formally,

\begin{theorem}\label{thm:sc-lower}
  For any $\alpha = o(\log{n}/\log\log{n})$, $m = \poly(n)$, and $p \geq 1$, any \emph{randomized} algorithm that can make $p$ passes over any collection of $m$ subsets of $[n]$ presented in a 
  {random order stream} and outputs an $\alpha$-approximation to the optimal value of the set cover problem w.p. larger than $3/4$ (over the randomness of both the
  stream order and the algorithm) must use $\Omgt(mn^{\frac{1}{\alpha}}/p)$ space. 
\end{theorem}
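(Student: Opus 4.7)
The plan is to prove Theorem~\ref{thm:sc-lower} via a two-party communication complexity lower bound and then lift it to multi-pass streaming with random arrival. For the communication lower bound, I would design a hard input distribution $\distSC$ in which Alice receives sets $S_1,\ldots,S_m$ and Bob receives sets $T_1,\ldots,T_m$. Starting from a family $Z_1,\ldots,Z_m$ of $(n-n^{1-1/\alpha})$-subsets of $[n]$ satisfying an $\alpha$-covering property (no $\alpha$ of them cover $[n]$), I would pick a uniformly random index $\istar \in [m]$, overwrite $Z_{\istar}=[n]$, and then split each $Z_i$ into a pair $(S_i,T_i)$ by distributing its elements at random between the two players. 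The pair $(S_{\istar},T_{\istar})$ yields a set cover of size two; for any collection of $\alpha$ pairs avoiding coordinate $\istar$, the $\alpha$-covering property together with Lemma~\ref{lem:coverage-lemma} (applied to the elements left uncovered whenever some $S_i$ is included without its partner $T_i$, and symmetrically) ensures that the universe cannot be covered with high probability. Hence any $\alpha$-approximation protocol must effectively identify $\istar$.

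Next, I would embed an instance of $\Disjt$ with $t=\Theta(n^{1/\alpha})$ into each coordinate $(S_i,T_i)$: the presence or absence of a block of $t$ gadget elements encodes a pair $(A_i,B_i)\sim \distDisj$, rigged so that coordinate $\istar$ is a \Yes-instance (producing $Z_{\istar}=[n]$) while every other coordinate is a \No-instance (producing the $\alpha$-covering family). Thus the marginal of the $i$-th coordinate matches the hard disjointness distribution, and identifying the unique \Yes coordinate is necessary for any $\alpha$-approximation. A direct-sum argument based on internal information complexity (in the style of~\cite{BarakBCR10,BravermanR11,Braverman12}) would then convert any protocol of information cost $I$ on $\distSC$ into a protocol for $\Disjt$ of internal information cost $O(I/m)$. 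Combined with the $\Omega(t)=\Omega(n^{1/\alpha})$ bound of Proposition~\ref{prop:disjointness}, this should yield $I=\Omgt(mn^{1/\alpha})$.

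The principal obstacle is that Proposition~\ref{prop:disjointness} only lower bounds information cost on the \Yes-side distribution $\distDisjY$, whereas the direct-sum argument naturally presents the average coordinate with a \No-distributed input (since all but one coordinate are \No-instances). To bridge this, I would invoke the information odometer of~\cite{BravermanW15}, extended as in~\cite{GoosJP015}: this gadget allows one to truncate any protocol once the accumulated revealed information exceeds a chosen threshold, with only a small effect on correctness, and hence lets me relate the internal information cost on $\distDisjN$ to that on $\distDisjY$ up to logarithmic factors. Executing this carefully in the two-way setting (as opposed to the one-way setting of~\cite{AssadiKL16}, which used external rather than internal information complexity and so avoided this issue) is the technical heart of the proof.

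Finally, a standard simulation shows that any $p$-pass $s$-space streaming algorithm yields a two-party protocol of $O(ps)$ communication, so $s=\Omgt(mn^{1/\alpha}/p)$ follows for adversarial-order streams. To handle random arrival, I would adapt the argument of~\cite{AssadiKL16} (itself based on~\cite{ChakrabartiCM08}): show that a uniformly random partition of the $2m$ sets of $\distSC$ across the two players induces a distribution whose total variation from the original planted distribution is $o(1)$, so that the communication lower bound transfers to the random-arrival streaming model. The subtlety in the two-way model, relative to the one-way case, is preserving both the hidden index $\istar$ and the correlated structure of the pair $(S_{\istar},T_{\istar})$ under the random partition; I expect to handle this through a symmetrization over coordinates coupled with a careful counting argument tracking where each half of $Z_{\istar}$ lands among the players.
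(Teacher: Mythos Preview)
Your plan tracks the paper's proof almost exactly: the hard distribution $\distSC$, the embedding of $\Disjt$ with $t=\widetilde{\Theta}(n^{1/\alpha})$ into each coordinate, the internal-information direct-sum argument (Lemma~\ref{lem:direct-sum}), and the use of the information odometer via~\cite{GoosJP015} to transfer the $\Omega(t)$ bound from $\distDisjY$ to $\distDisjN$ (Lemma~\ref{lem:disjointness}) are all the same ingredients the paper uses, in the same order.

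The one genuine gap is in your random-arrival step. Your claim that ``a uniformly random partition of the $2m$ sets of $\distSC$ across the two players induces a distribution whose total variation from the original planted distribution is $o(1)$'' is false as stated: under the adversarial partition Alice receives \emph{exactly} $\{S_i\}_{i\in[m]}$ and Bob \emph{exactly} $\{T_i\}_{i\in[m]}$, whereas under a random partition each player receives a mixture of $S$'s and $T$'s, so the two input distributions are in fact nearly maximally far in total variation. The paper (Lemma~\ref{lem:dist-random}) does \emph{not} use a TVD argument. Instead it calls an index $i$ \emph{good} if $S_i$ and $T_i$ land on different players, observes that about $m/2$ indices are good and that $\istar$ is good with probability $\approx 1/2$, and then \emph{conditions} on a fixed good set $G^{\star}$ together with the event $\istar\in G^{\star}$. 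On the good coordinates the conditional distribution is exactly $\distSC$ with $\Theta(m)$ pairs, so one can embed an adversarial $\distSC$ instance into those coordinates (sampling the bad coordinates from public randomness) and invoke Theorem~\ref{thm:sc-cc-lower}. This loses only a constant factor in the error probability (which is why the theorem requires success probability strictly above $3/4$ rather than $1/2$). Your closing remark about ``tracking where each half of $Z_{\istar}$ lands among the players'' is pointing in the right direction, but you should replace the TVD claim with this conditioning-and-embedding argument.
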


Theorem~\ref{thm:sc-lower} formalizes Result~\ref{res:sc-lower} in the introduction. We further prove that the tradeoff achieved in Theorem~\ref{thm:sc-lower} is in fact tight up to logarithmic factors; this is achieved by performing some proper modifications to the algorithm of~\cite{HarPeledIMV16}. 
Formally, 

\begin{theorem}\label{thm:sc-upper}
	There exists a streaming algorithm that for any integer $\alpha \geq 1$, and any parameter $\eps > 0$, with high probability, computes an $(\alpha + \eps)$-approximation to the streaming set cover problem
	using $(2\alpha+1)$ passes over the stream in adversarial order and $\Ot(mn^{1/\alpha}/\eps^{2} + n/\eps)$ space.  
\end{theorem}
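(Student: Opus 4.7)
The plan is to take the algorithm of Har-Peled~\etal~\cite{HarPeledIMV16}, which already achieves an $\alpha$-approximation in $O(\alpha)$ passes using $\Ot(mn^{\Theta(1/\alpha)})$ space (for some constant strictly larger than $2$ in the $\Theta$-notation), and to tune its internal subroutine so as to pin down that exponent to exactly $1/\alpha$, paying only an extra $\eps^{-2}$ factor and a factor-$2$ increase in the pass count. The two modifications are: (i) replace the offline set-cover subroutine used in~\cite{HarPeledIMV16} with a streaming $(1-\eps')$-approximate maximum-coverage subroutine of Bateni~\etal~\cite{BateniEM16} or McGregor and Vu~\cite{McGregorVu16}; and (ii) sharpen the element-sampling rate to exactly $\tau = \Theta(n^{1/\alpha}\log n /{\eps'}^{2})$ per phase, with $\eps' = \Theta(\eps/\alpha)$.

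First, I would apply the standard doubling trick to guess $k \in \set{1,2,4,\ldots,n}$ for $\OPT$, running $O(\log n)$ copies in parallel and keeping the smallest successful $k$; this only contributes a $\log n$ factor to be absorbed in $\Ot$. Fixing such a guess $k$, the algorithm proceeds in $\alpha$ phases, each consuming two passes. At the start of phase $i$ the uncovered universe $U_i \subseteq [n]$ is stored explicitly in $O(n)$ bits. In the first pass of the phase I would sparsify the stream: a uniform sample $S_i \subseteq U_i$ of size $\tau$ is drawn (no stream access is required since $U_i$ is known), and for every set $T$ arriving in the stream only the pair $(T, T \cap S_i)$ is forwarded to the second pass. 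In the second pass I would invoke the $(1-\eps')$-approximate streaming max-coverage subroutine on the induced instance with universe $S_i$, producing $k$ sets $\mathcal{C}_i$ that approximately maximize the coverage of $S_i$; these are added to the solution and $U_{i+1}=U_i \setminus \bigcup_{T\in\mathcal{C}_i}T$ is recorded. After $\alpha$ phases, the $(2\alpha+1)$-st pass greedily covers the residual $U_{\alpha+1}$ using $O(n/\eps)$ space to store the remaining uncovered elements.

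Correctness rests on two coupled arguments. A Chernoff bound (Proposition~\ref{prop:chernoff}), together with the flavor of concentration in Lemma~\ref{lem:coverage-lemma}, shows that with $\tau = \Theta(n^{1/\alpha}\log n /{\eps'}^{2})$ samples, for every set $T$ the empirical density $|T\cap S_i|/\tau$ estimates the true density $|T\cap U_i|/|U_i|$ within a $(1\pm\eps')$ factor with failure probability $\ll 1/m$, so a union bound over all $m$ sets and all $\alpha$ phases holds. Consequently, the subroutine's $(1-\eps')$-approximate max coverage of $S_i$ translates (via the sparsifier analysis of~\cite{HarPeledIMV16}) into the shrinkage invariant on $|U_i|$ that drives their recursion; carrying their analysis through with the sharper $\tau$ and $\eps'$ yields a residual $|U_{\alpha+1}|$ small enough that the final pass adds at most $\eps \cdot k$ additional sets, for a total of $(\alpha + \eps)k$ sets. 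The space bound follows by summing: the max-coverage subroutine on a universe of size $\tau$ costs $\Ot(m \tau /{\eps'}^{2}) = \Ot(mn^{1/\alpha}/\eps^{2})$, the bookkeeping for $U_i$ is $O(n)$, and the residual pass is $O(n/\eps)$.

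The hard part, and the place where the proof departs from the $\Ot(mn^{\Theta(1/\alpha)})$ analysis of~\cite{HarPeledIMV16}, is the careful amortization of error: one must show that the $\eps'$ slack in each phase compounds to at most $\eps$ overall (forcing $\eps' = \Theta(\eps/\alpha)$), that the failure probability of concentration is controlled across both $m$ sets and $\alpha$ phases simultaneously, and that the shrinkage invariant on $|U_i|$ continues to hold under a $(1-\eps')$-multiplicative max-coverage guarantee rather than an exact one. A secondary obstacle is the corner case of very small $k$, where the additive $\eps k$ from the final pass could endanger the $(\alpha+\eps)$-approximation ratio; this is handled either by a dedicated branch that runs a direct exact algorithm when $k = O(1/\eps)$, or by folding the $O(n/\eps)$ residual budget into the approximation guarantee as a lower-order term.
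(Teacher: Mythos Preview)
Your proposal has a genuine gap in the shrinkage analysis. With a $(1-\eps')$-approximate max-coverage subroutine and $\eps' = \Theta(\eps/\alpha)$, the $k$ sets you pick in phase $i$ cover only a $(1-\Theta(\eps'))$ fraction of $U_i$, so the invariant you can actually maintain is $|U_{i+1}| \leq \Theta(\eps')\cdot|U_i|$. After $\alpha$ phases this leaves $|U_{\alpha+1}| = \Theta\big((\eps/\alpha)^{\alpha}\big)\cdot n$, which for constant $\alpha,\eps$ is still polynomial in $n$; a single greedy pass on this residual needs $\Theta(k\log n)$ sets, not $\eps k$. To force the per-phase shrinkage to be $n^{-1/\alpha}$ via a max-coverage black box you would need $\eps' \approx n^{-1/\alpha}$, at which point the subroutines of~\cite{McGregorVu16,BateniEM16} cost $\Omega(m/\eps'^{2}) = \Omega(mn^{2/\alpha})$ space. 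The paper in fact singles out exactly this obstruction (see the discussion preceding Algorithm~\ref{alg1}) and explains that Result~\ref{res:kc-lower} makes the $\Omega(m/\eps'^{2})$ dependence unavoidable for a generic $(1-\eps')$-max-coverage call.

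The paper avoids this by a different mechanism. First, a single pruning pass greedily adds any set covering $\geq n/(\eps\,\topt)$ new elements; this costs at most $\eps\,\topt$ sets and guarantees that afterwards every $S_i$ has at most $n/(\eps\,\topt)$ elements in $U$. Then in each of the $\alpha$ phases it samples $\Usmpl$ with probability $p = \Theta(\topt\log m / n^{1-1/\alpha})$, uses one pass to store \emph{all} of $S'_i = S_i\cap\Usmpl$ (the pruning makes $\sum_i |S'_i| = \Ot(mn^{1/\alpha}/\eps)$), and then computes an \emph{exact} optimal set cover of $(S'_1,\ldots,S'_m)$ offline---permissible because the streaming model constrains space, not time. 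Since $\opt \leq \topt$, this cover has size $\leq \topt$ and covers $\Usmpl$ entirely, so the element-sampling lemma (Lemma~\ref{lem:element-sampling}) applies with $\rho = n^{-1/\alpha}$ and yields the genuine $n^{1/\alpha}$-factor shrinkage each phase, leaving nothing uncovered after $\alpha$ rounds. The two ingredients you are missing are thus the one-shot pruning pass (which is what makes storing all $S'_i$ affordable without a $\topt$ factor) and the use of exact rather than approximate cover on the sampled instance.
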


We emphasize that the main contribution of the paper is in proving Theorem~\ref{thm:sc-lower}; we mainly present Theorem~\ref{thm:sc-upper} to prove a matching upper bound on the bounds in Theorem~\ref{thm:sc-lower}, 
hence establishing a tight space-approximation tradeoff for the streaming set cover problem. 

The rest of this section is mainly devoted to the proof of Theorem~\ref{thm:sc-lower}. 
We start by introducing some notation. In Section~\ref{sec:hard-dist}, we introduce a hard input distribution
for the set cover problem in adversarial streams. We prove a lower bound for this distribution in Section~\ref{sec:sc-lower}. We extend this lower bound to random arrival streams in 
Section~\ref{sec:sc-random-arrival} and finish the proof of Theorem~\ref{thm:sc-lower}. Section~\ref{sec:sc-upper} contains the proof of Theorem~\ref{thm:sc-upper}. 

\paragraph{Notation.} To prove Theorem~\ref{thm:sc-lower}, we prove a lower bound on the communication complexity of the set cover problem: 
Fix a (sufficiently large) value for $n$, $m = \poly(n)$, and $\alpha = o(\log{n}/\log\log{n})$; in this section, \SetCover refers to the problem of $\alpha$-approximating the optimal value of 
the set cover problem with $2m$ sets\footnote{\label{footnote:2m}To simplify the exposition, we use $2m$ instead of $m$ as the number of sets.} defined over the universe $[n]$ in the two-player communication model, whereby
the sets are partitioned between Alice and Bob.

\subsection{A Hard Input Distribution for \SetCover}\label{sec:hard-dist}

Let $t$ be an integer to be determined later; we use the distribution $\distDisj$ for $\Disj_t$ (introduced in Section~\ref{sec:disjointness}) to design 
a hard input distribution for $\SetCover$. Before that, we need a simple definition. 

\begin{definition}[Mapping-extension]\label{def:p-extension}
	For the two sets $[t]$ and $[n]$, we define a \emph{mapping-extension} of $[t]$ to $[n]$ as a
	function $f: [t] \mapsto 2^{[n]}$, whereby for each $i \in [t]$, $f(i) \subseteq [n]$ is mapped to $n/t$ \emph{unique} elements in $[n]$. 
	Similarly, for any set $A \subseteq [t]$, we abuse the notation and define $f(A):= \bigcup_{i \in A}f(i)$. 
\end{definition}

We are now ready to define our hard input distribution for \SetCover.

\textbox{Distribution \distSC. {\textnormal{A hard input distribution for $\SetCover$.}}} {

\medskip

\textbf{Notation.} Let $t:= 2^{-15} \cdot \paren{\frac{n}{\log{m}}}^{\frac{1}{\alpha}}$ and $\FC$ be the set of all mapping-extensions of $[t]$ to $[n]$.
\begin{itemize}
	\item For each $i \in [m]$: 
	\begin{itemize}
		\item Let $(A_i,B_i) \sim \distDisjN$ for $\Disjt$ and pick $f_i \in_{R} \FC$ uniformly at random.
		\item Let $S_i = [n] \setminus f_i(A_i)$ and $T_i := [n] \setminus f_i(B_i)$. 
	\end{itemize}
	\item Pick $\theta \in_R \set{0,1}$ uniformly at random. If $\theta = 0$, do nothing, otherwise: 
	\begin{itemize}
		\item Sample $\istar \in_R [m]$ uniformly at random.
		\item Resample $(A_{\istar},B_{\istar}) \sim \distDisjY$ for $\Disjt$ and redefine $S_{\istar}$ and $T_{\istar}$ as before using the new pair $(A_{\istar},B_{\istar})$. 
	\end{itemize}
	\item Let the input to Alice and Bob be $\SS:=\set{S_i}_{i\in[m]}$ and $\ST:=\set{T_i}_{i\in[m]}$, respectively. 
\end{itemize}
}

In the following, we use $Z$ to denote any set in $\SS \cup \ST$, i.e., when it is not relevant whether it belongs to $\SS$ or $\ST$. For a collection of 
sets $\SZ = \set{Z_1,\ldots,Z_\ell}$, we use $C(\SZ)$ to denote the set of elements that $\SZ$ covers, i.e., $C(\SZ):= \bigcup_{i=1}^{\ell} Z_i$.  We say that $\SZ$ is a 
\emph{singleton-collection}, if for any $i \in [m]$, at least one of $S_i$ or $T_i$ is \emph{not} present in $\SZ$. In contrast, we say
that $\SZ$ is a \emph{pair-collection}, if for all $i \in [m]$, $S_i \in \SZ$ iff $T_i \in \SZ$ as well.

\begin{remark}\label{rem:dist-sc}
A few remarks are in order: 
\begin{enumerate}[(i)]
	\item \label{part:set-size} W.h.p., for any $i \in [m]$, $\card{S_i} = 2n/3 \pm o(n)$ and $\card{T_i} = 2n/3 \pm o(n)$.\newline
	(Proof. follows from the definition of the distribution $\distDisj$ and Chernoff bound). 

	\item \label{part:random-sets} For any $i \in [m]$, conditioned on $\card{S_i} = \ell$, the set $S_i$ is chosen uniformly at random from all $\ell$-subsets of $[n]$; similarly for $T_i$
	
	\item \label{part:missing-elements} For any $i \in [m]$, $S_i \cup T_i = [n] \setminus f_i(A_i \cap B_i)$. Moreover, whenever $(A_i,B_i) \sim \distDisjN$,
	the set $f_i(A_i \cap B_i)$ is a $(n/t)$-subset of $[n]$ chosen uniformly at random. \newline
	(Proof. the first part follows from the fact that $f_i$ maps each $j \in [t]$ to unique elements; the second part is by the random choice of $f_i \in_R \FC$ and the fact that $\card{A_i \cap B_i}=1$ in this case). 
	
	\item \label{part:independent-sets} Whenever $\theta = 0$, for any $i \neq j$, the sets $Z_i \in \set{S_i,T_i}$ and $Z_j \in \set{S_j,T_j}$ are chosen independent of each other ($Z_i \perp Z_j$).  
	
\end{enumerate}
\end{remark}

Let $\opt(\SS,\ST)$ denote the size of an optimal set cover in the instance $(\SS,\ST)$. It follows from Remark~\ref{rem:dist-sc}-(\ref{part:missing-elements}) that 
whenever $\theta=1$ in the distribution $\distSC$, $\opt(\SS,\ST) = 2$; simply take $S_{\istar}$ and $T_{\istar}$ and since $A_{\istar} \cap B_{\istar} = \emptyset$, they cover the whole universe.  
In the following, we prove that when $\theta=0$, $\opt(\SS,\ST)$ is relatively large. This implies that any $\alpha$-approximation protocol for \SetCover has to essentially determine the value of $\theta$. In
the next section, we prove that this task requires a large communication by the players. 

\begin{lemma}\label{lem:dist-sc-opt}
	For $(\SS,\ST) \sim \distSC$: 
	\[\Pr{\opt(\SS,\ST) > 2\alpha \mid \theta = 0} = 1-o(1).\] 	
\end{lemma}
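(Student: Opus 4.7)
The plan is to fix an arbitrary collection $\SZ \subseteq \SS \cup \ST$ of size at most $2\alpha$, show that with overwhelming probability (under $\distSC \mid \theta = 0$) this particular $\SZ$ fails to cover $[n]$, and then union bound over all $O((2m)^{2\alpha})$ such collections. Since $m = \poly(n)$ and $\alpha = o(\log n/\log\log n)$, the union bound factor $(2m)^{2\alpha}$ is only slightly super-polynomial, so the per-collection tail I need is quite modest.

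Structurally, I would decompose each fixed $\SZ$ by classifying the indices appearing in it: say $p$ ``pair'' indices (with both $S_i$ and $T_i$ in $\SZ$) and $q$ ``singleton'' indices (with exactly one of $S_j, T_j$ in $\SZ$), so that $2p + q \leq 2\alpha$. By Remark~\ref{rem:dist-sc}-(\ref{part:missing-elements}), a pair index leaves a hole of exactly $n/t$ uniformly random elements (namely $f_i(A_i \cap B_i)$, whose size is $n/t$ since $\theta = 0$ forces $\card{A_i \cap B_i} = 1$), while each singleton index leaves a hole of size $\sim n/3$ (the random set $f_j(A_j)$ or $f_j(B_j)$, whose size concentrates by Chernoff on $\distDisj$). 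Crucially, Remark~\ref{rem:dist-sc}-(\ref{part:independent-sets}) ensures these hole-sets are mutually \emph{independent} across distinct indices when $\theta = 0$.

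The main technical step is a two-stage application of Lemma~\ref{lem:coverage-lemma}. First, I would apply it to the $q$ singleton sets alone (each roughly a random $(2n/3)$-subset) with $U = [n]$, obtaining a high-probability lower bound $\card{U_0} \geq n/(2 \cdot 6^q)$ on the region $U_0 \subseteq [n]$ left uncovered by singletons. Second, since the pair indices are disjoint from the singleton ones, the pair sets $\{S_i \cup T_i\}$ are independent of $U_0$, so Lemma~\ref{lem:coverage-lemma} applies again with $U = U_0$, $k = p$, and $s = n/t$, leaving at least $\card{U_0} \cdot (1/(2t))^{p}/2$ elements uncovered overall. A quick optimization over $2p + q \leq 2\alpha$ shows the adversary's worst choice (for us) is all pairs ($p = \alpha, q = 0$), since $t \gg 9$ for the calibrated $t = 2^{-15}(n/\log m)^{1/\alpha}$; substituting, the surviving uncovered count is of order $2^{\Theta(\alpha)} \log m$, and the per-collection failure probability falls like $\exp(-2^{\Theta(\alpha)} \log m)$, easily beating the $(2m)^{2\alpha}$ union bound.

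The main obstacle is the careful independence bookkeeping in the two-stage argument, specifically ensuring that the pair sets remain uniformly distributed conditional on the singleton outcome $U_0$---which Remark~\ref{rem:dist-sc}-(\ref{part:independent-sets}) delivers since pair and singleton indices are disjoint. A secondary subtlety is that singleton sets have random rather than deterministic sizes, so before invoking Lemma~\ref{lem:coverage-lemma} one first conditions on the high-probability event that $\card{A_j}, \card{B_j}$ concentrate near $t/3$ for every $j$ appearing in $\SZ$, losing only an additional $\exp(-\Omega(t))$ in probability that is dwarfed by everything else.
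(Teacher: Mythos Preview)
Your proposal is correct and is essentially the same argument as the paper's: both partition a fixed collection of $2\alpha$ sets into its pair-part and singleton-part, apply Lemma~\ref{lem:coverage-lemma} first to the singletons on $U=[n]$ and then to the pair-unions $S_i\cup T_i$ on the residual uncovered set, and finish with a union bound over all $m^{O(\alpha)}$ choices. The only cosmetic differences are that the paper pads every singleton to size exactly $3n/4$ (after conditioning on $\event_1$) rather than $2n/3$, and that it does not explicitly optimize over $(p,q)$ but simply takes $k=2\alpha$ singletons and $\leq\alpha$ pairs as a worst case.
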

\begin{proof}
	
	Let $\SC$ be any collection of $2\alpha$ sets from $(\SS,\ST)$. We bound the probability that $\SC$ covers the universe $[n]$ entirely, i.e., is a feasible set cover, and then use a union bound
	on all possible choices for $\SC$ to finalize the proof. In the following, we condition on the event $\event_1$ that states that $\card{S_i} \leq 3n/4$ and $\card{T_i} \leq 3n/4$ for all $i \in [m]$
	(which happens with probability $1-o(1)$ by Remark~\ref{rem:dist-sc}-(\ref{part:set-size})). 
	
	Partition the collection $\SC$ into a pair-collection $\SC_P$, and a singleton-collection $\SC_{S}$ (this partitioning is always possible and unique by definition). 
	We first lower bound the number of elements that are not covered by the singleton-collection: 
	
	\begin{claim}\label{clm:singletons}
		$\Pr{\card{\bar{C(\SC_S)}} \leq \frac{n}{2^{6\alpha+1}} \mid \event_1} \leq 1-\frac{1}{m^{\omega(\alpha)}}$. 
	\end{claim}
	\begin{proof}
		Let $\SC_{S} := \set{Z_1,\ldots,Z_k}$; clearly $k = \card{\SC_{S}} \leq \card{\SC} = 2\alpha$. Without loss of generality, we assume that $k = 2\alpha$. 
		By conditioning on the event $\event_1$ and Remark~\ref{rem:dist-sc}-(\ref{part:random-sets}), we know that each $Z_i$ is an $\ell_i$-subset of $[n]$, for some $\ell_i \leq 3n/4$, chosen uniformly at random
		from all $\ell_i$-subsets of $[n]$. Again without loss of generality, we simply increase the size of each $Z_i$ so that they all have size exactly $3n/4$. 
		Moreover, since no two sets $S_i$ and $T_i$ are both simultaneously present in $\SC_{S}$, by Remark~\ref{rem:dist-sc}-(\ref{part:independent-sets}), all
		sets in $\SC_{S}$ are chosen independent of each other. 
		
		Consequently, by Lemma~\ref{lem:coverage-lemma}, for $U = [n]$, $s = n/4$, and collection $\SC_{S}$, we have, 
		\begin{align*}
				\Pr{\card{\bar{C(\SC_S)}} < \frac{{n}}{2} \cdot \paren{\frac{1}{8}}^{2\alpha}  \mid \event_1} < 2 \cdot \exp\paren{ - \frac{{n}}{8} \cdot \paren{\frac{1}{8}}^{2\alpha}} 
		\end{align*}
		A simplification of the above equation, plus using the fact that $\alpha = o(\log{n}/\log\log{n})$, and hence $n/2^{\Theta(\alpha)} = \omega(\alpha\log{m})$, proves the final result. 
	\end{proof}
	
	Let $\event_2$ be the event that $\card{\bar{C(\SC_S)}} \geq \frac{n}{2^{6\alpha+1}}$; in the following, we condition on this event. Now consider the sets in the pair-collection $\SC_{P}$. 
	For any pair $(S_i,T_i) \in \SC_P$, we define $C_i := S_i \cup T_i$. Note that there are at most $\alpha$ different possible sets $C_i$. 
	By Remark~\ref{rem:dist-sc}-(\ref{part:missing-elements}), the sets $C_i$'s are random sets of size $(n-n/t)$, and by Remark~\ref{rem:dist-sc}-(\ref{part:independent-sets}), they are chosen independent of 
	each other. By Lemma~\ref{lem:coverage-lemma}, for $U = \bar{C(\SC_S)}$, $s = n/t$, and collection of sets $C_i$'s, we have,
	\begin{align*}
		&\Pr{{U \setminus \paren{C(\SC_P)}} = \emptyset \mid \event_1,\event_2} \leq  2 \cdot \exp \paren{- \frac{n}{2^{6\alpha+4}} \cdot \paren{\frac{1}{2t}}^{\alpha}} \leq \frac{1}{m^{3\alpha}}
	\end{align*}
	
	We can now conclude, 
	\begin{align*}
		\Pr{\opt(\SS,\ST) \leq 2\alpha} &\leq \Pr{\bar{\event_1}} + \Pr{\text{$\exists~\SC$ that covers $[n]$} \mid \event_1} \\
		&\leq \Pr{\bar{\event_1}} + \sum_{\SC} \paren{\Pr{\bar{\event_2} \mid \event_1} + \Pr{C(\SC) = [n] \mid \event_1,\event_2}} \\
		&\leq o(1) + {{m} \choose {2\alpha}} \cdot \paren{\frac{1}{m^{\omega(\alpha)}} + \frac{1}{m^{3\alpha}}} = o(1) 
	\end{align*}
	proving the lemma. 
\end{proof}

\newcommand{\ProtDisj}{\Prot_{\Disj}}
\newcommand{\ProtSC}{\Prot_{\textnormal{\textsf{SC}}}}

\subsection{The Lower Bound for the Distribution $\distSC$}\label{sec:sc-lower}

Throughout this section, fix $\protSC$ as a $\delta$-error protocol for \SetCover on the distribution $\distSC$. We first show that protocol 
$\protSC$ is essentially solving $m$ copies of the $\Disj_t$ problem on the distribution $\distDisj$ (for the parameter $t$ in the distribution $\distSC$) and then use a direct-sum style argument (similar in  
spirit to the ones in~\cite{BarakBCR10,Braverman12,BravermanR11}) to argue that the information cost of $\protSC$ shall  be $m$ times larger than the information
complexity of solving $\Disj_t$. However, to make the direct-sum argument work, we can only consider $\protSC$ on the distribution $\distSC \mid \theta = 0$, i.e., when \emph{all} underlying $\Disj_t$ instances
are sampled from $\distDisjN$. Consequently, we can only lower bound the information cost of $\protSC$ based on the information complexity of $\Disj_t$ on the distribution $\distDisjN$.

\begin{lemma}\label{lem:direct-sum}
	There exists a $\paren{\delta+o(1)}$-protocol $\protDisj$ for $\Disjt$ on the distribution $\distDisj$ such that:
	\begin{enumerate}
	\item \label{part:ds-ic} $\ICost{\protDisj}{\distDisjN} = \frac{O(1)}{m} \cdot \ICost{\protSC}{\distSC}$.
	\item \label{part:ds-cc} $\norm{\protDisj} = \norm{\protSC}$. 
	\end{enumerate}
\end{lemma}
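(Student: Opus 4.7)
The plan is to follow the standard embedding-based direct-sum reduction from $\Disjt$ to $\SetCover$. I will define $\protDisj$ as follows: using public randomness, Alice and Bob sample $\istar \in_R [m]$, mappings $f_1,\ldots,f_m\in_R \FC$, and pairs $(A_j,B_j)\sim\distDisjN$ for every $j\neq \istar$. Given their shared $\Disjt$ input $(A,B)$, they plant $(A_{\istar},B_{\istar}):=(A,B)$, locally form the \SetCover instance with $S_j=[n]\setminus f_j(A_j)$ on Alice's side and $T_j=[n]\setminus f_j(B_j)$ on Bob's side, and run $\protSC$ on $(\SS,\ST)$. Finally, $\protDisj$ outputs ``$A\cap B=\emptyset$'' iff the value returned by $\protSC$ is at most $2\alpha$.

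For correctness, if $(A,B)\sim\distDisjY$ the planted input is distributed as $\distSC\mid\theta=1$ with uniformly random $\istar$, while if $(A,B)\sim\distDisjN$ it is exactly $\distSC\mid\theta=0$; since $Z$ in $\distDisj$ and $\theta$ in $\distSC$ are both uniform bits, the marginal of $(\SS,\ST)$ under the reduction matches $\distSC$ precisely. In the disjoint case $\opt(\SS,\ST)\leq 2$, while by Lemma~\ref{lem:dist-sc-opt} $\opt(\SS,\ST)>2\alpha$ with probability $1-o(1)$ in the intersecting case, so an $\alpha$-approximation from $\protSC$ distinguishes the two. Combining with the $\delta$-error of $\protSC$ yields a $(\delta+o(1))$-error protocol, and $\norm{\protDisj}=\norm{\protSC}$ (Part~\ref{part:ds-cc}) is immediate since no additional messages are exchanged beyond those of $\protSC$.

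For Part~\ref{part:ds-ic}, I would proceed in two steps. First, passing from $\distSC$ to $\distSC\mid\theta=0$ loses only a constant factor: since $\theta$ is a uniform bit with $\HH(\theta)\leq 1$, standard manipulations of the chain rule for mutual information yield $\ICost{\protSC}{\distSC\mid\theta=0}\leq 2\cdot\ICost{\protSC}{\distSC}+O(1)$. Second, under $\distSC\mid\theta=0$ the triples $(A_i,B_i,f_i)$ are mutually independent across $i\in[m]$, so the input is a product distribution across $m$ blocks. Since $\protDisj$ embeds its input into a uniformly chosen block and supplies everything else via public randomness, expanding the definition of internal information cost gives
\[ \ICost{\protDisj}{\distDisjN}=\frac{1}{m}\sum_{i=1}^m\left[\II\paren{\ProtSC:A_i\mid B_i,A^{-i},B^{-i},f,R_{SC}}+\II\paren{\ProtSC:B_i\mid A_i,A^{-i},B^{-i},f,R_{SC}}\right], \]
where $R_{SC}$ denotes the public randomness of $\protSC$. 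A Braverman--Rao-type direct-sum for internal information complexity on product distributions (as in~\cite{BarakBCR10,BravermanR11,Braverman12}) then upper bounds this sum by $O(1)\cdot\ICost{\protSC}{\distSC\mid\theta=0}$, completing the proof.

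The main obstacle is precisely this last step. The chain-rule expansion of $\II(\ProtSC:\SS\mid\ST,R_{SC})$ naturally produces terms conditioned on $S_{<i}$ (and without $f$ explicitly), which does not line up with the per-coordinate conditioning $(A^{-i},B^{-i},f)$ in the display above. Swapping these conditioning sets uses identities of the form $\II(\ProtSC:X\mid Y,Z)=\II(\ProtSC,Z:X\mid Y)$ whenever $X\perp Z\mid Y$, with independence guaranteed by the product structure of $\distSC\mid\theta=0$; executing this bookkeeping carefully while tracking the $f_i$'s (which couple Alice's and Bob's inputs within a single coordinate) is the technically delicate part of the argument.
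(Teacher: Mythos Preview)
Your reduction has the right shape, but the choice to sample \emph{all} pairs $(A_j,B_j)_{j\neq\istar}$ via public randomness is a genuine gap, and it is precisely the step you flag as ``technically delicate.'' With that sampling, the per-coordinate Alice term is $\II(\ProtSC:A_i\mid A^{-i},\bB,f)$, and the independence you invoke ($A_i\perp A^{>i}\mid A^{<i},\bB$) goes the wrong way: by Fact~\ref{fact:info-increase} it gives $\II(\ProtSC:A_i\mid A^{-i},\bB,f)\geq \II(\ProtSC:A_i\mid A^{<i},\bB,f)$, so you cannot upper bound the sum by the chain-rule quantity $\II(\ProtSC:\bA\mid\bB,f)$. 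This is not a matter of bookkeeping: take $m$ independent uniform bits $A_1,\ldots,A_m$ and $\Pi=\bigoplus_i A_i$; then $\sum_i\II(\Pi:A_i\mid A^{-i})=m$ while $\II(\Pi:\bA)=1$, so the all-public sum can exceed the internal information cost by a factor of $m$.

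The fix, and what the paper does, is the asymmetric BBCR sampling: make $A^{<\istar}$ and $B^{>\istar}$ public, and have Alice privately sample $A^{>\istar}$ (conditioned on the public $B^{>\istar}$) while Bob privately samples $B^{<\istar}$. Then Alice's term becomes $\II(\ProtSC:A_i\mid B_i,A^{<i},B^{>i},f)$; since $A_i\perp B^{<i}$ given this conditioning, Fact~\ref{fact:info-increase} lets you \emph{add} $B^{<i}$ to obtain $\II(\ProtSC:A_i\mid A^{<i},\bB,f)$, after which the chain rule telescopes to $\II(\ProtSC:\bA\mid\bB,f)$. The symmetric calculation (adding $A^{>i}$ to Bob's term and telescoping in reverse index order) handles $\II(\ProtSC:\bB\mid\bA,f)$. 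Your treatment of the $\theta$-conditioning and of the $f_i$'s is fine; only the public/private split needs to change.
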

\begin{proof}
	We design the protocol $\protDisj$ as follows: 
	\textbox{Protocol $\protDisj$. \textnormal{The protocol for solving $\Disjt$ using a protocol $\protSC$ for $\SetCover$.}}{
	\medskip
	\\ 
	\textbf{Input:} An instance $(A,B) \sim \distDisj$.  \textbf{Output:} \Yes if $A \cap B = \emptyset$ and \No otherwise. \\
	\algline
	
	\begin{enumerate}
		\item Using public randomness, the players sample an index $\istar \in_R [m]$ and $m$ mapping-extensions $f_1,\ldots,f_m$ independently and uniformly at random from $\FC$. 
		\item Using public randomness, the players sample the sets $A^{<\istar}$ and $B^{>\istar}$ each from $\distDisjN$ independently. 
		\item Using private randomness, Alice samples the sets $A^{>\istar}$ such that $(A_j,B_j) \sim \distDisjN$ (for all $j > \istar$); similarly Bob samples the sets $B^{<\istar}$. 
		\item The players construct the collections $\SS:= \set{S_1,\ldots,S_m}$ and $\ST:= \set{T_1,\ldots,T_m}$ by setting $S_i := [n] \setminus f_i(A_i)$ and $T_i:= [n] \setminus f_i(B_i)$ (exactly as in
		distribution $\distSC$).
		\item The players solve the \SetCover instance using \protSC and output \No iff $\protSC$ estimates $\opt(\SS,\ST) \leq 2\alpha$ and \Yes otherwise. 
	\end{enumerate}
	}

	It is easy to see that the distribution of instances $(\SS,\ST)$ created in the protocol $\protDisj$ matches the distribution $\distSC$ for \SetCover exactly. Moreover, by Lemma~\ref{lem:dist-sc-opt}, 
	$\opt(\SS,\ST) > 2\alpha$ w.p. $1-o(1)$, whenever $(A,B) \sim \distDisjN$ and $\opt(\SS,\ST) = 2$ whenever $(A,B) \sim \distDisjY$. Consequently, since $\protSC$ is an $\alpha$-approximation protocol,
	\begin{align*}
		\PR_{\distDisj}\paren{\protDisj~\errs} \leq \PR_{\distSC}\paren{\protSC~\errs} + o(1) \leq \delta + o(1)
	\end{align*}
	and hence $\protDisj$ is indeed a $\paren{\delta+o(1)}$-error protocol for $\Disj$ on the distribution $\distDisj$. Moreover, it is clear that the communication cost of $\protDisj$ is at
	most the communication cost of \protSC. We now prove the bound on the information cost of this protocol. 
	
	Our goal is to bound the information cost of $\protDisj$ whenever the instance $(A,B)$ is sampled from $\distDisjN$. Let $F$ be a random variable denoting the tuple $(f_1,\ldots,f_m)$,
	$I$ be a random variable for $\istar$ and $R$ be the set of public randomness used by the players. By Claim~\ref{clm:public-random}, 
	\begin{align*}
		\ICost{\protDisj}{\distDisjN} &= \II_{\distDisjN}(\ProtDisj : A \mid B,R) + \II_{\distDisjN}(\ProtDisj: B \mid A,R)
	\end{align*}
	We now bound the first term in the RHS above (the second term can be bounded exactly the same). 
	\begin{align*}
		\II_{\distDisjN}(\ProtDisj : A \mid B,R) &= {\II_{\distDisjN}(\ProtDisj : A \mid B,R,I)} \tag{$I$ is chosen using public randomness} \\
		&= \sum_{i=1}^{m} \Pr{I=i} \cdot \II_{\distDisjN}\paren{\ProtDisj : A_i \mid B_i,A^{<i},B^{>i},F,I=i} \tag{$R = (A^{<i},B^{>i},F,I)$} \\ 
		&= \sum_{i=1}^{m} \frac{1}{m} \cdot \II_{\distDisjN}\paren{\ProtDisj : A_i \mid B_i,A^{<i},B^{>i},F} 		
	\end{align*}
	where the last equality is true since conditioned on $(A,B) \sim \distDisjN$, all sets $A_j,B_j$ (for $j \in [m]$) are chosen from $\distDisjN$ and hence are independent of the $``I=i"$ event\footnote{We point out that
	this is the exact reason we need to consider information cost of $\protDisj$ on $\distDisjN$ (instead of $\distDisj$) as otherwise $(A_j,B_j)$'s are \emph{not} independent of $I = i$ and hence this equality would not hold.}. 
	Define $\bA:= (A_1,\ldots,A_m)$ and $\bB:= (B_1,\ldots,B_m)$; we can further derive, 
	\begin{align*}
		\II_{\distDisjN}(\ProtDisj : A \mid B,R) &= \sum_{i=1}^{m} \frac{1}{m} \cdot \II_{\distDisjN}\paren{\ProtDisj : A_i \mid B_i,A^{<i},B^{>i},F} \\
		&\leq \frac{1}{m} \cdot \sum_{i=1}^{m} \II_{\distDisjN}\paren{\ProtDisj : A_i \mid A^{<i},\bB,F} \tag{$A_i \perp B^{<i} \mid \bB,F$ and hence we can apply Fact~\ref{fact:info-increase}} \\
		&= \frac{1}{m} \cdot \II_{\distDisjN}\paren{\ProtDisj :  \bA \mid \bB,F} \tag{chain rule of mutual information, \itfacts{chain-rule}} \\
		&= \frac{1}{m} \cdot \II_{\distDisjN}\paren{\ProtDisj :  \SS \mid \ST,F} \\
		&= \frac{1}{m} \cdot \II_{\distSC}\paren{\ProtSC :  \SS \mid \ST,F,\theta=0} 
	\end{align*}
	where the second last equality is because $\bA$ (resp. $\bB$) and $\SS$ (resp. $\ST$) determine each other conditioned on $F$, and 
	last equality is because the distribution of set cover instances and the messages communicated by the players under $\distDisjN$ and under $\distSC \mid \theta = 0$ exactly matches. 
	
	Moreover, 
	\begin{align*}
		\II_{\distDisjN}(\ProtDisj : A \mid B,R) &\leq \frac{1}{m} \cdot \II_{\distSC}\paren{\ProtSC :  \SS \mid \ST,F,\theta=0} \\
		&\leq \frac{2}{m} \cdot \II_{\distSC}\paren{\ProtSC :  \SS \mid \ST,F,\theta} \tag{by definition of mutual information as $\Pr{\theta = 0} = 1/2$} \\
		&\leq \frac{2}{m} \cdot \paren{\II_{\distSC}\paren{\ProtSC :  \SS \mid \ST,F} + H(\theta)} \tag{by Fact~\ref{fact:bar-hopping}}\\
		&= \frac{2}{m} \cdot \II_{\distSC}\paren{\ProtSC : \SS \mid \ST,F} + \frac{2}{m} \tag{$H(\theta) = 1$ by \itfacts{uniform}}  \\
		&\leq \frac{2}{m} \cdot \II_{\distSC}\paren{\ProtSC :  \SS \mid \ST} + \frac{2}{m} \tag{$\ProtSC \perp F \mid \SS,\ST$ and hence we can apply Fact~\ref{fact:info-decrease}}
	\end{align*}
	By performing the same exact calculation for $\II_{\distDisjN}(\ProtDisj : B \mid A,R)$, we obtain that, 
	\begin{align*}
		\ICost{\protDisj}{\distDisjN} &\leq \frac{2}{m} \cdot \paren{ \II_{\distSC}\paren{\ProtSC : \SS \mid \ST}  +  \II_{\distSC}\paren{\ProtSC : \ST \mid \SS} } + \frac{4}{m} \\
		&= \frac{2}{m} \cdot \ICost{\protSC}{\distSC} + \frac{4}{m} = \frac{O(1)}{m} \cdot \ICost{\protSC}{\distSC} 
	\end{align*}
	where in the last inequality we used the fact that information cost of $\protSC$ is at least $1$. This finalizes the proof of the lemma. 
\end{proof}

Recall that in Lemma~\ref{lem:direct-sum}, we bound the information cost of $\protDisj$ on the distribution $\distDisjN$ (as opposed to $\distDisj$); in the following we prove that 
this weaker bound is still sufficient for our purpose. 

\begin{lemma}\label{lem:disjointness}
	For any $\delta < 1/2$, any $\delta$-error protocol $\protDisj$ for $\Disjt$ on $\distDisj$ with $\norm{\protDisj} = 2^{o(t)}$ has 
	\[\ICost{\protDisj}{\distDisjN} = \Omega(t).\] 
\end{lemma}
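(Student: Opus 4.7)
The plan is to prove this lemma by contradiction and reduce to Proposition~\ref{prop:disjointness}. Suppose $\ICost{\protDisj}{\distDisjN}\leq c$ with $c=o(t)$. I will construct from $\protDisj$ a new protocol $\widetilde{\Prot}$ for $\Disjt$ on $\distDisj$ whose worst-case communication is $o(t)$ and which remains a $(\delta+o(1))$-error protocol on $\distDisj$. Then Proposition~\ref{prop:cc-ic} gives $\ICost{\widetilde{\Prot}}{\distDisjY}\leq \norm{\widetilde{\Prot}}=o(t)$, while Proposition~\ref{prop:disjointness} applied to $\widetilde{\Prot}$ (whose error on $\distDisj$ is still below $1/2$) forces $\ICost{\widetilde{\Prot}}{\distDisjY}=\Omega(t)$, a contradiction. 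Throughout, the bookkeeping identity $\ICost{\protDisj}{\distDisj}=\II_{\distDisj}(\Prot:Z\mid A)+\II_{\distDisj}(\Prot:Z\mid B)+\tfrac{1}{2}\ICost{\protDisj}{\distDisjY}+\tfrac{1}{2}\ICost{\protDisj}{\distDisjN}$, obtained by chain rule on the bit $Z$ (which is determined by $(A,B)$ since $Z=1 \iff A\cap B\neq\emptyset$), is a useful bookkeeping tool that links the three information costs of interest.

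The main tool will be the \emph{information odometer} of~\cite{BravermanW15}, with the refinements in~\cite{GoosJP015}. This gadget lets the players monitor, in an online fashion and with only $O(\log\norm{\protDisj})$ extra bits of communication per round, a running estimate of the accumulated internal information cost of the transcript under a chosen reference distribution; whenever the estimate crosses a pre-specified threshold, the protocol aborts (outputting a fixed default). I would instantiate the odometer with reference distribution $\distDisjN$ and threshold $\Theta(c\cdot\polylog(t))$. By Markov's inequality applied to the random variable measuring the accumulated information, the odometer aborts on a random $\distDisjN$-input with probability polynomially small in $t$, and the per-round bookkeeping overhead is $O(\log\norm{\protDisj})=o(t)$. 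Combined with Braverman-style interactive compression on the non-aborting branch, this yields a protocol $\widetilde{\Prot}$ with worst-case communication $\Ot(c)=o(t)$ that agrees with $\protDisj$ on a $(1-o(1))$-fraction of $\distDisjN$-inputs, and hence is still a $(\delta+o(1))$-error protocol on $\distDisjN$.

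The crux -- and the main technical obstacle -- is arguing that $\widetilde{\Prot}$ remains a $(\delta+o(1))$-error protocol on the \emph{full} distribution $\distDisj$, not only on the reference $\distDisjN$. The odometer's abort decision is a deterministic function of the transcript together with the public randomness, so its abort probability on $\distDisjY$-inputs can be analyzed by a change-of-measure argument. This is exactly the place where the hypothesis $\norm{\protDisj}=2^{o(t)}$ becomes essential: the number of possible transcripts is at most $2^{\norm{\protDisj}}=2^{o(t)}$, so the $\distDisjY$ abort probability can be bounded by at most a $2^{o(t)}$-factor of the $\distDisjN$ abort probability; by choosing the $\distDisjN$ abort probability to be, say, $2^{-\omega(t/\log t)}$ (which is achievable by inflating the odometer threshold to a larger value that is still $o(t)$), the $\distDisjY$ abort probability is driven to $o(1)$ as well. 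Once this step is in place, $\widetilde{\Prot}$ is a $(\delta+o(1))$-error protocol on $\distDisj$ with $\norm{\widetilde{\Prot}}=o(t)$, and Proposition~\ref{prop:disjointness} delivers the contradiction.
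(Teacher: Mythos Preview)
Your overall strategy—attach an information odometer to $\protDisj$ and derive a contradiction with Proposition~\ref{prop:disjointness}—is the right one and matches the paper's. But two steps in your execution do not go through.

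The change-of-measure step is broken. You write ``the number of possible transcripts is at most $2^{\norm{\protDisj}}=2^{o(t)}$'', but the hypothesis is $\norm{\protDisj}=2^{o(t)}$, so the transcript count is $2^{2^{o(t)}}$. More fundamentally, counting transcripts does not yield a likelihood-ratio bound between $\distDisjY$ and $\distDisjN$, so it cannot control the $\distDisjY$-abort probability even if the arithmetic were right. Your fallback—pushing the $\distDisjN$-abort probability down to $2^{-\omega(t/\log t)}$ by inflating the threshold while keeping it $o(t)$—is also not achievable: Markov only gives abort probability at most $c/(\text{threshold})$, so an $o(t)$ threshold against info cost $c=o(t)$ yields only an $o(1)$ bound, nothing exponentially small. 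Separately, the compression claim is unsupported: the odometer caps \emph{information}, not communication, and no known interactive compression turns info cost $o(t)$ with original communication $2^{o(t)}$ into worst-case communication $o(t)$.

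The paper avoids both issues by never leaving the information-cost world and by choosing the abort output correctly. It first shows (a one-line conditioning on the bit $Z$, essentially your bookkeeping identity) that every $\delta'$-error protocol $\prot'$ on $\distDisj$ with $\delta'<1/2$ has $\ICost{\prot'}{\distDisj}\geq \tfrac{1}{2}\ICost{\prot'}{\distDisjY}-2=\Omega(t)$. It then invokes Lemma~\ref{lem:one-sided-two-sided} (the odometer lemma of~\cite{GoosJP015}) as a black box: from $\protDisj$ one obtains an $\eps_2$-error protocol $\prot'$ on $\distDisj$ with $\ICost{\prot'}{\distDisj}=O\paren{\ICost{\protDisj}{\distDisjN}+\log\norm{\protDisj}}$, and the hypothesis $\norm{\protDisj}=2^{o(t)}$ enters solely through the additive $\log\norm{\protDisj}=o(t)$. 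The correctness point you are missing is that in that construction the default output upon abort is \Yes: then aborting on a \Yes-instance is automatically correct, so one never needs to bound the $\distDisjY$-abort probability at all—only the $\distDisjN$-abort probability, which a single application of Markov handles.
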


	By Proposition~\ref{prop:disjointness}, any $\delta$-error protocol for $\Disjt$ (with $\delta < 1/2$) on $\distDisj$ has $\ICost{\protDisj}{\distDisjY} = \Omega(t)$ (notice again that the information
	 cost is measured on the distribution $\distDisjY$). 
	From this, it is also easy to obtain that 
	$\ICost{\protDisj}{\distDisj} = \Omega(t)$. However, to prove Lemma~\ref{lem:disjointness}, we need to lower bound the information cost
	of $\protDisj$ under the distribution $\distDisjN$. 
	
	To achieve this, we can relate the information costs $\ICost{\protDisj}{\distDisjY}$ and $\ICost{\protDisj}{\distDisjN}$ to each other. The goal is to argue that if there is a large discrepancy in the information cost of $\protDisj$ 
	on $\distDisjY$ and $\distDisjN$, then the information cost of the protocol itself can be used to distinguish between these two cases. We can achieve this goal using an elegant 
	construction of an ``information odometer'' by~\cite{BravermanW15}; informally speaking, the odometer allows the players to ``keep track'' of the amount of information revealed
	 in a protocol (i.e., the information cost of the protocol), while incurring a relatively small additional information cost overhead. 
	 
	Intuitively, we can use the odometer to argue that $\ICost{\protDisj}{\distDisjN} = \Theta(\ICost{\protDisj}{\distDisjY})$ as follows: suppose towards a contradiction that 
	$\ICost{\protDisj}{\distDisjN} = \tau$ for some $\tau = o(\ICost{\protDisj}{\distDisjY})$ and consider a new protocol $\protDisj'$ for $\Disj$ on $\distDisj$ which runs $\protDisj$ and the
	 information odometer for $\protDisj$ in parallel. Whenever
	the odometer estimates the information cost of $\protDisj$ to be larger than $c \cdot \tau$ (for some sufficiently large constant $c$), the players terminate the protocol and declare that the answer for $\Disj$ is $\No$
	(as information cost of $\protDisj$ on $\distDisjN$ is typically not much more than $\tau$, while its information cost on $\distDisjY$ is $\omega(\tau)$). If the cost
	is not estimated more than $c \cdot \tau$ by the end of the protocol, the players output the same answer as in $\protDisj$. As the information cost of the information odometer itself is bounded by $O(\tau)$, 
	this results in protocol $\protDisj'$ to have $\ICost{\protDisj'}{\distDisj} = o(t)$, a contradiction. This argument was first made explicit in~\cite{GoosJP015}. 
		
	\begin{lemma}[Lemma 15 in \cite{GoosJP015}]\label{lem:one-sided-two-sided}
		Fix any function $F$, constants $0 < \eps_1 < \eps_2 < 1/2$, input distribution $\dist$, and define $\distN := \dist \mid F^{-1}(\No)$. 
		For every $\eps_1$-error protocol $\prot$ for $F$ on $\dist$, there exists an $\eps_2$-error protocol $\prot'$ for $F$ on $\dist$ such that: 
		\begin{align*}
			\ICost{\prot'}{\dist} = O\paren{\ICost{\prot}{\distN} + \log{\norm{\prot}}}.
		\end{align*}
	\end{lemma}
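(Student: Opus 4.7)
The plan is to use the information odometer of~\cite{BravermanW15} to truncate $\prot$ once it has revealed too much information, and then to output $\Yes$ whenever truncation occurs. Set $\tau := \ICost{\prot}{\distN}$ and $T := \tau/(\eps_2 - \eps_1)$. The protocol $\prot'$ runs $\prot$ in parallel with an odometer that, on each input, maintains a running estimate (known to both players) of the per-sample information revealed by the partial transcript of $\prot$ about the other player's input---a random variable whose expectation under $\distN$ equals $\tau$. The first time this estimate exceeds $T$, the players halt and output $\Yes$; otherwise they run $\prot$ to completion and output its answer. The guarantee invoked from~\cite{BravermanW15} is that this odometer can be instantiated so that (i) its own messages contribute at most $O(\log \norm{\prot})$ to the internal information cost of the joint protocol, and (ii) the running estimate tracks the true per-sample information with vanishingly small error probability.

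For the information bound, the joint transcript of $\prot'$ is cut off as soon as the information revealed by $\prot$ exceeds $T$, and the odometer's own contribution is at most $O(\log \norm{\prot})$. Hence $\ICost{\prot'}{\dist} = O\!\paren{T + \log \norm{\prot}} = O\!\paren{\ICost{\prot}{\distN} + \log \norm{\prot}}$, since $\eps_2 - \eps_1$ is a positive constant.

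For the error bound, truncating a $\Yes$-instance is harmless because the default $\Yes$ output is correct, so truncation contributes to the error of $\prot'$ only on $\No$-instances. By Markov's inequality applied to the non-negative random variable ``information revealed by $\prot$'' under $\distN$, whose expectation is $\tau$, the probability of truncation conditioned on $F = \No$ is at most $\tau/T = \eps_2 - \eps_1$. Combining, $\PR_\dist\paren{\prot'~\errs} \leq \PR_\dist\paren{\prot~\errs} + \PR_\distN\paren{\textnormal{truncate}} \leq \eps_1 + (\eps_2 - \eps_1) = \eps_2$, so $\prot'$ is an $\eps_2$-error protocol.

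The main obstacle is instantiating the information odometer with the two properties above. This is precisely the content of the odometer construction in~\cite{BravermanW15} (black-boxed in~\cite{GoosJP015}): it synchronizes the two players' estimates of the running information cost via short ``checkpoint'' messages spaced at geometrically increasing rounds, and the technical heart of the argument is showing that these checkpoints keep both players' estimates accurate while contributing only $O(\log \norm{\prot})$ to the joint protocol's internal information cost. Given that black box, the construction and analysis above yield the claimed protocol $\prot'$.
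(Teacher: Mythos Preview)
Your proposal is correct and follows essentially the same odometer-based truncation approach that the paper sketches informally before stating the lemma as a citation from~\cite{GoosJP015} (the paper does not give its own proof). One small note: you correctly output \Yes upon truncation so that the added error is controlled by Markov's inequality on $\distN$; the paper's informal sketch says to output \No, but that appears to be a slip in the exposition, and your choice is the one needed for the error bound to go through.
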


	We are now ready to prove Lemma~\ref{lem:disjointness}. 

\begin{proof}[Proof of Lemma~\ref{lem:disjointness}]
	Let $\protDisj'$ be any $\delta'$-error protocol for $\Disj$ on $\distDisj$ for $\delta' < 1/2$. 
	We first prove that $\ICost{\protDisj'}{\distDisj} = \Omega(t)$ using the fact that $\ICost{\protDisj'}{\distDisjY} = \Omega(t)$ as follows: 
	\begin{align*}
		\ICost{\protDisj'}{\distDisj} &= \II_{\distDisj}(\ProtDisj' : A \mid B) +  \II_{\distDisj}(\ProtDisj' : B \mid A) \\
		&\geq \II_{\distDisj}(\ProtDisj' : A \mid B,\theta) +  \II_{\distDisj}(\ProtDisj' : B \mid A,\theta) - 2\HH(\theta) \tag{by Fact~\ref{fact:bar-hopping}}\\
		&\geq \frac{1}{2} \cdot \II_{\distDisj}(\ProtDisj' : A \mid B,\theta = 0) +  \frac{1}{2} \cdot \II_{\distDisj}(\ProtDisj' : B \mid A,\theta=0) - 2 \tag{by definition of mutual information and since $\HH(\theta) = 1$} \\
		&= \frac{1}{2} \cdot \paren{\II_{\distDisjY}(\ProtDisj' : A \mid B) +  \II_{\distDisjY}(\ProtDisj' : B \mid A)} - 2 \tag{$\distDisjY = \distDisj \mid \theta = 0$} \\
		&= \frac{1}{2} \cdot \ICost{\protDisj'}{\distDisjY} - 2 = \Omega(t) \tag{by Proposition~\ref{prop:disjointness}}
	\end{align*}
	
	Now suppose towards a contradiction that $\ICost{\protDisj}{\distDisjN}$ is $o(t)$. We can then apply Lemma~\ref{lem:one-sided-two-sided} for the function $F = \Disj$, $\eps_1 = \delta$ and $\eps_2 = \delta' < 1/2$ to obtain 
	a protocol $\protDisj'$ with $\ICost{\protDisj'}{\distDisj} = O\paren{\ICost{\protDisj}{\distN} + \log{\norm{\protDisj}}}$ which is $o(t)$; a contradiction.
\end{proof}

We now conclude, 

\begin{theorem}\label{thm:sc-cc-lower}
	For any constant $\delta < 1/2$, $\alpha = o(\frac{\log{n}}{\log{\log{n}}})$, and $m = \poly{(n)}$, 
	\begin{align*}
		\CC{\SetCover}{\distSC}{\delta} = \Omgt({mn^{\frac{1}{\alpha}}}).
	\end{align*} 
\end{theorem}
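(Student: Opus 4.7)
The plan is to chain together the two main lemmas already established, Lemma~\ref{lem:direct-sum} (the direct-sum reduction from $\SetCover$ to $\Disjt$) and Lemma~\ref{lem:disjointness} (the information lower bound for $\Disjt$ on the \No-distribution), using Proposition~\ref{prop:cc-ic} to convert the resulting information-complexity bound back into a communication bound. Fix any constant $\delta < 1/2$ and let $\protSC$ be any $\delta$-error protocol for $\SetCover$ on $\distSC$. Recall $t = 2^{-15} \cdot (n/\log m)^{1/\alpha}$.

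I would split the argument into two cases based on the raw communication cost of $\protSC$. In the first case, assume $\norm{\protSC} = 2^{o(t)}$. Apply Lemma~\ref{lem:direct-sum} to obtain a $(\delta+o(1))$-error protocol $\protDisj$ for $\Disjt$ on $\distDisj$ satisfying $\norm{\protDisj} = \norm{\protSC} = 2^{o(t)}$ and $\ICost{\protDisj}{\distDisjN} = \frac{O(1)}{m} \cdot \ICost{\protSC}{\distSC}$. Since $\delta$ is a constant strictly below $1/2$, we have $\delta + o(1) < 1/2$ for large enough $n$, so Lemma~\ref{lem:disjointness} applies to $\protDisj$ and yields $\ICost{\protDisj}{\distDisjN} = \Omega(t)$. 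Combining this with the direct-sum bound gives $\ICost{\protSC}{\distSC} = \Omega(m \cdot t)$, and Proposition~\ref{prop:cc-ic} then yields $\norm{\protSC} \geq \ICost{\protSC}{\distSC} = \Omega(m t) = \Omgt(m n^{1/\alpha})$, where the $\Omgt$ absorbs the $(\log m)^{1/\alpha} = \polylog{(n)}$ factor coming from $t$ (using $m = \poly(n)$).

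In the second case, $\norm{\protSC} \geq 2^{\Omega(t)} = 2^{\Omega(n^{1/\alpha}/\polylog{(n)})}$, which is super-polynomial in $n$ and hence already dominates $\Omgt(m n^{1/\alpha})$ for any $m = \poly(n)$, so the bound is trivial in this regime. Combining both cases gives $\CC{\SetCover}{\distSC}{\delta} = \Omgt(m n^{1/\alpha})$, as claimed.

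The argument is essentially mechanical once Lemma~\ref{lem:direct-sum} and Lemma~\ref{lem:disjointness} are in hand; the only minor subtlety is the case split, which is required because Lemma~\ref{lem:disjointness} carries the technical hypothesis $\norm{\protDisj} = 2^{o(t)}$. That hypothesis is exactly what makes the odometer-based step inside Lemma~\ref{lem:disjointness} quantitatively useful, so handling the complementary regime separately (where it is vacuous but the conclusion is free) is the natural way to obtain an unconditional bound. I would also briefly verify that the error inflation $\delta \mapsto \delta + o(1)$ inherited from Lemma~\ref{lem:direct-sum} stays below $1/2$, which is immediate for a constant $\delta < 1/2$.
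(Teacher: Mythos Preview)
Your proposal is correct and follows essentially the same approach as the paper: combine Proposition~\ref{prop:cc-ic}, Lemma~\ref{lem:direct-sum}, and Lemma~\ref{lem:disjointness}. The only cosmetic difference is that the paper phrases the argument as a single contradiction rather than a case split: assuming $\norm{\protSC} = o(mt)$, it observes directly that $o(mt) \leq 2^{o(t)}$ (since $m = \poly(n)$ and $\alpha = o(\log n/\log\log n)$ force $t = \omega(\log n)$), so the hypothesis of Lemma~\ref{lem:disjointness} is automatically met and your Case~2 never arises.
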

\begin{proof}
	Let $t = \Theta\paren{(\frac{n}{\log{m}})^{\frac{1}{\alpha}}}$ and suppose towards a contradiction that there exists a $\delta$-error protocol $\protSC$ for $\SetCover$ on the 
	distribution $\distSC$ with $\norm{\protSC} = o(mt)$; by Proposition~\ref{prop:cc-ic}, $\ICost{\protSC}{\distSC} = o(mt)$ also.
	By Lemma~\ref{lem:direct-sum}, this implies that there exists a $\paren{\delta+o(1)}$-error protocol $\protDisj$ for $\Disj$ on
	the distribution $\distDisj$ such that $\ICost{\protDisj}{\distDisjN} = o(t)$, and $\norm{\protDisj} = o(mt) \leq 2^{o(t)}$ (since $m = \poly(n)$ and $\alpha = o(\frac{\log{n}}{\log{\log{n}}})$). 
	However, this is in contradiction with Lemma~\ref{lem:disjointness}, implying that $\norm{\protSC} = \Omega(mt)$, hence proving the theorem. 
\end{proof}

As a corollary of Theorem~\ref{thm:sc-cc-lower}, we have that the space complexity of any $\alpha$-approximation streaming algorithm for set cover 
that uses $\polylog{(n)}$ passes on \emph{adversarial streams} is $\Omgt({mn^{\frac{1}{\alpha}}})$. In the next section, we extend this result to random arrival streams
and complete the proof of Theorem~\ref{thm:sc-lower}.

\newcommand{\distSCrnd}{\ensuremath{\distSC^{\textnormal{\textsf{rnd}}}}}

\newcommand{\Gstar}{\ensuremath{G^{\star}}}

\subsection{Proof of Theorem~\ref{thm:sc-lower}}\label{sec:sc-random-arrival}

The distribution $\distSC$ used in the previous section is quite ``adversarial'' and as such is not suitable for proving the lower bound for random arrival streams. 
In order to prove the lower bound in Theorem~\ref{thm:sc-lower} for random arrival streams, we need to relax the adversarial
 partitioning of the sets in the distribution $\distSC$ to a randomized partition.

\textbox{Distribution \distSCrnd. {\textnormal{A random partitioning of the distribution $\distSC$}}} {

\begin{itemize}
	\item Sample the collections $(\SS,\ST) \sim \distSC$. 
	\item Assign each set in $\SS \cup \ST$ to Alice w.p. $1/2$ and the remainings to Bob.  
\end{itemize}
}

We show that even this seemingly easier distribution still captures all the ``hardness'' of distribution $\distSC$. Formally, 
\begin{lemma}\label{lem:dist-random}
	For any constant $\delta < 1/4$, $\alpha = o(\frac{\log{n}}{\log\log{n}})$, and $m = \poly(n)$, 
	\begin{align*}
		\CC{\SetCover}{\distSCrnd}{\delta} = \Omgt(mn^{\frac{1}{\alpha}})
	\end{align*}
\end{lemma}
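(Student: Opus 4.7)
The plan is to mirror the proof of Theorem~\ref{thm:sc-cc-lower}: given a $\delta$-error protocol $\pi'$ for \SetCover on $\distSCrnd$ with $\norm{\pi'}=c$, I will construct a protocol $\protDisj$ for $\Disjt$ on $\distDisj$ whose overall error is strictly below $1/2$ and whose information cost satisfies $\ICost{\protDisj}{\distDisjN}=O(c/m)$, and then invoke Lemma~\ref{lem:disjointness} to conclude $c=\Omgt(mn^{1/\alpha})$. The reduction will embed the input $(A,B)\sim\distDisj$ as the $\istar$-th disjointness instance of a randomly partitioned \SetCover instance, along the exact same lines as Lemma~\ref{lem:direct-sum}, with the extra wrinkle that now the partition must also be simulated.

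Public randomness samples $\istar\in_R[m]$, mapping-extensions $f_1,\ldots,f_m\in_R\FC$, the other $\Disj$ pairs $(A_i,B_i)\sim\distDisjN$ for $i\neq\istar$, and per-pair configuration bits $c_i\in\{AA,AB,BA,BB\}$ uniformly. For every $i\neq\istar$, both players know $(S_i,T_i)$ from the public randomness and assign them to Alice or Bob according to $c_i$ at zero communication cost. The delicate case is $i=\istar$: Alice knows $A$ but not $B$, so she can construct at most one of $\{S_{\istar},T_{\istar}\}$, and likewise for Bob, so the embedding is feasible only in the two ``split'' configurations. The key observation is that the symmetry of $\distDisj$ under swapping $(A,B)$ allows the \emph{same} input pair to realize both splits: if $c_{\istar}=AB$ we interpret $(A,B)$ as $(A_{\istar},B_{\istar})$ and Alice's set $[n]\setminus f_{\istar}(A)$ plays the role of $S_{\istar}$; if $c_{\istar}=BA$ we re-interpret $(A,B)$ as $(B_{\istar},A_{\istar})$, and the same set on Alice's side now plays $T_{\istar}$, with Bob's $[n]\setminus f_{\istar}(B)$ filling in $S_{\istar}$. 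In the two remaining configurations $c_{\istar}\in\{AA,BB\}$ (probability $1/2$) the embedding is infeasible, so $\protDisj$ simply outputs a fixed answer; otherwise it runs $\pi'$ and declares \No iff the set-cover estimate is at most $2\alpha$, exactly as in Lemma~\ref{lem:direct-sum}.

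Conditioned on $c_{\istar}\in\{AB,BA\}$, the joint distribution of the set-cover instance and its partition fed to $\pi'$ is precisely $\distSCrnd$ restricted to this event, so $\Pr\paren{\pi'~\errs\mid c_{\istar}\in\{AB,BA\}}\leq \delta/(1/2)=2\delta$; combined with the fixed-answer fallback on the complementary half (which errs with probability at most $1/2$ since $\distDisj$ is balanced between \Yes and \No), the total error of $\protDisj$ is at most $\frac{1}{2}\cdot 2\delta+\frac{1}{2}\cdot\frac{1}{2}=\delta+\frac{1}{4}$, which stays strictly below $1/2$ precisely when $\delta<1/4$---matching exactly the hypothesis of the lemma and explaining the slightly stricter error constant compared to Theorem~\ref{thm:sc-cc-lower}. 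The information-cost accounting is then an almost verbatim repeat of Lemma~\ref{lem:direct-sum} with the sampled $c_i$'s and all other publicly sampled objects absorbed into the public randomness $R$; the chain-rule and conditional-independence manipulations go through unchanged to yield $\ICost{\protDisj}{\distDisjN}=O(\ICost{\pi'}{\distSCrnd})/m$, and applying Lemma~\ref{lem:disjointness} then forces $\ICost{\pi'}{\distSCrnd}=\Omega(mt)$, hence $c=\Omgt(mn^{1/\alpha})$. The main obstacle I anticipate is checking that enlarging $R$ with all the $c_i$'s does not disturb the conditional-independence steps of Lemma~\ref{lem:direct-sum}'s information-cost calculation; since the $c_i$'s are sampled independently of the private $\Disj$ inputs (and the event ``$c_{\istar}\in\{AB,BA\}$'' is measurable in $R$), this should go through, but one must be careful about what is conditioned on at each step, in particular when splitting into the success/fail cases of the embedding.
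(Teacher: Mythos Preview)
Your overall strategy—embed a single $\Disjt$ instance inside a randomly partitioned \SetCover instance and redo the direct-sum of Lemma~\ref{lem:direct-sum}—is reasonable, and your error analysis correctly explains why $\delta<1/4$ is the right threshold. However, the information-cost step does \emph{not} go through as an ``almost verbatim repeat'' of Lemma~\ref{lem:direct-sum}, and the obstacle is not the $c_i$'s you flagged but your choice to sample \emph{all} pairs $(A_j,B_j)$ for $j\neq\istar$ with public coins. In Lemma~\ref{lem:direct-sum} the public randomness contains only $A^{<\istar}$ and $B^{>\istar}$; this asymmetric split is exactly what makes the chain rule collapse the sum $\sum_i \II(\Pi:A_i\mid A^{<i},\bB,F)$ into $\II(\Pi:\bA\mid\bB,F)$. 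With your all-public sampling the conditioning already contains $\bA^{-i}$, so the term you must bound is $\II(\Pi':A_i\mid\bA^{-i},\bB,F,\{c_j\})$. Since $A_i\perp A^{>i}$ given the rest, Fact~\ref{fact:info-increase} gives
\[
\II(\Pi':A_i\mid A^{<i},\bB,F,\{c_j\})\ \le\ \II(\Pi':A_i\mid \bA^{-i},\bB,F,\{c_j\}),
\]
which is the \emph{wrong direction}: summing yields only a lower bound $\sum_i(1/m)\,\II(\Pi':A_i\mid\bA^{-i},\bB,\dots)\ge (1/m)\,\II(\Pi':\bA\mid\bB,\dots)$, not the upper bound you need. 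Restoring the Lemma~\ref{lem:direct-sum} split is also problematic, because for indices $j$ with $c_j\in\{BA,AA,BB\}$ Alice may need $B_j$ (or Bob may need $A_j$), which is no longer available to her under that split.

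The paper sidesteps this entirely by \emph{not} re-running the information-theoretic argument. It reduces $\distSCrnd$ back to $\distSC$ and then invokes Theorem~\ref{thm:sc-cc-lower} as a black box: call an index $i$ \emph{good} if $S_i$ and $T_i$ land on different players, show by Chernoff that w.h.p.\ there are $m/2-o(m)$ good indices and that $\istar$ is good with probability $\approx 1/2$, and hence any $\delta$-error protocol on $\distSCrnd$ has error at most $2\delta+o(1)$ conditioned on this event. Averaging then fixes a specific good set $G^\star$ of size $\Theta(m)$; a fresh $\distSC$ instance on $2|G^\star|$ sets is embedded into the coordinates $G^\star$ while the remaining coordinates are completed from public coins, yielding a $(2\delta+o(1))$-error protocol for $\distSC$ with the same communication, to which Theorem~\ref{thm:sc-cc-lower} applies directly. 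This modular route avoids any new direct-sum bookkeeping.
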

\begin{proof}
	Let $\SS = \set{S_1,\ldots,S_m}$ and $\ST = \set{T_1,\ldots,T_m}$ be the collections of sets sampled from $\distSC$ in the distribution $\distSCrnd$. For a sampled instance in $\distSCrnd$, we say that the index 
	$i \in [m]$ is \emph{good} iff $S_i$ is given to one player and $T_i$ to another. Let $G \subseteq [m]$ be the collection of all good indices. The index $\istar$ is chosen independent of the 
	random partitioning in $\distSCrnd$, and hence the probability that $\istar \in G$ is exactly $\card{G}/m$. Let $\event$ denote the event that $\card{G} \geq m/2 - o(m)$ and $i \in G$. We have, 
	\begin{align*}
		\Pr{\event} &= \Pr{\card{G} \geq m/2-o(m)} \cdot \Pr{\istar \in G \mid \card{G} \geq m/2-o(m)} \\
		&\geq \Pr{\card{G} \geq \paren{1-o(1)} \cdot \Ex{G}} \cdot \frac{1-o(1)}{2} \geq (1-o(1)) \cdot \frac{1}{2}
	\end{align*}
	where the last inequality is by Chernoff bound. 
	Now fix a $\delta$-error protocol $\protSC$ for \SetCover on the distribution $\distSCrnd$. Then, 
	\begin{align}
		\Pr{\protSC~\errs \mid \event} \leq \frac{\Pr{\protSC~\errs}}{\Pr{\event}} \leq 2\delta +o(1)\label{eq:prob-errs}
	\end{align}
	
	This in particular implies that there exists a set $\Gstar \subseteq [n]$ with $\card{\Gstar} \geq m/2 - o(m)$, such that conditioned on the set of good indices being 
	$\Gstar$ and conditioned on $\istar \in \Gstar$, the probability that $\protSC$ errs is at most $2\delta+o(1)$. Note that conditioned on the aforementioned events, the index $\istar$ is 
	chosen from $\Gstar$ uniformly at random. This implies that the distribution of the input given to Alice and Bob limited to the sets in $\Gstar$ matches the distribution $\distSC$ (with the number of the sets being 
	$2 \cdot \card{\Gstar}$ instead of $2m$). We can then use this to embed an instance of $\SetCover$ over the distribution $\distSC$ into the sets $\Gstar$ and obtain a protocol $\protSC'$ for $\distSC$. 
	
	More formally, the protocol $\protSC'$ works as follows: Given an instance $(\SS',\ST')$ sampled from $\distSC$ (with $\card{\SS'} = \card{\ST'} = \card{\Gstar}$), Alice and Bob use public coins to complete their input
	(i.e., increase the number of the sets to $2m$) by sampling from the distribution $\distSCrnd$ conditioned on $\Gstar$ (this is possible without any communication as the sets outside $\Gstar$ are sampled independent of the 
	 sets in $\Gstar$). The players then run the protocol $\protSC$ on this new instance and return the same answer as this protocol. As the distribution of the \SetCover instances sampled in the protocol $\protSC'$ matches
	 the distribution $\distSCrnd$ conditioned on $\Gstar$ and $\istar \in \Gstar$, by Eq~(\ref{eq:prob-errs}), the probability that $\protSC'$ errs is at most $2\delta+o(1)$. Since $\delta < 1/4$, we obtain a $\delta'$-error protocol 
	 for $\SetCover$ on the distribution $\distSC$ with $2\card{\Gstar} = \Theta(m)$ sets and universe of size $n$, for a constant $\delta' < 1/2$. Consequently, by Theorem~\ref{thm:sc-cc-lower}, 
	 $\norm{\protSC} = \norm{\protSC'} = \Omgt(\card{\Gstar} \cdot n^{\frac{1}{\alpha}}) = \Omgt(mn^{\frac{1}{\alpha}})$, proving the lemma.
\end{proof}

We are now ready to prove Theorem~\ref{thm:sc-lower}. 

\begin{proof}[Proof of Theorem~\ref{thm:sc-lower}]
	Fix a $p$-pass $s$-space streaming algorithm $\alg$ for the set cover problem over random arrival streams that outputs an $\alpha$-approximation w.p. at least $1-\delta$ for $\delta < 1/4$. One
	 can easily turn $\alg$ into a $\delta$-error protocol for $\SetCover$ on the distribution $\distSCrnd$: Alice and Bob take a random permutation of their inputs and then treat their combined input 
	 as a set stream and run $\alg$ on that. The random partitioning of the input plus the random permutation taken by the players ensure that the constructed stream is a random permutation of the input sets. Consequently,
	 this protocol is a $\delta$-error protocol for $\SetCover$ on $\distSCrnd$ that uses $O(p \cdot s)$ bits of communication. Since $\delta < 1/4$, by Lemma~\ref{lem:dist-random}, $p \cdot s = \Omgt(mn^{\frac{1}{\alpha}})$, 
	 proving the theorem. 
\end{proof}

\newcommand{\sol}{\ensuremath{\textnormal{\textsf{SOL}}\xspace}}
\newcommand{\Usmpl}{\ensuremath{U_{\textnormal{\textsf{smpl}}}}}

\subsection{An $\alpha$-Approximation Algorithm for the Streaming Set Cover Problem} \label{sec:sc-upper} 

In this section, we prove the optimality of the lower bound in Theorem~\ref{thm:sc-lower} by establishing a matching upper bound (i.e. Theorem~\ref{thm:sc-upper}). 
As stated earlier, our algorithm is a simple modification of the algorithm of~\cite{HarPeledIMV16}. In particular, we obtain our improved algorithm by using a one-shot pruning step 
as opposed to the iterative pruning of~\cite{HarPeledIMV16}, and employing a more careful element sampling (compare the bounds in Lemma~\ref{lem:element-sampling} in this paper
with Lemma~2.5 in~\cite{HarPeledIMV16}). 

In the following, we assume that we are given a value $\topt$ which is a $(1+\eps)$-approximation of $\opt$, i.e., 
the optimal solution size of the given instance. This is without loss of generality as we can run the algorithm in parallel for $O(\log{n}/\eps)$ guesses for $\topt \in [1,n]$ and return the 
smallest computed set cover among all parallel runs. 

The general idea behind the algorithm is as follows: we know that $\topt$ sets are enough to cover the whole universe $[n]$; hence, if we find 
a $(1-\rho)$-approximate $k$-cover of the input sets for the parameter $k = \topt$ and $\rho = 1/n^{1/\alpha}$, we can reduce the number of uncovered elements by a factor of $n^{1/\alpha}$.  
Repeating this process $\alpha$ times then results in a collection of at most $\alpha \cdot \topt$ sets that covers the whole universe, i.e., an $\alpha$-approximate set cover. It is worth mentioning 
that this is the general principle behind most (but not all) streaming algorithms for set cover, see, e.g.~\cite{HarPeledIMV16,BateniEM16,SahaG09,DemaineIMV14}. 

Notice that we can readily use the maximum coverage streaming algorithms of~\cite{McGregorVu16,BateniEM16} as a sub-routine to find the approximate $k$-cover above; however, doing so would result in a sub-optimal algorithm for set cover as these algorithms have space dependence of (at least) $\Omega(m/\rho^2) = \Omega(mn^{2/\alpha})$ (even ignoring the dependence on $k$, i.e., $\topt$). In fact, as we prove in the next section (see Result~\ref{res:kc-lower}), any $(1-\rho)$-approximate $k$-cover algorithm needs $\Omega(m/\rho^2)$ space in general. To bypass this, we crucially use the fact that the aforementioned maximum coverage instances 
have the additional property that the optimal answer is the whole universe and hence the element sampling technique of~\cite{HarPeledIMV16} (and similar ones in~\cite{McGregorVu16,BateniEM16}) can be improved 
for this special case. We now provide the formal description of the algorithm. 

\newcounter{algo}

\textbox{Algorithm~\refstepcounter{algo}\thealgo\label{alg1}. \textnormal{An $\alpha$-approximation algorithm for the streaming set cover problem.}}{

\medskip
\textbf{Input.} A stream $\SS=(S_1,\ldots,S_m)$ of subsets of $[n]$, and a $(1+\eps)$-approximation $\topt$ of $\opt(\SS)$. \\
\textbf{Output.} A collection of $(1+\eps)\cdot\alpha\cdot\topt$ sets that cover the universe. \\
\algline

\begin{enumerate}
	\item Let $U \leftarrow [n]$ and $\sol \leftarrow \emptyset$. 
	\item Make a single pass over the stream and if $\card{S_i \cap U} \geq {n}/{(\eps \cdot \topt)}$, then: 
	\begin{enumerate}
		\item $\sol \leftarrow \sol \cup \set{i}$ and $U \leftarrow U \setminus S_i$. 
	\end{enumerate}
	\item For $j=1$ to $\alpha$ iterations: 
	\begin{enumerate}
		\item Let $\Usmpl$ be a subset of $U$ chosen by picking each element independently and w.p. $p = 16\cdot \topt\cdot \log{m} / n^{1-1/\alpha}$. 
		\item Make a single pass over the stream and for all $i \in [m]$, store $S'_i = S_i \cap \Usmpl$ in the memory. 
		\item Find an optimal set cover $\OPT'$ of the instance $(S'_1,\ldots,S'_m)$ and let $\sol \leftarrow \sol \cup \OPT'$. 
		\item Make another pass over the stream and let $\Usmpl \leftarrow \Usmpl \setminus \bigcup_{i \in \OPT'} S_i$. 
	\end{enumerate}
	\item Return $\sol$ as a set cover of the input instance. 
\end{enumerate}
}

We start by bounding the space requirement of Algorithm~\ref{alg1} .

\begin{lemma}\label{lem:space-requirement}
	Algorithm~\ref{alg1} requires $\Ot(mn^{1/\alpha}/\eps + n)$ space w.p. at least $1-1/m^2$. 
\end{lemma}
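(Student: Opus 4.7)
The space usage of Algorithm~\ref{alg1} breaks into three contributions: (a) storing $U$ and $\sol$, which is $O(n+m)$; (b) sampling and storing $\Usmpl$, which is $O(n)$ since $\Usmpl \subseteq U \subseteq [n]$; and (c) storing the truncated sets $S'_i = S_i \cap \Usmpl$ in step~3(b), for which I must bound the total size $\sum_{i=1}^m |S'_i|$ within a single iteration (the algorithm can discard the $S'_i$'s from previous iterations). My plan is to bound this sum in expectation using the pruning guarantee of step~2 and then upgrade to a high-probability statement via a weighted Chernoff bound.

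The structural observation driving the analysis is that after step~2, every set not placed into $\sol$ must satisfy $|S_i \cap U| < n/(\eps\, \topt)$; otherwise it would have been added to $\sol$ during that pass. Since $U$ is never modified during step~3 and $\Usmpl \subseteq U$, this upper bound continues to apply in every iteration. Linearity of expectation and the per-element sampling rule of step~3(a) then give
\[
  \mu \;:=\; \Ex{\sum_{i=1}^m |S'_i|} \;=\; p \sum_{i=1}^m |S_i \cap U| \;\leq\; p \cdot m \cdot \frac{n}{\eps\, \topt} \;=\; \frac{16\, m \log m \cdot n^{1/\alpha}}{\eps},
\]
which already matches the target $\Ot(mn^{1/\alpha}/\eps)$ bound.

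For concentration I would rewrite $\sum_i |S'_i| = \sum_{e \in U} d(e) \cdot \mathbb{1}[e \in \Usmpl]$, where $d(e) := |\{i : e \in S_i\}| \leq m$ is deterministic given the input. The indicators $\mathbb{1}[e \in \Usmpl]$ are mutually independent across $e \in U$ by step~3(a), so this is a sum of independent non-negative random variables with pointwise cap $D = m$. A standard weighted Chernoff bound then yields $\sum_i |S'_i| \leq O(\mu + m \log m)$ with failure probability $1/m^{O(1)}$ per iteration, handled in two regimes (taking relative error $\delta = 1$ when $\mu \geq m \log m$ and $\delta = \Theta(m \log m / \mu) \geq 1$ when $\mu < m \log m$). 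A union bound over the $\alpha = o(\log n / \log\log n)$ iterations of step~3 converts this into a $1 - 1/m^2$ guarantee, and combining with the $O(n+m)$ term from contributions (a) and (b) yields the stated $\Ot(mn^{1/\alpha}/\eps + n)$ bound.

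The only nontrivial point is the two-regime Chernoff argument, since $\mu$ itself can be arbitrarily small (for instance if the current $U$ is already nearly covered), so a single-regime multiplicative Chernoff bound with $\delta \leq 1$ does not directly apply. The case split produces an additive $O(m \log m)$ slack, which is harmless: it is absorbed into $\Ot(mn^{1/\alpha}/\eps)$ as soon as $n^{1/\alpha}/\eps \geq 1$. Everything else is an immediate consequence of the pruning in step~2 and the independent element-wise sampling in step~3(a).
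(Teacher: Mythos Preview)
Your proof is correct and uses the same core ingredients as the paper's: the pruning guarantee from step~2 gives $|S_i \cap U| < n/(\eps\,\topt)$ for every $i$, the expected size of each $S'_i$ is then $O(n^{1/\alpha}\log m/\eps)$, and Chernoff plus a union bound finishes. The difference is in how concentration is applied. The paper bounds each $|S'_i|$ individually --- a per-set Chernoff bound yielding $|S'_i| = \Ot(n^{1/\alpha}/\eps)$ with probability $1-1/m^3$ --- and then union-bounds over all $m$ sets. You instead aggregate first, rewriting $\sum_i |S'_i|$ as the element-indexed sum $\sum_{e\in U} d(e)\,\mathbb{1}[e\in\Usmpl]$ and applying a single weighted Chernoff bound with a two-regime split on the size of $\mu$. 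Your route is a bit more careful in explicitly handling the small-expectation case and in union-bounding over the $\alpha$ iterations (which the paper leaves implicit); the paper's per-set route is marginally simpler since it avoids the weighted/two-regime machinery and gets a uniform per-set size bound rather than just a bound on the total. Both arrive at the same $\Ot(mn^{1/\alpha}/\eps + n)$ bound.
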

\begin{proof}
	It is easy to see that maintaining $\sol$ and $U$ requires, respectively, $O(m)$ and $O(n)$ space. In the following, we analyze the space required for storing the sets
	$(S'_1,\ldots,S'_m)$. After the first pass of the algorithm, no set contains more than $n/(\eps \cdot \topt)$ elements in $U$. 
	Fix a set $S_i \in \SS$; we have, 
	\begin{align*}
		\EX\card{S_i \cap \Usmpl} &= \card{S_i} \cdot p \leq n/(\eps \cdot \topt) \cdot \paren{16\cdot \topt\cdot \log{m} / n^{1-1/\alpha}} \\
		&=  16 \cdot n^{1/\alpha} \cdot \log{m} / \eps
	\end{align*}
	Hence, by Chernoff bound, w.p. $1-1/m^{3}$, $\card{S_i \cap \Usmpl} = \Ot(n^{1/\alpha}/\eps)$. The final bound now follows from this and a union bound on all $m$ sets in $\SS$. 
\end{proof}

\begin{remark}\label{rem:make-deterministic}
One can make the space requirement of Algorithm~\ref{alg1} deterministic by terminating the algorithm whenever it attempts to use a memory more than the bounds in Lemma~\ref{lem:space-requirement}. As 
this event happens with negligible probability, the correctness of the algorithm can be argued exactly the same. 
\end{remark}

The following two lemmas establish the correctness of the algorithm. 

\begin{lemma}\label{lem:apx-guarantee}
	Algorithm~\ref{alg1} picks at most $\paren{\alpha + \eps}  \cdot \topt$ sets in $\sol$. 
\end{lemma}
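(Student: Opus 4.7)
The plan is to bound the contributions to $\sol$ from Step 2 and from Step 3 separately and add them.

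For Step 2, I would argue that each set $S_i$ that gets added to $\sol$ satisfies $\card{S_i \cap U} \geq n/(\eps\cdot\topt)$ at the moment of addition, and is then removed from the current uncovered set (since $U \leftarrow U \setminus S_i$). The total number of elements that ever get removed from $U$ during this pass is at most $n$, so the number of sets added in Step 2 is bounded by $n / (n/(\eps\cdot\topt)) = \eps\cdot\topt$.

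For Step 3, the key observation is that the restriction of any valid set cover of the original instance to $\Usmpl$ is also a valid cover of the sampled instance $(S'_1,\ldots,S'_m)$ of $\Usmpl$. Since $\topt$ is a $(1+\eps)$-approximation of $\opt$ (and in particular $\opt \le \topt$), the optimal solution of the original instance has size at most $\topt$, and therefore the optimal cover $\OPT'$ of $(S'_1,\ldots,S'_m)$ satisfies $\card{\OPT'} \leq \opt \leq \topt$. Over $\alpha$ iterations this contributes at most $\alpha\cdot\topt$ sets to $\sol$.

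Combining the two contributions gives $\card{\sol} \leq \eps\cdot\topt + \alpha\cdot\topt = (\alpha+\eps)\cdot\topt$, as claimed. I do not expect any real obstacle here: the only subtlety is remembering that in Step 3 we are finding an optimal cover of $\Usmpl$ (not of $[n]$ or of $U$), so it is the restriction of the global optimum to $\Usmpl$ that serves as the upper bound; once that is spelled out the calculation is immediate.
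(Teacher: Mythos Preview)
Your proposal is correct and matches the paper's own proof essentially line for line: bound Step~2 by noting each added set removes at least $n/(\eps\cdot\topt)$ elements from $U$ (so at most $\eps\cdot\topt$ sets), and bound each of the $\alpha$ iterations of Step~3 by $\topt$ since the restriction of the global optimum to $\Usmpl$ is a feasible cover there. Your write-up is in fact slightly more explicit than the paper's; there is nothing to add.
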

\begin{proof}
	It is immediate to see that in the first pass, the algorithm picks at most $\eps \cdot \topt$ sets as otherwise $U$ would be empty. 
	Moreover, in each subsequent $\alpha$ iterations, the algorithm picks at most $\topt$ sets since $(S'_1,\ldots,S'_m)$ has a
	set cover of size at most $\topt$ (as the original instance had a set cover of size $\leq \topt$). 
\end{proof}

\begin{lemma}\label{lem:feasible}
	The set $\sol$ computed by Algorithm~\ref{alg1} is a feasible set cover of $[n]$ w.p. $1-1/m$. 
\end{lemma}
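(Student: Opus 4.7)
The plan is to show that with probability at least $1 - 1/m$, after all $\alpha$ iterations of the main loop the uncovered set $U$ is empty. The natural approach is induction on the iteration count, arguing that in each iteration the sets added to $\sol$ cover all of the current $U$ except for a small residual, and that after $\alpha$ iterations this residual has shrunk to zero. The setup relies on two facts guaranteed by the first pass: $|U| \leq n$, and every stream set satisfies $|S_i \cap U| \leq n/(\eps \cdot \topt)$, so that the restriction of any size-$\topt$ optimal cover of the original instance to the current $U$ remains a cover of $U$ of size at most $\topt$. In particular, the optimal cover $\OPT'$ of $\Usmpl$ computed in step~3(c) has $|\OPT'| \leq \topt$.

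The per-iteration analysis is the heart of the proof, and is a standard element-sampling argument. Fix the current $U$ at the start of an iteration and let $\Usmpl$ be the random subset of $U$ obtained by including each element independently w.p.\ $p = 16\topt \log m / n^{1-1/\alpha}$. The key claim is that for a suitable threshold $\tau$, with probability at least $1 - 1/(10\alpha m)$ no collection $B' \subseteq [m]$ of at most $\topt$ sets that covers $\Usmpl$ leaves more than $\tau$ elements of $U$ uncovered. To establish this, fix such a $B'$ and write $W_{B'} := U \setminus \bigcup_{i \in B'} S_i$. The event ``$B'$ covers $\Usmpl$'' is exactly the event $W_{B'} \cap \Usmpl = \emptyset$, which (by the independent sampling of elements into $\Usmpl$) has probability $(1-p)^{|W_{B'}|} \leq e^{-p\tau}$ whenever $|W_{B'}| \geq \tau$. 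A union bound over the at most $m^{\topt}$ candidate collections $B'$ bounds the bad event by $m^{\topt} \cdot e^{-p\tau}$, and choosing $\tau$ of order $\topt \log m / p = \Theta(n^{1-1/\alpha})$ makes this at most $1/(10\alpha m)$. Since the algorithm picks the optimal $\OPT'$, which covers $\Usmpl$ and has size at most $\topt$, we conclude that $\OPT'$ itself leaves fewer than $\tau$ elements of $U$ uncovered, so after adding $\OPT'$ to $\sol$ the uncovered set shrinks to size less than $\tau$.

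Composing these bounds by a union bound over the $\alpha$ iterations yields total failure probability at most $\alpha \cdot 1/(10\alpha m) \leq 1/m$. Chaining the per-iteration residual bounds should then drive $|U|$ down to zero after $\alpha$ iterations, so $\sol$ covers $[n]$ with probability at least $1 - 1/m$. The main obstacle I anticipate is in the chaining step: the union bound above gives an \emph{absolute} residual bound of $\tau \gtrsim n^{1-1/\alpha}$ per iteration, and translating this into an iterated multiplicative shrinkage whose $\alpha$-fold composition lands below $1$ requires either reading the sampling rate $p$ adaptively against the current $|U|$, or exploiting the first-pass guarantee $|S_i \cap U| \leq n/(\eps \cdot \topt)$ to show that once $|U|$ has dropped below $n^{1-1/\alpha}$, the sampled $\Usmpl$ essentially captures all of $U$ and subsequent iterations finish the cover. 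Carefully setting up the induction hypothesis to accommodate both the ``large $|U|$'' and ``small $|U|$'' regimes is where the proof has to be done with care.
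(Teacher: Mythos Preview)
Your approach is essentially identical to the paper's: both invoke the element-sampling argument (which the paper packages as Lemma~\ref{lem:element-sampling}) and then union-bound over the $\alpha$ iterations. The paper's proof is brief; it simply asserts that with $k=\topt$ and $\rho=n^{-1/\alpha}$, ``after each iteration, the number of uncovered elements in $U$ reduces to $|U|/n^{1/\alpha}$,'' and then chains this $\alpha$ times.

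The obstacle you flag in your final paragraph is real, and the paper glosses over it. With the \emph{fixed} sampling rate $p = 16\,\topt\log m / n^{1-1/\alpha}$, the element-sampling lemma applied to the current universe $U$ only guarantees an \emph{absolute} residual of order $n^{1-1/\alpha}$: to get $\rho|U|$ uncovered elements one needs $p \geq 16\,\topt\log m/(\rho|U|)$, so with $p$ fixed the best $\rho$ one can afford is $\rho = n^{1-1/\alpha}/|U|$, giving residual $\rho|U| = n^{1-1/\alpha}$ regardless of $|U|$. This is exactly the issue you raise, and it is not addressed in the paper's proof either. Your instinct that the sampling rate should be read adaptively against $|U|$ is the natural fix for correctness (though one then has to recheck the space bound); the paper does not do this, so its proof of this lemma is incomplete on the same point you identify. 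Your second suggested workaround, via the first-pass guarantee $|S_i\cap U|\le n/(\eps\,\topt)$, does not obviously help, since that bound constrains individual set intersections rather than $|U|$ or the sampling.
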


To prove Lemma~\ref{lem:feasible}, we use the following property of element sampling that first appeared in~\cite{DemaineIMV14} (similar ideas also appear in~\cite{McGregorVu16,HarPeledIMV16}); 
for completeness we provide a self-contained proof of this lemma here. 

\begin{lemma}\label{lem:element-sampling}
	Let $0 < \rho < 1$ be a parameter and $\SS = (S_1,\ldots,S_m)$ be a collection of $m$ subsets of $[n]$ with $\opt(\SS) \leq k$. Suppose $\Usmpl$ is a subset of $[n]$ obtained by picking
	each element independently and w.p. $p \geq 16  \cdot k \cdot \log{m}/(\rho \cdot n)$; then, w.p. $1-1/m^2$, any collection of $k$ sets in $\SS$ 
	that covers $\Usmpl$ entirely also covers at least $\paren{1-\rho} \cdot n$ elements in $[n]$. 
\end{lemma}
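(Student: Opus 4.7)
My plan is a standard union-bound argument: I will call a $k$-collection of sets \emph{bad} if its union misses more than $\rho n$ elements of $[n]$, and show that with probability $1-1/m^2$ no bad collection can accidentally cover the random sample $\Usmpl$ entirely. The hypothesis $\opt(\SS) \leq k$ is not used inside the probabilistic argument; it only justifies the parameter regime in the lemma statement.

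Formally, fix any collection $\SC$ of $k$ sets from $\SS$ with $|C(\SC)| < (1-\rho) n$, and let $W := [n] \setminus C(\SC)$, so $|W| > \rho n$. The event ``$\SC$ covers $\Usmpl$ entirely'' is the same as $\Usmpl \cap W = \emptyset$. Since each element of $[n]$ is included in $\Usmpl$ independently with probability $p$, the probability of this event equals $(1-p)^{|W|} \leq (1-p)^{\rho n} \leq e^{-p\rho n}$. Plugging in $p \geq 16 k \log m / (\rho n)$ gives an upper bound of $e^{-16 k \log m} = m^{-16 k}$.

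Next I would take a union bound over all bad collections. The number of size-$k$ subsets of $\SS$ is at most $\binom{m}{k} \leq m^k$, so the probability that \emph{some} bad collection fully covers $\Usmpl$ is at most $m^k \cdot m^{-16 k} = m^{-15 k} \leq 1/m^2$ (using $k \geq 1$). On the complementary high-probability event, every $k$-collection that covers $\Usmpl$ must fail to be bad, i.e., must cover at least $(1-\rho) n$ elements of $[n]$, which is exactly what the lemma asserts.

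The main (minor) obstacle is purely arithmetic: one must verify that the constant $16$ in the sampling rate is large enough so that the per-collection failure probability $e^{-p \rho n}$ dominates the $\binom{m}{k}$ factor coming from the union bound and still leaves a $1/m^2$ slack. The calculation above shows this cleanly with plenty of room to spare, so no subtler device (such as a chaining or VC-style argument) is needed.
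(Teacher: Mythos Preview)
Your proof is correct and is essentially identical to the paper's own argument: fix a bad $k$-collection, bound the probability it covers $\Usmpl$ by $(1-p)^{\rho n} \le e^{-16k\log m}$, and union-bound over the at most $\binom{m}{k}\le m^k$ such collections. Your observation that the hypothesis $\opt(\SS)\le k$ plays no role in the probabilistic step is also accurate.
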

\begin{proof}
	Fix a collection $\SC$ of $k$ subsets in $\SS$ that covers less than $(1-\rho) \cdot n$ elements in $[n]$. The probability that this collection covers $\Usmpl$ entirely is equal to the probability that none of the 
	$\rho \cdot n$ elements in $[n]$ that are not appearing in $\SC$ are sampled in $\Usmpl$. Hence, 
	\begin{align*}
		\Pr{\text{$\SC$ covers $\Usmpl$}} &\leq \paren{1-p}^{\rho \cdot n} \leq \exp\paren{-\paren{16 \cdot k \cdot \log{m}/(\rho \cdot n)} \cdot (\rho \cdot n)} \leq 1/m^{8k}
	\end{align*}
	Taking a union bound over all ${{m} \choose {k}} \leq m^{k}$ possible choices for $\SC$ finalizes the result. 
\end{proof}
\begin{proof}[Proof of Lemma~\ref{lem:feasible}]
	In each of the $\alpha$ iterations, Algorithm~\ref{alg1} implements the sampling in Lemma~\ref{lem:element-sampling} with the parameters $k = \topt$, and $\rho = n^{-1/\alpha}$. Hence, after 
	each iteration, the number of uncovered elements in $U$ reduces to $\card{U}/n^{1/\alpha}$ w.p. $1-1/m^{2}$. Consequently, by taking a union bound over the $\alpha \leq m$ iterations, 
	after the $\alpha$ iterations, number of uncovered elements reduces to less than $1$, hence proving the lemma. 
\end{proof}

We now conclude the proof of Theorem~\ref{thm:sc-upper}. 
\begin{proof}[Proof of Theorem~\ref{thm:sc-upper}]
	We can run Algorithm~\ref{alg1} in 
	parallel for $O(\log{n}/\eps)$ possible guesses for $\topt$. By Lemma~\ref{lem:space-requirement}, the space requirement of this algorithm is $\Ot(1/\eps) \cdot \Ot(mn^{1/\alpha}/\eps + n)$ as desired.
	Moreover, consider the guess: $\opt \leq \topt \leq (1+\eps)\cdot\topt$. For this choice, we can apply Lemma~\ref{lem:apx-guarantee} and Lemma~\ref{lem:feasible} and obtain that the returned solution is an 
	$(\alpha + O(\eps))$-approximation of the optimal set cover. Since the algorithm can make sure that the returned solution is always feasible, returning the smallest set cover among all guesses for $\topt$ 
	then ensures that the returned answer is an $(\alpha + O(\eps))$ approximation. Re-parameterizing $\eps$ by a constant factor, finalizes the proof. 
\end{proof}

\newcommand{\GHD}{\ensuremath{\textnormal{\textsf{GHD}}}\xspace}
\newcommand{\MaxCover}{\ensuremath{\textnormal{\textsf{MaxCover}}}\xspace}

\newcommand{\distU}{\ensuremath{\mathcal{U}}}

\newcommand{\distGHDY}{\ensuremath{\dist}_{\GHD}^{\textnormal{\textsf{Y}}}}
\newcommand{\distGHDN}{\ensuremath{\dist}_{\GHD}^{\textnormal{\textsf{N}}}}
\newcommand{\distGHD}{\ensuremath{\dist}_{\GHD}}

\newcommand{\protGHD}{\ensuremath{\prot}_{\GHD}}

\section{The Space-Approximation Tradeoff for Maximum Coverage} \label{SEC:K-COVER}

In this section, we prove a space-approximation tradeoff for the maximum coverage problem.

\begin{theorem}\label{thm:kc-lower}
  For any $\eps = \omega(1/\sqrt{n})$, $m = \poly(n)$, and $p \geq 1$, any \emph{randomized} algorithm that can make $p$ passes over any collection of $m$ subsets of $[n]$ presented in a 
  {random order stream} and outputs a $(1-\eps)$-approximation to the optimal value of the maximum coverage problem for $k=2$ with a sufficiently large constant probability (over the randomness of both the
  stream order and the algorithm) must use $\Omgt(m/(\eps^2 \cdot p))$ space. 
\end{theorem}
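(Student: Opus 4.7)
\medskip\noindent\textbf{Proof plan for Theorem~\ref{thm:kc-lower}.}

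The plan is to mirror the structure of the proof of Theorem~\ref{thm:sc-lower} but to replace the underlying communication primitive with Gap Hamming Distance (\GHD) in order to get the $1/\eps^2$ dependence. Recall that in \GHD, Alice holds $x \in \{0,1\}^t$ and Bob holds $y\in\{0,1\}^t$ (drawn nearly uniformly), and must distinguish between $|x\wedge y|\le t/4-\sqrt{t}$ (\Yes) and $|x\wedge y|\ge t/4+\sqrt{t}$ (\No); a well-known line of work gives a two-sided $\Omega(t)$ bound on its internal information complexity on the natural hard distribution $\distGHD$. I will choose $t=\Theta(1/\eps^2)$ throughout.

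I will first build a hard distribution $\dist_{\MaxCover}$ that is the direct analogue of $\distSC$. For each $i\in[m]$, sample a \GHD-instance $(x_i,y_i)$ and an independent uniformly random mapping-extension $f_i:[t]\to 2^{[n]}$ (Definition~\ref{def:p-extension}); set $S_i:= f_i(\{j:x_i^j=0\})$ and $T_i:= f_i(\{j:y_i^j=0\})$, so that
\[
|S_i\cup T_i| \;=\; n \;-\; (n/t)\cdot |x_i\wedge y_i|.
\]
A bit $\theta\in_R\{0,1\}$ decides the mode: if $\theta=0$, all $m$ instances are drawn from the \No side of $\distGHD$, whereas if $\theta=1$, a uniformly random index $\istar\in[m]$ is resampled from the \Yes side. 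Give Alice the collection $\{S_i\}$ and Bob the collection $\{T_i\}$. By construction, in mode $\theta=1$ the planted pair satisfies $|S_{\istar}\cup T_{\istar}|\ge 3n/4+n/\sqrt{t}=3n/4+\Theta(\eps n)$, while in mode $\theta=0$ every same-index pair has $|S_i\cup T_i|\le 3n/4-\Theta(\eps n)$.

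The next step is to show that a $(1-\eps)$-approximation for $k=2$ on this distribution must distinguish the two modes. The only thing to rule out is that some ``mix-match'' pair $(S_i,T_j)$, $(S_i,S_j)$, or $(T_i,T_j)$ with $i\ne j$ accidentally covers too many elements in mode $\theta=0$. Because such pairs involve independent random mapping-extensions $f_i,f_j$ and bit vectors of density roughly $1/2$, a direct application of Lemma~\ref{lem:coverage-lemma} (or the Chernoff bound of Proposition~\ref{prop:chernoff}) gives $|Z_i\cup Z_j|\le 3n/4+O(\sqrt{n\log m})$ simultaneously over all $O(m^2)$ such pairs w.h.p. Since $\eps=\omega(1/\sqrt{n})$, we have $\eps n \gg \sqrt{n\log m}$, so the planted pair strictly dominates every alternative by more than an $\eps$ fraction of the optimum, and any $(1-c\eps)$-approximation protocol (for a suitable constant $c$) on $\dist_{\MaxCover}$ must determine~$\theta$.

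The rest follows the exact template of Section~\ref{SEC:SET-COVER}. First, a direct-sum reduction identical in form to Lemma~\ref{lem:direct-sum}: given a $\delta$-error protocol $\prot_{\MaxCover}$, build a \GHD-protocol $\protGHD$ that plants the input $(A,B)$ at a publicly-chosen coordinate $\istar$, fills the remaining $m-1$ coordinates from $\distGHDN$ using public/private randomness exactly as in the \Disj\ reduction, and runs $\prot_{\MaxCover}$. The same chain-rule calculation (using that under the \No-distribution all other coordinates are drawn i.i.d.\ and hence independent of the event $I=i$) yields $\ICost{\protGHD}{\distGHDN}=O(1)\cdot \|\prot_{\MaxCover}\|/m$. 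Second, apply Lemma~\ref{lem:one-sided-two-sided} to pass from the two-sided $\Omega(t)$ \GHD\ lower bound to an $\Omega(t)$ lower bound on the \No-side information cost, concluding $\|\prot_{\MaxCover}\|=\Omgt(mt)=\Omgt(m/\eps^2)$. Finally, use the random-partitioning reduction of Lemma~\ref{lem:dist-random} verbatim to upgrade the lower bound to randomly-partitioned (equivalently, random-order) inputs, and convert a $p$-pass $s$-space streaming algorithm into an $O(ps)$-bit two-player protocol to obtain the claimed $\Omgt(m/(\eps^2\cdot p))$ bound.

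The main technical obstacle is the odometer step: I need a two-sided information lower bound for \GHD\ on $\distGHDN$ analogous to Lemma~\ref{lem:disjointness}. The known $\Omega(t)$ bound is stated on $\distGHDY$ (or on $\distGHD$), so I would transport it to $\distGHDN$ by the same contradiction argument used in the proof of Lemma~\ref{lem:disjointness}: if $\ICost{\protGHD}{\distGHDN}=o(t)$, then an odometer-truncated protocol $\protGHD'$ would have $\ICost{\protGHD'}{\distGHD}=o(t)$, contradicting the symmetric lower bound. A secondary subtlety, which I do not expect to be hard but needs to be verified, is that the marginal distribution of $(x_i,y_i)$ induced by our embedding is exactly (or close enough in total variation to) the hard distribution $\distGHD$ on which the information lower bound holds; this is why the mapping-extension $f_i$ is chosen uniformly at random and independently of $(x_i,y_i)$, mirroring the analogous design choice in $\distSC$.
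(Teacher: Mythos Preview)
Your plan is correct in spirit and follows the same high-level template as the paper (reduce to \GHD, direct-sum over $m$ coordinates, odometer to pass to the \No-side, random partitioning). However, your hard distribution differs from the paper's in a way worth noting.

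You embed each $\GHD_t$ instance into the full universe $[n]$ via a random mapping-extension $f_i$, exactly mirroring $\distSC$. The paper instead works over a small universe of size $\Theta(1/\eps^2)$ and uses a \emph{two-part gadget}: it splits the universe into $U_1$ (of size $t_1=1/\eps^2$, carrying the $\GHD$ instance $(A_i,B_i)$) and $U_2$ (of size $10t_1$), and on $U_2$ it places a uniformly random \emph{partition} $(C_i,D_i)$ with $C_i\cup D_i=U_2$, setting $S_i=A_i\cup C_i$ and $T_i=B_i\cup D_i$. The point of the gadget is that every same-index pair covers all of $U_2$ deterministically, whereas any mix-and-match pair $(Z_i,Z_j)$ with $i\neq j$ covers only about $3/4$ of $U_2$; this already separates same-index from mix-match by a \emph{constant} fraction of the optimum, independent of $\eps$. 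In your construction the same separation is only $\Theta(\eps n)$ versus the $O(\sqrt{n\log m})$ concentration slack, so your implication ``$\eps=\omega(1/\sqrt n)\Rightarrow \eps n\gg\sqrt{n\log m}$'' actually needs $\eps=\omega(\sqrt{\log n}/\sqrt{n})$; harmless for a $\widetilde\Omega$ statement, but the paper's gadget sidesteps this entirely and makes Lemma~\ref{lem:opt-kc} a two-line computation.

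There is also a \GHD-specific subtlety you flag but do not resolve: the textbook $\Omega(t)$ bound (Lemma~\ref{lem:ic-ghd-uniform}) is for the \emph{uniform} distribution $\distU$, not for a \Yes/\No-conditioned one. The paper first shows, by an averaging argument in which the players exchange $\card{A},\card{B}$ as a short header, that there exist fixed sizes $a,b$ with $\IC{\GHD}{\distU(a,b)}{\delta}=\Omega(t)$ (Claim~\ref{clm:ghd-a-b}); it then defines $\distGHDY$ and $\distGHDN$ as $\distU(a,b)$ conditioned on the \Yes/\No event and applies the odometer (Lemma~\ref{lem:one-sided-two-sided}) there. Fixing $a,b$ is exactly what makes all ``filler'' coordinates i.i.d.\ from a single $\distGHDN$ (so the $I=i$ conditioning drops out in the direct-sum step) and what makes $\card{S_i\cup T_i}$ depend only on $\Delta(A_i,B_i)$. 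Your plan would need the same step to go through cleanly.
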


Similar to previous section, we prove Theorem~\ref{thm:kc-lower} by considering the communication complexity of the maximum coverage problem: Fix a (sufficiently large) $n$, $\eps = \omega(1/\sqrt{n})$ and
$m = \poly(n)$; $\MaxCover$ refers to the communication problem of $(1-\eps)$-approximating the optimal value of the maximum coverage problem with $2m$ sets defined over the universe $[n]$ and parameter $k=2$, in the two-player communication model.

Our lower bound for $\MaxCover$ is obtained by reducing this problem to multiple instances of the \emph{gap-hamming-distance} problem via a similar distribution as $\distSC$ (using an additional simple gadget). 
In the following, we first introduce the gap-hamming-distance problem and prove a useful lemma on its information complexity on particular distributions required for our reduction, and then describe a hard distribution
for \MaxCover based on this and finalize the proof of Theorem~\ref{thm:kc-lower}. 

\subsection{The Gap-Hamming-Distance Problem}\label{sec:ghd}

The \emph{gap-hamming-distance} ($\GHD$) problem is defined as follows. Fix an integer $t \geq 1$; in $\GHD_{t}$, Alice is given a set $A \subseteq [t]$, Bob 
is given a set $B \subseteq [t]$ and the goal is to output: 
\begin{align*}
	\GHD(A,B) = 
	\begin{cases} 
	\Yes~~~&\Delta(A,B) \geq t/2 + \sqrt{t} \\
	\No~~~&\Delta(A,B) \leq t/2 - \sqrt{t} \\
	\star~~~&\textnormal{otherwise}
	\end{cases}
\end{align*}
where $\star$ means that the answer can be arbitrary; here $\Delta(A,B)$ denotes the hamming distance between $A$ and $B$, i.e., the size of the symmetric difference of $A$ and $B$. 

This problem was originally introduced by~\cite{IndykW03} and has been studied extensively in the literature (see~\cite{ChakrabartiR11} and 
references therein). We use the following result on the information complexity of this problem proven in~\cite{BravermanGPW13sr}\footnote{Technically speaking,~\cite{BravermanGPW13sr} bounds
the \emph{external} information complexity of $\GHD$ as opposed to its \emph{internal} information complexity used in our paper. However, since the distribution in Lemma~\ref{lem:ic-ghd-uniform} is a product 
distribution, these two quantities are equal and hence we simply state the bound for the internal information complexity.}.
\begin{lemma}[\!\!\cite{BravermanGPW13sr}]\label{lem:ic-ghd-uniform}
	Let $\distU$ be the uniform distribution on pairs of subsets of $[t]$ (chosen independently); there exists an absolute
	 constant $\delta > 0$ such that \[\IC{\GHD}{\distU}{\delta} = \Omega(t).\]
\end{lemma}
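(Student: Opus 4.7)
The plan is to follow the information-theoretic route of Braverman--Garg--Pankratov--Weinstein. Since $\distU$ is a product distribution, internal and external information complexities coincide, so it suffices to lower bound $\ICost{\protGHD}{\distU}$ for an arbitrary $\delta$-error protocol $\protGHD$. Write $A=(A_1,\ldots,A_t)$, $B=(B_1,\ldots,B_t)$ and $C_i = \mathbf{1}[A_i \neq B_i]$; the output of $\GHD$ depends only on whether the sum $S = \sum_i C_i$ is larger than $t/2+\sqrt{t}$ or smaller than $t/2-\sqrt{t}$. By the chain rule for mutual information,
\[\ICost{\protGHD}{\distU} \;=\; \sum_{i=1}^{t}\Paren{\II(\Prot : A_i \mid A^{<i}, B) + \II(\Prot : B_i \mid A, B^{<i})},\]
so the goal is to show that the average per-coordinate contribution is $\Omega(1)$.

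The next step is a direct-sum style embedding: for a uniformly random coordinate $\istar \in [t]$, consider the single-coordinate task in which Alice and Bob receive only $(A_{\istar},B_{\istar})\in\{0,1\}^2$ and sample the remaining $2(t-1)$ bits from public randomness according to $\distU$, then run $\protGHD$. By the decomposition above, the information that the transcript reveals about $(A_{\istar},B_{\istar})$ in this single-coordinate task equals $\frac{1}{t}\cdot\ICost{\protGHD}{\distU}$. Hence it suffices to prove an $\Omega(1)$ lower bound on the information cost of any constant-error $\GHD$ protocol when restricted to a single uniformly random coordinate, conditioned on the joint statistics of the other coordinates.

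The main obstacle, and the heart of the BGPW argument, is that the per-coordinate ``signal'' is only of order $1/\sqrt{t}$: since $S$ concentrates around $t/2$ within $\Theta(\sqrt{t})$, a naive Pinsker-style estimate per coordinate yields only $\Omega(1/t)$ bits, summing to $\Omega(1)$ rather than $\Omega(t)$. To amplify this, I would pass to the Hellinger formulation of mutual information, $\II(\Prot:A_{\istar}\mid B_{\istar}) = \Omega\Paren{h^2(\Prot_{A_{\istar}=0},\,\Prot_{A_{\istar}=1})}$, and invoke the cut-and-paste property of randomized protocols, $h(\Prot_{a,b},\Prot_{a',b'}) = h(\Prot_{a,b'},\Prot_{a',b})$, to relate the single-coordinate Hellinger distance directly to the distance between transcript distributions on whole \Yes- and \No-instances ($\distGHDY$ and $\distGHDN$). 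Combining this with anticoncentration of $S$ about $t/2$, one can argue that if an average $o(1)$ fraction of coordinates carry $o(1)$ bits of per-coordinate information then the transcript distributions on $\distGHDY$ and $\distGHDN$ would be within $o(1)$ in total variation, contradicting the constant-error solvability of $\GHD$. Summing the resulting $\Omega(1)$ per-coordinate bounds yields $\ICost{\protGHD}{\distU}=\Omega(t)$, establishing the lemma.
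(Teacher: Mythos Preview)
The paper does not prove this lemma; it is quoted verbatim from~\cite{BravermanGPW13sr}, with a footnote noting that the cited result is stated for \emph{external} information cost and that external and internal information cost coincide on the product distribution $\distU$. There is therefore no in-paper argument to compare your attempt against.

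As a reconstruction of the cited result, your outline is pointed in the right direction: the chain-rule decomposition into per-coordinate terms is correct, and you correctly isolate the central obstacle, namely that the per-coordinate ``signal'' is only $\Theta(1/\sqrt{t})$ so a naive Pinsker bound yields $\Omega(1)$ total rather than $\Omega(t)$. However, your final paragraph is where the entire technical content lives, and as written it is a plan rather than a proof. The sentence ``relate the single-coordinate Hellinger distance directly to the distance between transcript distributions on whole \Yes- and \No-instances'' hides a delicate hybrid argument: cut-and-paste applies to full inputs, not to a single coordinate with the rest held fixed by public randomness, and chaining $\Theta(\sqrt{t})$ coordinate flips via Hellinger triangle inequalities while invoking anticoncentration of $\sum_i C_i$ requires exactly the machinery that makes the cited proof nontrivial. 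For the purposes of the present paper the citation is all that is needed; if you want a self-contained argument you would have to expand that last paragraph substantially along the lines of~\cite{BravermanGPW13sr}.
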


For our purpose, we need to consider the following distribution $\distGHD$ for $\GHD$ instead of the uniform distribution. 
Let $a,b \in [t]$ be two parameters to be determined later\footnote{The values of $a$ and $b$ are not important for our purpose and are hence only determined in
the proof of Lemma~\ref{lem:ic-ghd}.}. Define: 
\begin{itemize}
	\item $\distGHDY$ as the distribution of instances $(A,B) \sim \distU \mid \Delta(A,B) \geq t/2 + \sqrt{t}, \card{A} = a,\card{B} = b$.
	\item $\distGHDN$ as the distribution of instances $(A,B) \sim \distU \mid \Delta(A,B) \leq t/2 - \sqrt{t}, \card{A} = a, \card{B} = b$. 
	\item $\distGHD:= \frac{1}{2} \cdot \distGHDY + \frac{1}{2} \cdot \distGHDN$.
\end{itemize}

We use Lemma~\ref{lem:ic-ghd-uniform} to prove the following result on the information cost of $\delta$-error protocols on the distribution $\distGHD$,
which could be independently useful also. The proof is deferred to Appendix~\ref{app:ic-ghd}. 

\begin{lemma}\label{lem:ic-ghd}
	Let $\delta > 0$ be a sufficiently small constant and $\protGHD$ be a $\delta$-error protocol for $\GHD_t$ on $\distGHD$ with $\norm{\protGHD} = 2^{o(t)}$; then, $\ICost{\protGHD}{\distGHDN} = \Omega(t)$. 
\end{lemma}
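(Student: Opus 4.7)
My plan mirrors the proof of Lemma~\ref{lem:disjointness} in two stages. In Stage~A, I establish $\IC{\GHD_t}{\distGHD}{\delta} = \Omega(t)$. In Stage~B, I transfer this to $\distGHDN$ via the information odometer (Lemma~\ref{lem:one-sided-two-sided}) exactly as in the final step of Lemma~\ref{lem:disjointness}: since $\distGHD = \tfrac12 \distGHDY + \tfrac12 \distGHDN$ we have $\distGHD \mid \GHD^{-1}(\No) = \distGHDN$, so if $\ICost{\protGHD}{\distGHDN}$ were $o(t)$, Lemma~\ref{lem:one-sided-two-sided} applied with $F = \GHD$ and $\dist = \distGHD$ would produce a protocol for $\GHD$ on $\distGHD$ with info cost $O(\ICost{\protGHD}{\distGHDN} + \log\norm{\protGHD}) = o(t)$ (using the hypothesis $\norm{\protGHD} = 2^{o(t)}$), contradicting Stage~A.

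For Stage~A, I choose $a = b = \lceil t/2 \rceil$, so that under the slice distribution $\distU_{a,b}$ (uniform on $\binom{[t]}{a} \times \binom{[t]}{b}$), the Hamming distance $\Delta(A,B)$ has mean $t/2$ and variance $\Theta(t)$, and hence by anti-concentration both the Yes and No regions have constant mass $p_Y, p_N = \Omega(1)$. Following the bar-hopping calculation from the proof of Lemma~\ref{lem:disjointness}, Fact~\ref{fact:bar-hopping} applied with the uniform binary branch indicator of $\distGHD$ gives $\ICost{\prot}{\distGHD} \geq \tfrac12 (\ICost{\prot}{\distGHDY} + \ICost{\prot}{\distGHDN}) - O(1)$, reducing the task to showing $\ICost{\prot}{\distGHDY} + \ICost{\prot}{\distGHDN} = \Omega(t)$. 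Since $\distU_{a,b} \mid \mathrm{gap}$ is the constant-weight mixture $\frac{p_Y}{p_Y+p_N} \distGHDY + \frac{p_N}{p_Y+p_N} \distGHDN$, a second application of Fact~\ref{fact:bar-hopping} yields $\ICost{\prot}{\distU_{a,b}\mid\mathrm{gap}} = \Theta(\ICost{\prot}{\distGHDY} + \ICost{\prot}{\distGHDN}) \pm O(1)$, so it suffices to establish $\IC{\GHD}{\distU_{a,b}\mid\mathrm{gap}}{\delta''} = \Omega(t)$.

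This last bound is intended to follow from Lemma~\ref{lem:ic-ghd-uniform} via a chain-rule slice-averaging argument, $\ICost{\prot^*}{\distU} \leq 2\log(t+1) + \Exp_{L_A, L_B}\bracket{\ICost{\prot^*}{\distU_{L_A, L_B}}}$, combined with the permutation symmetry of $\distU$ (which makes all typical slices have essentially identical info complexity), and applied to protocols for $\GHD$ on $\distU \mid \mathrm{gap}$ (noting that any $\delta$-error protocol on $\distU\mid\mathrm{gap}$ is a $(p_\mathrm{gap} \cdot \delta)$-error protocol on $\distU$ since $\star$ inputs admit arbitrary answers). The main technical obstacle is handling the conditioning on the gap event at the info-cost level: a priori a protocol could ``hide'' its information revelation on the constant-mass $\star$ region, making its info cost on $\distU$ equal to $\Omega(t)$ while its info cost on $\distU\mid\mathrm{gap}$ is only $o(t)$. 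I plan to address this either by directly verifying that the proof of Lemma~\ref{lem:ic-ghd-uniform} is robust to conditioning on the permutation-invariant gap event, or by invoking the information-odometer trick once more to suppress any excess information revelation outside the gap region.
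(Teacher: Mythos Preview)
Your Stage~B is identical to the paper's: apply Lemma~\ref{lem:one-sided-two-sided} to convert low info cost on the $\No$-conditioned distribution into low info cost on the ambient distribution, then invoke the Stage~A hardness. The divergence is entirely in Stage~A, and here the paper takes a shorter and cleaner route than your plan.

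The paper does \emph{not} prove $\IC{\GHD}{\distGHD}{\delta}=\Omega(t)$; instead, it proves the hardness of the \emph{full slice} $\distU(a,b):=\distU\mid\card{A}=a,\card{B}=b$ (including the $\star$ region), for some existentially chosen pair $(a,b)$. The argument is a simple averaging: if every slice were easy, one could solve $\GHD$ on $\distU$ by first exchanging $\card{A},\card{B}$ (costing $O(\log t)$ information) and then running the good slice-protocol, contradicting Lemma~\ref{lem:ic-ghd-uniform}. Crucially, this bypasses the gap-conditioning obstacle you flag as the ``main technical obstacle'': because the hard object is $\distU(a,b)$ rather than $\distGHD=\distU(a,b)\mid\mathrm{gap}$, there is nothing to say about what a protocol might do on the $\star$ region. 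Then in Stage~B the paper applies the odometer with ambient distribution $\distU(a,b)$ (not $\distGHD$), using that $\distU(a,b)\mid\GHD^{-1}(\No)=\distGHDN$ and that $\protGHD$, being $\delta$-error on $\distGHD$, is automatically $O(\delta)$-error on $\distU(a,b)$ since $\star$-instances admit any answer.

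Your route---fixing $a=b=\lceil t/2\rceil$ and trying to show $\distGHD$ itself is hard---buys nothing and costs two things: (i) the averaging argument gives only \emph{some} hard slice, and ``permutation symmetry makes all typical slices have essentially identical info complexity'' is a heuristic, not a proof (permutations act transitively within a slice but not across slices); (ii) you then need a second odometer-type step to handle the gap conditioning, whereas the paper needs only one. Your option~(b) would ultimately work, but once you apply the odometer to pass from $\distGHD$ to the full slice $\distU(a,b)$, you have effectively rediscovered the paper's argument with an extra detour through $\distGHD$ that serves no purpose. The cleaner fix is to drop the insistence on $a=b=\lceil t/2\rceil$, take the existential hard slice from averaging as your Stage~A target, and apply the odometer once with ambient distribution $\distU(a,b)$.
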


\subsection{Communication Complexity of \MaxCover}\label{sec:cc-k-cover}

We are now ready to prove a lower bound on the communication complexity of the \MaxCover problem. To do so, we propose the
following distribution. 

\newcommand{\distKC}{\ensuremath{\dist_{\textnormal{\textsf{MC}}}}}
\newcommand{\protKC}{\ensuremath{\prot_{\textnormal{\textsf{MC}}}}}

\textbox{Distribution \distKC. {\textnormal{A hard input distribution for $\MaxCover$.}}} {

\medskip

\textbf{Notation.} Let $t_1:= 1/\eps^2$, $t_2:= 10 \cdot t_1$, $U_1:= [t_1]$ and $U_2 := [t_1+1,t_1+t_2]$.  
\begin{itemize}
	\item For each $i \in [m]$: 
	\begin{itemize}
		\item Let $(A_i,B_i) \sim \distGHDN$ for $\GHD_{t_1}$ on the universe $U_1$. 
		\item Create $C_i,D_i \subseteq U_2$, by assigning each element in $U_2$ w.p. $1/2$ to $C_i$ and o.w. to $D_i$. 	
		\item Let $S_i := A_i \cup C_i$ and $T_i := B_i \cup D_i$. 
	\end{itemize}
	\item Pick $\theta \in_R \set{0,1}$ uniformly at random. If $\theta = 0$, do nothing, otherwise: 
	\begin{itemize}
		\item Sample $\istar \in_R [m]$ uniformly at random.
		\item Resample $(A_{\istar},B_{\istar}) \sim \distGHDY$ for $\GHD_{t_1}$ and redefine $S_{\istar}$ and $T_{\istar}$ as before using the new pair $(A_{\istar},B_{\istar})$ (do not change $C_i$ and $D_i$). 
	\end{itemize}
	\item Let the input to Alice and Bob be $\SS:=\set{S_i}_{i\in[m]}$ and $\ST:=\set{T_i}_{i\in[m]}$, respectively. 
\end{itemize}
}

Define $\opt(\SS,\ST)$ as the value of the optimal solution of the maximum coverage problem (for the parameter $k=2$) for the instance $(\SS,\ST)$. We wish to argue that 
$\opt(\SS,\ST)$ differs by a $(1\pm \eps)$ factor depending on the choice of $\theta$ in the distribution and hence any $(1-\eps)$ approximation algorithm for maximum coverage on this distribution
needs to determine the value of $\theta$.  
\begin{lemma}\label{lem:opt-kc}
	Assuming $\eps = o(1/\log{n})$, there exists a fixed $\tau \in [n]$ such that for any instance $(\SS,\ST) \sim \distKC$: 
	\begin{align*}
		&\Pr{\opt(\SS,\ST) \geq (1+\Theta(\eps)) \cdot \tau \mid \theta = 1} = 1-o(1) \\
		&\Pr{\opt(\SS,\ST) \leq (1-\Theta(\eps)) \cdot \tau \mid \theta = 0} = 1-o(1) 
	\end{align*}
\end{lemma}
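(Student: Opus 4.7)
The proof plan rests on a decomposition of pair-unions combined with a single Chernoff argument. Because $U_1$ and $U_2$ are disjoint, every pair $(X, Y)$ with $X, Y \in \SS \cup \ST$ satisfies $|X \cup Y| = |(X \cup Y) \cap U_1| + |(X \cup Y) \cap U_2|$. The crucial asymmetry built into the distribution is that for a same-index pair $(S_i, T_i)$ the $U_2$-contribution equals $t_2$ \emph{deterministically} (since $C_i, D_i$ partition $U_2$), whereas for any other pair each element of $U_2$ lies in the union independently with probability $3/4$, so the $U_2$-part concentrates around $3 t_2/4$. I would therefore set
\[
\tau := \frac{a + b + t_1/2}{2} + t_2,
\]
the ``baseline'' value of $|S_i \cup T_i|$ when $\Delta(A_i, B_i)$ equals the median value $t_1/2$.

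For $\theta = 1$, the resampling $(A_{\istar}, B_{\istar}) \sim \distGHDY$ guarantees $\Delta(A_{\istar}, B_{\istar}) \geq t_1/2 + \sqrt{t_1}$, so the identity $|X \cup Y| = (|X| + |Y| + \Delta(X, Y))/2$ gives
\[
|S_{\istar} \cup T_{\istar}| = \frac{a + b + \Delta(A_{\istar}, B_{\istar})}{2} + t_2 \geq \tau + \frac{\sqrt{t_1}}{2}.
\]
As $\tau = \Theta(1/\eps^2)$ and $\sqrt{t_1}/2 = 1/(2\eps)$, this deterministically yields $\opt(\SS, \ST) \geq (1 + \Theta(\eps))\, \tau$ whenever $\theta = 1$.

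For $\theta = 0$, I would handle the two types of pairs separately. For every same-index pair, $\distGHDN$ forces $\Delta(A_i, B_i) \leq t_1/2 - \sqrt{t_1}$, hence $|S_i \cup T_i| \leq \tau - \sqrt{t_1}/2$. For any cross pair (either $(S_i, T_j)$, $(S_i, S_j)$, or $(T_i, T_j)$ with $i \neq j$), the $U_1$-part is trivially at most $2a \leq t_1$, and the $U_2$-part is a sum of independent $\mathrm{Bernoulli}(3/4)$ indicators over $t_2$ elements. Proposition~\ref{prop:chernoff} and a union bound over the $O(m^2)$ cross pairs show that all $U_2$-contributions are at most $3 t_2/4 + O(\sqrt{t_2 \log m})$ with probability $1 - o(1)$. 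Because the distribution takes $t_2 = 10\, t_1 = 10/\eps^2$ and the hypothesis $\eps = o(1/\log n)$ gives $\sqrt{t_2 \log m} = O(\sqrt{\log n}/\eps) = o(1/\eps^2)$, the cross-pair bound $t_1 + 3 t_2/4 + O(\sqrt{t_2 \log m})$ is strictly smaller than $\tau - \sqrt{t_1}/2$. Combining, $\opt(\SS, \ST) \leq \tau - \sqrt{t_1}/2 = (1 - \Theta(\eps))\, \tau$ with probability $1 - o(1)$.

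The main obstacle is the cross-pair analysis: one must verify that the $t_2/4$ gap between the same-index and cross $U_2$-contributions simultaneously absorbs both the $U_1$ slack ($O(t_1)$) and the Chernoff fluctuation ($O(\sqrt{t_2 \log m})$) across all $O(m^2)$ pairs. This is precisely the reason the distribution pads the universe with $U_2$ and sets $t_2 = 10\, t_1$ rather than working only on $U_1$; without such padding, cross pairs would compete with same-index pairs and the $\sqrt{t_1}$ gap coming from \GHD would be washed out.
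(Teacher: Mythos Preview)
Your proposal is correct and follows essentially the same approach as the paper: you set the same threshold $\tau = t_2 + (a+b)/2 + t_1/4$, use the identity $|A \cup B| = (|A|+|B|+\Delta(A,B))/2$ for same-index pairs, and bound cross pairs by combining the trivial $U_1$-contribution with a Chernoff bound on the $U_2$-part followed by a union bound over $O(m^2)$ pairs. One minor slip: the bound ``$U_1$-part $\leq 2a \leq t_1$'' is not quite right for pairs involving $B$-sets (and $2a \leq t_1$ is not guaranteed anyway), but the conclusion $\leq |U_1| = t_1$ is immediate and is all that is needed.
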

\begin{proof}
	We first prove that, any $(1-\eps)$-approximate $2$-cover in this distribution always has to pick a pair of $(S_i,T_i)$ sets (for some $i \in [m]$). This is achieved by considering 
	the projection of the sets on the universe $U_2$. 
	\begin{claim}\label{clm:kc-dominant-strategy}
		W.p. $1-o(1)$: 
		\begin{enumerate}[(a)]
			\item For any $i \in [m]$, $\card{S_i \cup T_i} \geq t_2$.  
			\item For any $i \neq j \in [m]$, for any $Z_i \in \set{S_i,T_i}$, and $Z_j \in \set{S_j,T_j}$, $\card{Z_i \cup Z_j} \leq (3/4+0.2) \cdot t_2$.
		\end{enumerate}
	\end{claim}
	\begin{proof}
		Part~$(a)$ follows immediately from the fact that $U_2$ is partitioned between $S_i$ and $T_i$, and that $\card{U_2} = t_2$. We now prove Part~$(b)$. To do so, we prove
		that $Z_i \cup Z_j$ can only cover (essentially) $3/4$ fraction of $U_2$ w.h.p and since the rest of $Z_i \cup Z_j$ is a subset of $U_1$ with $\card{U_1} \leq 0.1 \cdot t_2$, we get the final result. 
		
		For any element $e \in U_2$, define an indicator random variable $X_e \in \set{0,1}$ whereby $X_e = 1$ iff $e \in Z_i \cup Z_j$. Since $i \neq j$, the elements in $Z_i$ and $Z_j$ that are in $U_2$ are chosen
		independent of each other, and hence $\Pr{X_e = 1} = 1 - \paren{1/2}^{2} = 3/4$. Define $X := \sum_{e \in U_2} X_e$; we have $\Ex{X} = 3/4 \cdot t_2$ and since $X_e$ variables are independent, by Chernoff
		bound, $\Pr{X \geq \Ex{X} + 0.1 \cdot t_2} \leq \exp\paren{-c \cdot t_2} = o(1/m^2)$ (as $t_2 = \omega(\log{n})$ and $m = \poly{(n)}$).  The final result now follows from a union bound on all possible ($\leq (2m)^2$) pairs. 
	\end{proof}
	Now consider a pair $(S_i,T_i)$ for some $i \in [m]$ and note that $\card{S_i \cup T_i} = \card{U_2} + \card{A_i \cup B_i} = t_2 + \card{A_i \cup B_i}$; 
	hence we can simply focus on $A_i \cup B_i \subseteq U_1$ part of $S_i \cup T_i$. Moreover, we have that, 
	\begin{align*}
	\card{A_i \cup B_i} &= \frac{1}{2} \cdot \paren{\card{A_i} + \card{B_i} + \Delta(A_i,B_i)} = \frac{1}{2} \cdot \paren{a+b + \Delta(A_i,B_i)}
	\end{align*}
	where we used the fact that in the distribution $\distGHD$, $\card{A_i} =a$ and $\card{B_i} =b$ always. 
	
	Consequently, whenever $(A_i,B_i) \sim \distGHDN$, we have, 
	\begin{align*}
		\card{S_i \cup T_i} &= t_2 + \card{A_i \cup B_i} \leq t_2 + (a+b)/2 +  {t_1}/{4} - \sqrt{t_1}/{2} = (1-\Theta(\eps)) \cdot \tau
	\end{align*}
	for $\tau := t_2 + (a+b)/2 +  {t_1}/{4}$. Similarly, whenever $(A_i,B_i) \sim \distGHDY$, 
	\begin{align*}
		\card{S_i \cup T_i} &\geq t_2 + \card{A_i \cup B_i} \geq t_2 + (a+b)/2 +  {t_1}/{4} + \sqrt{t_1}/{2} = (1+\Theta(\eps)) \cdot \tau
	\end{align*}
	Combining these bounds with Claim~\ref{clm:kc-dominant-strategy} finalizes the proof. 
\end{proof}

Having proved Lemma~\ref{lem:opt-kc}, we can use any $(1-\eps)$-approximation protocol for $\MaxCover$ to determine the parameter $\theta$ in the distribution $\distKC$ (by a simple re-parametrizing of the $\eps$ by 
a constant factor). This allows us to prove the following lemma. The proof is essentially identical to that of Lemma~\ref{lem:direct-sum} in Section~\ref{sec:sc-lower} and is provided only for the sake 
of completeness. 

\begin{lemma}\label{lem:kc-direct-sum}
	Let $\protKC$ be a $\delta$-error protocol for $\MaxCover$ on $\distKC$. There exists a $\paren{\delta+o(1)}$-protocol $\protGHD$ for $\GHD_{t_1}$ on the distribution $\distGHD$ such that:
	\begin{enumerate}
	\item  $\ICost{\protGHD}{\distGHDN} = \frac{O(1)}{m} \cdot \ICost{\protKC}{\distKC}$.
	\item $\norm{\protGHD} = \norm{\protKC}$. 
	\end{enumerate}
\end{lemma}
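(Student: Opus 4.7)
The plan is to mimic the reduction in Lemma~\ref{lem:direct-sum} essentially verbatim: embed a single $\GHD_{t_1}$ instance into one of the $m$ coordinates of a $\distKC$ sample and feed the result to $\protKC$.

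\textbf{Construction of $\protGHD$.} On input $(A,B) \sim \distGHD$, using public randomness the players sample $(i)$ an index $\istar \in_R [m]$; $(ii)$ pairs $(A_j,B_j) \sim \distGHDN$ for every $j < \istar$; and $(iii)$ the gadget bipartitions $(C_i,D_i)$ of $U_2$ for every $i \in [m]$. Using private randomness Alice samples $A^{>\istar}$ (with the correct marginal inherited from $\distGHDN$), and Bob symmetrically samples $B^{<\istar}$. Alice sets $A_{\istar} := A$ and constructs $S_i := A_i \cup C_i$ for every $i$; Bob sets $B_{\istar} := B$ and constructs $T_i := B_i \cup D_i$. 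They invoke $\protKC$ on $(\SS,\ST)$ and declare $\GHD(A,B) = \Yes$ iff $\protKC$ reports an optimum above the threshold $\tau$ implied by Lemma~\ref{lem:opt-kc}.

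\textbf{Correctness and communication cost.} By construction, the joint distribution of $(\SS,\ST)$ is exactly $\distKC$, with $\theta = 0$ iff $(A,B) \sim \distGHDN$. Lemma~\ref{lem:opt-kc} (after rescaling $\eps$ by a constant absorbed into the ``$(1-\eps)$''-approximation) asserts that the threshold separates the two cases w.p.\ $1-o(1)$, so $\protGHD$ errs w.p.\ at most $\delta + o(1)$. No communication is added beyond $\protKC$, giving $\norm{\protGHD} = \norm{\protKC}$.

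\textbf{Information cost.} This portion parallels Lemma~\ref{lem:direct-sum} line-by-line. Let $R$ collect the public randomness $(\istar, A^{<\istar}, B^{>\istar}, \{(C_i,D_i)\}_{i\in[m]})$ and let $\Prot$ denote the transcript concatenated with $R$. By Claim~\ref{clm:public-random},
\begin{align*}
\ICost{\protGHD}{\distGHDN} \;=\; \II_{\distGHDN}(\Prot : A \mid B, R) \;+\; \II_{\distGHDN}(\Prot : B \mid A, R).
\end{align*}
Expanding the first term over the uniform choice of $\istar$ and applying the chain rule (using that under $\distGHDN$ all $(A_j,B_j)$ are i.i.d., so they remain i.i.d.\ conditional on $\istar = i$), one obtains $\II_{\distGHDN}(\Prot : A \mid B, R) \leq \frac{1}{m} \cdot \II_{\distKC}(\Prot : \SS \mid \ST, F_2, \theta = 0)$, where $F_2$ denotes the public $(C_i,D_i)$ gadget. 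Bar-hopping (Fact~\ref{fact:bar-hopping}) with $\HH(\theta) = 1$ converts the $\theta = 0$ conditioning to an unconditioned bound $\frac{2}{m} \cdot \II_{\distKC}(\Prot : \SS \mid \ST, F_2) + O(1/m)$, and Fact~\ref{fact:info-decrease} drops $F_2$ (which is independent of everything else given $\SS,\ST$). Treating $\II_{\distGHDN}(\Prot : B \mid A, R)$ symmetrically and summing gives $\ICost{\protGHD}{\distGHDN} \leq \frac{O(1)}{m} \cdot \ICost{\protKC}{\distKC}$, as required.

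\textbf{Main obstacle.} The only structural difference from Lemma~\ref{lem:direct-sum} is the second-universe gadget $(C_i,D_i)$. Because these bipartitions are public and sampled independently of the $\GHD_{t_1}$ inputs, they behave exactly like additional shared randomness: they enlarge $R$ but drop out of every conditional mutual information step via Fact~\ref{fact:info-decrease}. So the argument is straightforward bookkeeping on top of the Lemma~\ref{lem:direct-sum} calculation; all the genuine content is carried by the sampling recipe together with the approximation gap established in Lemma~\ref{lem:opt-kc}.
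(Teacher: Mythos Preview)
Your proposal is correct and follows the paper's proof essentially verbatim: the same embedding of a single $\GHD_{t_1}$ instance at coordinate $\istar$, the same public/private sampling split, the same chain of mutual-information inequalities (chain rule, Fact~\ref{fact:info-increase}, bar-hopping over $\theta$, then Fact~\ref{fact:info-decrease} to drop the gadget randomness). One small inconsistency to fix: in your construction you say the players publicly sample the \emph{pairs} $(A_j,B_j)$ for $j<\istar$, but in the information-cost analysis you (correctly) take $R=(\istar,A^{<\istar},B^{>\istar},\{(C_i,D_i)\})$; the latter is right---public randomness must be $A^{<\istar}$ and $B^{>\istar}$ separately (with Alice privately completing $A^{>\istar}$ and Bob privately completing $B^{<\istar}$), since $(A_j,B_j)$ are correlated under $\distGHDN$ and the split is what enables the chain-rule step.
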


\newcommand{\ProtGHD}{\ensuremath{\Prot}_{\GHD}}
\newcommand{\ProtKC}{\ensuremath{\Prot}_{\textnormal{\textsf{MC}}}}
	
\renewcommand{\bC}{\bm{C}}
\renewcommand{\bD}{\bm{D}}

\begin{proof}
	We design the protocol $\protGHD$ as follows: 
	\textbox{Protocol $\protGHD$. \textnormal{The protocol for solving $\GHD_{t_1}$ using a protocol $\protKC$ for $\MaxCover$.}}{
	\medskip
	\\ 
	\textbf{Input:} An instance $(A,B) \sim \distGHD$.  \textbf{Output:} $\GHD(A,B)$. \\
	\algline
	
	\begin{enumerate}
		\item Using public randomness, the players sample an index $\istar \in_R [m]$.
		\item Using public randomness, the players sample the sets $A^{<\istar}$ and $B^{>\istar}$ each from $\distGHDN$ independently. 
		\item Using private randomness, Alice samples the sets $A^{>\istar}$ such that $(A_j,B_j) \sim \distGHDN$ (for all $j > \istar$); similarly Bob samples the sets $B^{<\istar}$. 
		\item Using public randomness the players sample the sets $(C_i,D_i)$ for all $i \in [m]$ from a (distinct) universe $U_2$ the same as distribution $\distKC$. 
		\item The players construct the collections $\SS:= \set{S_1,\ldots,S_m}$ and $\ST:= \set{T_1,\ldots,T_m}$ by setting $S_i := A_i \cup C_i$ and $T_i:= B_i \cup D_i$ (exactly as in
		distribution $\distKC$).
		\item The players solve the \MaxCover instance using \protKC~and output \No iff $\protKC$ estimates $\opt(\SS,\ST) \leq \tau$ (for the parameter $\tau$ in Lemma~\ref{lem:opt-kc}) and \Yes otherwise. 
	\end{enumerate}
	}

	It is easy to see that the distribution of instances $(\SS,\ST)$ created in the protocol $\protKC$ matches the distribution $\distKC$ for $\MaxCover$ exactly, and hence by Lemma~\ref{lem:opt-kc}, 
	$\protGHD$ is a $(\delta+o(1))$-error protocol for $\GHD$ in the distribution $\distKC$.  The bound on the communication cost of $\protKC$ is also immediate; in the following we bound the 
	information cost of $\protKC$ for $(A,B)$ sampled from $\distGHDN$. 
	
	Let $I$ be a random variable for $\istar$ and $R$ be the set of public randomness used by the players. By Claim~\ref{clm:public-random}, 
	\begin{align*}
		\ICost{\protGHD}{\distGHDN} &= \II_{\distGHDN}(\ProtGHD : A \mid B,R) + \II_{\distGHDN}(\ProtGHD: B \mid A,R)
	\end{align*}
	We now bound the first term in the RHS above (the second term can be bounded exactly the same). In the following, let $\bC$ and $\bD$ denote the vector of random variables for $C_i$'s and $D_i$'s, 
	respectively. 
	\begin{align*}
		\II_{\distGHDN}(\ProtGHD : A \mid B,R) &= {\II_{\distGHDN}(\ProtGHD : A \mid B,R,I)} \tag{$I$ is chosen using public randomness} \\
		&= \EX_{i} \Bracket{\II_{\distGHDN}\paren{\ProtGHD : A_i \mid B_i,A^{<i},B^{>i},\bC,\bD,I=i} }\tag{$R = (A^{<i},B^{>i},\bC,\bD,I)$} \\ 
		&= \sum_{i=1}^{m} \frac{1}{m} \cdot \II_{\distGHDN}\paren{\ProtGHD : A_i \mid B_i,A^{<i},B^{>i},\bC,\bD} 		
	\end{align*}
	where the last equality is because conditioned on $(A,B) \sim \distGHDN$, all sets $A_j,B_j$ (for $j \in [m]$) are chosen from $\distGHDN$ and hence are independent of the $``I=i"$ event.
	We can further derive, 
	\begin{align*}
		\II_{\distGHDN}(\ProtGHD : A \mid B,R) &= \sum_{i=1}^{m} \frac{1}{m} \cdot \II_{\distGHDN}\paren{\ProtGHD : A_i \mid B_i,A^{<i},B^{>i},\bC,\bD} \\
		&\leq \frac{1}{m} \cdot \sum_{i=1}^{m} \II_{\distGHDN}\paren{\ProtGHD : A_i \mid A^{<i},\bB,\bC,\bD} \tag{$A_i \perp B^{<i} \mid \bB,\bC,\bD$ and hence we can apply Fact~\ref{fact:info-increase}} \\
		&= \frac{1}{m} \cdot \II_{\distGHDN}\paren{\ProtGHD :  \bA \mid \bB,\bC,\bD} \tag{chain rule of mutual information, \itfacts{chain-rule}} \\
		&= \frac{1}{m} \cdot \II_{\distGHDN}\paren{\ProtGHD :  \SS \mid \ST,\bC,\bD} \\
		&= \frac{1}{m} \cdot \II_{\distKC}\paren{\ProtKC :  \SS \mid \ST,\bC,\bD,\theta=0} 
	\end{align*}
	where the second last equality is because $\bA$ (resp. $\bB$) and $\SS$ (resp. $\ST$) determine each other conditioned on $\bC$ and $\bD$, and 
	last equality is because the distribution of maximum coverage instances and the messages communicated by the players under $\distGHDN$ and under $\distKC \mid \theta = 0$ exactly matches. 
	
	Moreover, 
	\begin{align*}
		\II_{\distGHDN}(\ProtGHD : A \mid B,R) &\leq \frac{1}{m} \cdot \II_{\distKC}\paren{\ProtKC :  \SS \mid \ST,\bC,\bD,\theta=0} \\
		&\leq \frac{2}{m} \cdot \II_{\distKC}\paren{\ProtKC :  \SS \mid \ST,\bC,\bD,\theta} \tag{by definition of mutual information as $\Pr{\theta = 0} = 1/2$} \\
		&\leq \frac{2}{m} \cdot \paren{\II_{\distKC}\paren{\ProtKC :  \SS \mid \ST,\bC,\bD} + H(\theta)} \tag{by Fact~\ref{fact:bar-hopping}}\\
		&= \frac{2}{m} \cdot \II_{\distKC}\paren{\ProtKC : \SS \mid \ST,\bC,\bD} + \frac{2}{m} \tag{$H(\theta) = 1$ by \itfacts{uniform}}  \\
		&\leq \frac{2}{m} \cdot \II_{\distKC}\paren{\ProtKC :  \SS \mid \ST} + \frac{2}{m} \tag{$\ProtKC \perp \bC,\bD \mid \SS,\ST$ and hence we can apply Fact~\ref{fact:info-decrease}}
	\end{align*}
	By performing the same exact calculation for $\II_{\distGHDN}(\ProtGHD: B \mid A,R)$, we obtain that, 
	\begin{align*}
		\ICost{\protGHD}{\distGHDN} &\leq \frac{2}{m} \cdot \paren{ \II_{\distKC}\paren{\ProtKC : \SS \mid \ST}  +  \II_{\distKC}\paren{\ProtSC : \ST \mid \SS} } + \frac{4}{m} \\
		&= \frac{2}{m} \cdot \ICost{\protKC}{\distKC} + \frac{4}{m} = \frac{O(1)}{m} \cdot \ICost{\protKC}{\distKC} 
	\end{align*}
	where in the last inequality we used the fact that information cost of $\protKC$ is at least $1$. 
\end{proof}

We now have, 

\begin{theorem}\label{thm:kc-cc-lower}
	There exists a sufficiently small constant $\delta > 0$, such that for any $\omega(1/\sqrt{n}) \leq \eps \leq o(1/\log{n})$, and $m = \poly{(n)}$, 
	\begin{align*}
		\CC{\MaxCover}{\distKC}{\delta} = \Omega({m/\eps^2}).
	\end{align*} 
\end{theorem}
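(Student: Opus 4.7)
The plan is to mirror the contradiction argument used for Theorem~\ref{thm:sc-cc-lower}, swapping in the two maximum-coverage analogues already established: the direct-sum reduction of Lemma~\ref{lem:kc-direct-sum} (which turns any protocol for $\MaxCover$ on $\distKC$ into one for $\GHD_{t_1}$ on $\distGHD$ with $1/m$-fraction of the information cost) and the gap-hamming lower bound of Lemma~\ref{lem:ic-ghd} (which says that any low-error $2^{o(t_1)}$-size protocol for $\GHD_{t_1}$ must have $\Omega(t_1)$ information on the \No-distribution). Recall $t_1 = 1/\eps^2$, so $m \cdot t_1 = m/\eps^2$ is exactly the lower bound we target.

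First I would fix the error parameter. Let $\delta_0 > 0$ be the constant given by Lemma~\ref{lem:ic-ghd}, and choose $\delta := \delta_0/2$ (or any constant with $\delta + o(1) < \delta_0$) so that a $(\delta + o(1))$-error protocol output by Lemma~\ref{lem:kc-direct-sum} still falls under the hypothesis of Lemma~\ref{lem:ic-ghd}.

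Next, assume for contradiction that some $\delta$-error protocol $\protKC$ for $\MaxCover$ on $\distKC$ has $\norm{\protKC} = o(m/\eps^2) = o(m \cdot t_1)$. By Proposition~\ref{prop:cc-ic}, $\ICost{\protKC}{\distKC} \leq \norm{\protKC} = o(m\cdot t_1)$. Applying Lemma~\ref{lem:kc-direct-sum} produces a $(\delta + o(1))$-error protocol $\protGHD$ for $\GHD_{t_1}$ on $\distGHD$ satisfying $\ICost{\protGHD}{\distGHDN} = O(1)/m \cdot \ICost{\protKC}{\distKC} = o(t_1)$ and $\norm{\protGHD} = \norm{\protKC} = o(m\cdot t_1)$.

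Finally, I would verify the size hypothesis of Lemma~\ref{lem:ic-ghd}: since $m = \poly(n)$ and $\eps \geq \omega(1/\sqrt n)$, we have $\log(m \cdot t_1) = O(\log n)$, while the assumption $\eps \leq o(1/\log n)$ gives $t_1 = 1/\eps^2 = \omega(\log^2 n)$; hence $\log \norm{\protGHD} = O(\log n) = o(t_1)$, i.e., $\norm{\protGHD} = 2^{o(t_1)}$. Now Lemma~\ref{lem:ic-ghd} applies and forces $\ICost{\protGHD}{\distGHDN} = \Omega(t_1)$, contradicting the $o(t_1)$ bound above. This rules out the existence of such a $\protKC$ and yields $\CC{\MaxCover}{\distKC}{\delta} = \Omega(m/\eps^2)$. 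There is no real obstacle left at this stage, as all the hard work has already been absorbed into Lemmas~\ref{lem:opt-kc},~\ref{lem:kc-direct-sum}, and~\ref{lem:ic-ghd}; the only care needed is the bookkeeping on error parameters and the verification that $m \cdot t_1 \leq 2^{o(t_1)}$ in the allowed range of $\eps$.
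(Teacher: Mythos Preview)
Your proposal is correct and follows essentially the same approach as the paper's own proof: assume a too-cheap protocol, use Proposition~\ref{prop:cc-ic} to bound its information cost, apply Lemma~\ref{lem:kc-direct-sum} to get a $\GHD_{t_1}$ protocol with $o(t_1)$ information cost on $\distGHDN$, verify the $2^{o(t_1)}$ communication bound via the parameter ranges, and invoke Lemma~\ref{lem:ic-ghd} for the contradiction. If anything, your bookkeeping on the error parameter $\delta$ and the verification that $\norm{\protGHD} = 2^{o(t_1)}$ is slightly more explicit than the paper's.
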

\begin{proof}
	Suppose there exists a $\delta$-error protocol $\protKC$ for $\MaxCover$ on $\distKC$ for a sufficiently small constant $\delta$ (to be determined later), with $\norm{\protGHD} = o(m/\eps^2)$; 
	by Proposition~\ref{prop:cc-ic}, $\ICost{\protKC}{\distKC} = o(m/\eps^2)$ as well. Hence, by Lemma~\ref{lem:kc-direct-sum}, we obtain a $(\delta+o(1))$-error protocol $\protGHD$ for $\GHD_{t_1}$ on $\distGHD$ 
	with $\norm{\protGHD} = o(m/\eps^2)$ and 
	$\ICost{\protGHD}{\distGHDN} = o(1/\eps^2) = o(t_1)$.  However, since $\norm{\protGHD} = o(m/\eps^2) = 2^{o(t_1)}$ as $m = \poly(n)$ and $t_1 = \omega(\log{n})$, we can
	 now apply Lemma~\ref{lem:ic-ghd} and argue that $\ICost{\protGHD}{\distGHDN}$ is $\Omega(t_1)$ (by taking $\delta$ smaller than the bounds in the Lemma~\ref{lem:ic-ghd}); a contradiction
	 with the information cost of $\protGHD$ obtained by Lemma~\ref{lem:kc-direct-sum}. 
\end{proof}

 We point out that to extend the results in Theorem~\ref{thm:kc-cc-lower} to $\eps > 1/\log{n}$ case (i.e., the case not handled by Theorem~\ref{thm:kc-cc-lower}), we can simply use an existing $\Omgt(m)$ lower
 bound of~\cite{McGregorVu16} (Theorem 21) for this range of the parameter $\eps$. 
 
We can now prove Theorem~\ref{thm:kc-lower} by using Theorem~\ref{thm:kc-cc-lower}, the same exact way as we proved Theorem~\ref{thm:sc-lower}, i.e., by defining a random partitioning
 version of the distribution $\distKC$ and proving the lower bound using that partitioning. We briefly sketch the proof here.  
 
 \begin{proof}[Proof Sketch of Theorem~\ref{thm:kc-lower}] 
	Define the distribution $\distKC'$ similar to the distribution $\distKC$ with the difference that after creating the sets $\SS$ and $\ST$, 
	we randomly partition the sets between the players (i.e., assign each set to Alice w.p. $1/2$ and o.w. to Bob). The same exact argument in Lemma~\ref{lem:dist-random}, combined with Theorem~\ref{thm:kc-cc-lower}
	(instead of Theorem~\ref{thm:sc-cc-lower} in Lemma~\ref{lem:dist-random}) now proves that for some sufficiently small constant $\delta > 0$, $\CC{\MaxCover}{\distKC'}{\delta} = \Omega(m/\eps^2)$. 
	
	Furthermore, any $p$-pass $s$-space streaming algorithm for maximum coverage on random arrival streams can be turned into an $O(s \cdot p)$-bit communication
	protocol for $\MaxCover$ on $\distKC'$ (with the same error probability); see the proof of Theorem~\ref{thm:sc-lower} for more details. This, together with the lower bound on the distribution 
	$\distKC'$ implies that $s = \Omega(m/(\eps^2 \cdot p))$ as desired.  
\end{proof}

\subsection*{Acknowledgements}
I am grateful to my advisor Sanjeev Khanna for valuable discussions, and to Ehsan Emamjomeh-Zadeh and Sanjeev Khanna for carefully reading the paper and many helpful comments. 
I also thank the anonymous reviewers of PODS 2017 for many insightful comments and suggestions.

\bibliographystyle{acm}
\bibliography{general}

\clearpage
\appendix
\section{Tools from Information Theory}\label{app:info}

Here, we briefly introduce some basic facts from information theory that are needed in this paper. We refer the interested reader to the textbook by Cover and Thomas~\cite{ITbook} for an excellent introduction to this field. 

We use the following basic properties of entropy and mutual information (proofs can be
found in~\cite{ITbook}, Chapter~2).
\begin{fact}\label{fact:it-facts}
  Let $ A$, $ B$, and $ C$ be three (possibly correlated) random variables.
   \begin{enumerate}
  \item \label{part:uniform} $0 \leq \HH( A) \leq \card{ A}$. $\HH( A) = \card{ A}$
    iff $ A$ is uniformly distributed over its support.
  \item \label{part:info-zero} $\II( A :  B) \geq 0$. The equality holds iff $ A$ and
    $ B$ are \emph{independent}.
  \item \label{part:cond-reduce} \emph{Conditioning on a random variable reduces entropy}:
    $\HH( A \mid  B, C) \leq \HH( A \mid  B)$.  The equality holds iff $ A \perp C \mid B$.
  \item \label{part:chain-rule} \emph{The chain rule for mutual information}: $\II( A, B :  C) = \II( A :  C) + \II(  B: C \mid  A)$.
   \end{enumerate}
\end{fact}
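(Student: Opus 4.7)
The plan is to reduce all four items to a single analytic ingredient---Jensen's inequality applied to the function $\log$ (equivalently, non-negativity of KL divergence)---together with purely syntactic manipulations of the definitions of $\HH$ and $\II$.

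For part~(\ref{part:uniform}), the lower bound $\HH(A)\geq 0$ is immediate from $-p\log p\geq 0$ on $[0,1]$, with equality iff every atom has mass $0$ or $1$, i.e., $A$ is supported on a single point (a degenerate form of uniformity). For the upper bound I would write $\HH(A)=\Ex{\log(1/p(A))}$, where $p$ is the pmf of $A$, and apply Jensen's inequality to the concave function $\log$, obtaining $\HH(A)\leq \log\Ex{1/p(A)}=\log\card{\supp{A}}$. Strict concavity of $\log$ then gives equality iff $1/p(A)$ is almost-surely constant, i.e., $A$ is uniform on its support.

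For part~(\ref{part:info-zero}), I would identify $\II(A:B)$ with the KL divergence $D\bigl(p_{AB}\,\|\,p_A\otimes p_B\bigr)$ and again apply Jensen's inequality, this time to the convex function $-\log$; strict convexity yields equality iff $p_{AB}=p_A\otimes p_B$, which is precisely $A\perp B$. Part~(\ref{part:cond-reduce}) then follows by writing $\HH(A\mid B)-\HH(A\mid B,C)=\II(A:C\mid B)$ and noting that conditional mutual information is an average (over values of $B$) of unconditional mutual informations, hence non-negative by the previous part; equality requires $A\perp C$ on each fiber of $B$, i.e., $A\perp C\mid B$.

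For part~(\ref{part:chain-rule}), I would just expand $\II(A,B:C)=\HH(C)-\HH(C\mid A,B)$ and add and subtract $\HH(C\mid A)$ to regroup as $\bigl(\HH(C)-\HH(C\mid A)\bigr)+\bigl(\HH(C\mid A)-\HH(C\mid A,B)\bigr)=\II(A:C)+\II(B:C\mid A)$. There is essentially no obstacle in any of the four items: the only piece of genuine analytic content is Jensen's inequality, while the rest is bookkeeping from the definitions, which is why the author simply cites Chapter~2 of~\cite{ITbook}.
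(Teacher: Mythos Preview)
Your proposal is correct and in fact goes well beyond what the paper does: the paper provides no proof at all for this fact and simply refers the reader to Chapter~2 of~\cite{ITbook}. Your reduction of all four items to Jensen's inequality plus definitional bookkeeping is the standard textbook argument, and each step is sound as written.
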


We also use the following two simple facts, which assert conditions in which conditioning can provably increase (resp. decrease) the mutual information. 

\begin{fact}\label{fact:info-increase}
  For random variables $ A,  B,  C,D$, if $A\perp D \mid C$, then $\II( A :  B \mid  C) \leq \II( A :  B \mid  C,  D)$.
\end{fact}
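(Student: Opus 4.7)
The plan is to prove the inequality by applying the chain rule of mutual information to the joint random variable $(B, D)$ and then using the conditional independence hypothesis to kill one of the resulting terms. Specifically, I will expand $\II(A : B, D \mid C)$ in two different ways via Fact~\ref{fact:it-facts}-(\ref{part:chain-rule}).

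First, treating $B$ as the ``first'' variable, the chain rule (conditioned on $C$) gives
\[
\II(A : B, D \mid C) = \II(A : B \mid C) + \II(A : D \mid B, C).
\]
Second, swapping the roles of $B$ and $D$, it gives
\[
\II(A : B, D \mid C) = \II(A : D \mid C) + \II(A : B \mid C, D).
\]
Equating these two expressions yields
\[
\II(A : B \mid C) + \II(A : D \mid B, C) = \II(A : D \mid C) + \II(A : B \mid C, D).
\]

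Now I will invoke the hypothesis $A \perp D \mid C$, which by Fact~\ref{fact:it-facts}-(\ref{part:info-zero}) means $\II(A : D \mid C) = 0$. Rearranging the displayed identity, I obtain
\[
\II(A : B \mid C, D) - \II(A : B \mid C) = \II(A : D \mid B, C) \geq 0,
\]
where the inequality is again Fact~\ref{fact:it-facts}-(\ref{part:info-zero}). This is precisely the desired conclusion. No step in this argument looks like a real obstacle; the only subtlety is choosing to expand $\II(A : B, D \mid C)$ via the chain rule in two different orders so that the independence hypothesis can eliminate the term $\II(A:D \mid C)$ while leaving the nonnegative slack term $\II(A:D \mid B,C)$ behind.
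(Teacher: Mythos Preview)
Your proof is correct. The paper takes a slightly different but equally elementary route: it expands both sides in terms of conditional entropies, writing $\II(A:B\mid C)=\HH(A\mid C)-\HH(A\mid B,C)$, then uses the hypothesis $A\perp D\mid C$ to replace $\HH(A\mid C)$ by $\HH(A\mid C,D)$ (\itfacts{cond-reduce}), and finally uses the general fact that conditioning reduces entropy to bound $\HH(A\mid B,C)\geq \HH(A\mid B,C,D)$. Your argument instead stays at the level of mutual information, expanding $\II(A:B,D\mid C)$ two ways via the chain rule and cancelling $\II(A:D\mid C)=0$. Both arguments are one-liners; yours has the minor advantage of making the exact slack term $\II(A:D\mid B,C)$ explicit, while the paper's version is perhaps more transparent about which entropy inequality is doing the work.
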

 \begin{proof}
  Since $ A$ and $ D$ are independent conditioned on $C$, by
  \itfacts{cond-reduce}, $\HH( A \mid  C) = \HH( A \mid
   C, D)$ and $\HH( A \mid  C, B) \ge \HH( A \mid  C, B, D)$.  We have,
	 \begin{align*}
	  \II( A :  B \mid  C) &= \HH( A \mid  C) - \HH( A \mid  C, B) = \HH( A \mid  C, D) - \HH( A \mid  C, B) \\
	  &\leq \HH( A \mid  C, D) - \HH( A \mid  C, B, D) = \II( A : B \mid  C, D)
	\end{align*}
\end{proof}

\begin{fact}\label{fact:info-decrease}
  For random variables $ A,  B,  C,D$, if $ A \perp D \mid B,C$, then, $\II(A : B \mid  C) \geq \II(A : B \mid  C,  D)$.
\end{fact}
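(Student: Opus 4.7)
The plan is to prove the inequality $\II(A:B \mid C) \geq \II(A:B \mid C,D)$ by writing the difference of these two quantities as a nonnegative mutual information. The key identity is the chain rule for mutual information (\itfacts{chain-rule}), applied in two different ways to the joint quantity $\II(A : B, D \mid C)$.

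Specifically, the chain rule gives both
\[
\II(A : B, D \mid C) = \II(A : B \mid C) + \II(A : D \mid B, C)
\]
and
\[
\II(A : B, D \mid C) = \II(A : D \mid C) + \II(A : B \mid C, D).
\]
Equating these and rearranging yields
\[
\II(A : B \mid C) - \II(A : B \mid C, D) = \II(A : D \mid C) - \II(A : D \mid B, C).
\]
The hypothesis $A \perp D \mid B,C$ together with \itfacts{info-zero} gives $\II(A : D \mid B, C) = 0$, so the right-hand side reduces to $\II(A : D \mid C)$, which is nonnegative by \itfacts{info-zero}. This yields the desired inequality.

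There is no real obstacle here: the proof is a two-line manipulation using the chain rule and the fact that mutual information is nonnegative, and the conditional independence hypothesis is used exactly once to kill the $\II(A : D \mid B, C)$ term. The only thing to be careful about is the symmetry of mutual information and conditioning, which justifies writing $\II(A : B, D \mid C)$ and expanding it in either order. The structural parallel with Fact~\ref{fact:info-increase} (whose proof is also a short entropy manipulation using $A \perp D \mid C$) suggests this chain-rule approach is the natural one.
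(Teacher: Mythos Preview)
Your proof is correct. The paper's proof takes a slightly different but equally short route: it expands both sides in terms of conditional entropies rather than applying the chain rule for mutual information. Concretely, the paper writes
\[
\II(A:B \mid C) = \HH(A \mid C) - \HH(A \mid B,C) \geq \HH(A \mid C,D) - \HH(A \mid B,C,D) = \II(A:B \mid C,D),
\]
using \itfacts{cond-reduce} twice: once for the inequality $\HH(A \mid C) \geq \HH(A \mid C,D)$, and once (in its equality case) for $\HH(A \mid B,C) = \HH(A \mid B,C,D)$ from the hypothesis $A \perp D \mid B,C$. Your approach stays entirely at the mutual-information level via the chain rule, obtaining the clean identity $\II(A:B \mid C) - \II(A:B \mid C,D) = \II(A:D \mid C)$ after the independence kills the other term. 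The two arguments are really the same facts in different packaging (your $\II(A:D \mid B,C)=0$ is exactly the paper's $\HH(A \mid B,C)=\HH(A \mid B,C,D)$, and your $\II(A:D \mid C) \geq 0$ is exactly the paper's $\HH(A \mid C) \geq \HH(A \mid C,D)$); your version has the minor bonus of identifying the gap explicitly as $\II(A:D \mid C)$.
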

 \begin{proof}
 Since $A \perp D \mid B,C$, by \itfacts{cond-reduce}, $\HH(A \mid B,C) = \HH(A \mid B,C,D)$. Moreover, since conditioning can only reduce the entropy (again by \itfacts{cond-reduce}), 
  \begin{align*}
 	\II( A :  B \mid  C) &= \HH(A \mid C) - \HH(A \mid B,C) \geq \HH(A \mid D,C) - \HH(A \mid B,C) \\
	&= \HH(A \mid D,C) - \HH(A \mid B,C,D) = \II(A : B \mid C,B) 
 \end{align*}
\end{proof}

Finally, we use the following simple inequality that states that conditioning on a random variable can only increase the mutual information
by the entropy of the conditioned variable. 

\begin{fact}\label{fact:bar-hopping}
  For any random variables $ A,  B$ and $C$, $\II(A : B \mid C) \leq \II(A : B) + \HH(C)$. 
\end{fact}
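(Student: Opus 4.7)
The plan is to apply the chain rule for mutual information (\itfacts{chain-rule}) to the quantity $\II(A : B, C)$ in two different orders and then combine the resulting identities. First, I would expand with $C$ taken first:
\[
\II(A : B, C) = \II(A : C) + \II(A : B \mid C),
\]
and then with $B$ taken first:
\[
\II(A : B, C) = \II(A : B) + \II(A : C \mid B).
\]
Equating these two expressions yields
\[
\II(A : B \mid C) - \II(A : B) = \II(A : C \mid B) - \II(A : C),
\]
which reduces the statement to upper bounding the right-hand side by $\HH(C)$.

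Next, I would bound the two correction terms separately. Writing $\II(A : C \mid B) = \HH(C \mid B) - \HH(C \mid A, B)$, non-negativity of conditional entropy gives $\II(A : C \mid B) \leq \HH(C \mid B)$, and one more application of \itfacts{cond-reduce} (conditioning reduces entropy) gives $\HH(C \mid B) \leq \HH(C)$. The remaining term $-\II(A : C)$ is at most $0$ by \itfacts{info-zero}. Combining these bounds finishes the proof.

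This is a short information-theoretic manipulation, so I do not expect any real obstacle; the only care needed is to get the signs right when combining the two chain-rule expansions and to avoid invoking Fact~\ref{fact:bar-hopping} itself in disguise when bounding $\II(A:C \mid B)$. An equivalent route would be to first observe $\II(A : B \mid C) = \II(A : B, C) - \II(A : C) \leq \II(A : B, C)$ and then bound $\II(A : B, C) \leq \II(A : B) + \HH(C)$ directly via the same chain-rule identity, but the two-expansion derivation above is the cleanest.
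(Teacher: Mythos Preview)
Your proof is correct and follows essentially the same route as the paper: both derive the identity $\II(A:B\mid C)=\II(A:B)+\II(A:C\mid B)-\II(A:C)$ via two applications of the chain rule, then drop $-\II(A:C)\leq 0$ and bound $\II(A:C\mid B)\leq \HH(C\mid B)\leq \HH(C)$. The only cosmetic difference is that the paper writes the chain-rule step as $\II(A:B\mid C)=\II(A:B,C)-\II(A:C)$ before expanding $\II(A:B,C)$, which is exactly the ``equivalent route'' you mention.
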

\begin{proof}
	\begin{align*}
		\II(A : B \mid C) &= \II(A : B,C) - \II(A : C) \\
		&= \II(A : B) + \II(A : C \mid B) - \II(A : C) \\
		&\leq \II(A : B) + \HH(C \mid B) \leq \II(A : B) + \HH(C)
	\end{align*}
	where the first two equalities are by chain rule (\itfacts{chain-rule}), the second inequality is by definition of mutual information and its positivity (\itfacts{info-zero}), and the last one is because conditioning 
	can only reduce the entropy (\itfacts{cond-reduce}).
\end{proof}
\section{Proof of Lemma~\ref{lem:ic-ghd}}\label{app:ic-ghd}
\begin{proof}
	The proof consists of two separate parts. We first prove that there exists a pair $a,b \in [t]$, for which $\GHD$ is still ``hard'' under the distribution $\distU(a,b) := \distU \mid \card{A} = a, \card{B}=b$ (i.e., when we 
	fix the size of the sets $A$ and $B$), and in next 
	part, use this fact to prove the bound for the distribution $\distGHDN$ defined for the same pair of $a,b$ found in the first part (the proof of second part is basically the same as Lemma~\ref{lem:disjointness}). 
	
	\begin{claim}\label{clm:ghd-a-b}
		Let $\delta > 0$ be a sufficiently small constant; there exists a pair of $a,b \in [t]$ such that 
		\[\IC{\GHD}{\distU(a,b)}{\delta} = \Omega(t)\]
		 whereby $\distU(a,b) = \distU \mid \card{A} = a, \card{B} = b$. 
	\end{claim}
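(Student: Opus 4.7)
The plan is to prove this claim by a contradiction-via-combination argument that reduces it to Lemma~\ref{lem:ic-ghd-uniform}. Let $c_0 > 0$ be the absolute constant hidden in the $\Omega(t)$ lower bound of Lemma~\ref{lem:ic-ghd-uniform}. Suppose towards a contradiction that for every pair of sizes $(a,b) \in [t]^2$ there is a $\delta$-error protocol $\prot_{a,b}$ for $\GHD_t$ on $\distU(a,b)$ with $\ICost{\prot_{a,b}}{\distU(a,b)} \leq (c_0/2) \cdot t$. I would construct from these a combined $\delta$-error protocol $\prot$ for $\GHD_t$ on $\distU$ as follows: Alice first sends $\card{A}$ to Bob using $\lceil \log(t+1) \rceil$ bits, Bob sends $\card{B}$ to Alice using $\lceil \log(t+1) \rceil$ bits, and then both players invoke $\prot_{\card{A},\card{B}}$ on their inputs. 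Because $\distU$ is precisely the mixture of the conditional distributions $\distU(a,b)$ weighted by the joint distribution of $(\card{A},\card{B})$, and each $\prot_{a,b}$ errs with probability at most $\delta$ on its corresponding conditional, the protocol $\prot$ errs with probability at most $\delta$ on $\distU$.

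The information cost of $\prot$ can then be bounded by a standard chain-rule calculation. Let $M_1 = \card{A}$, $M_2 = \card{B}$, and $\Prot'$ denote the transcript of the inner protocol, so that the full transcript is $\Prot = (M_1, M_2, \Prot')$. Since $M_2$ is a deterministic function of $B$, we have $\II(M_2 : A \mid B, M_1) = 0$, and hence by \itfacts{chain-rule}:
\[
  \II_{\distU}(\Prot : A \mid B) \;=\; \II_{\distU}(M_1 : A \mid B) + \II_{\distU}(\Prot' : A \mid B, M_1, M_2) \;\leq\; \log(t+1) + \EX_{(a,b)}\bracket{\II_{\distU(a,b)}(\Prot' : A \mid B)},
\]
where we used $\HH(M_1) \leq \log(t+1)$ and the key observation that conditioned on $(M_1, M_2) = (a, b)$, the joint law of $(A, B, \Prot')$ matches that of $\prot_{a,b}$ run on a fresh input from $\distU(a,b)$. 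The symmetric bound holds for $\II(\Prot : B \mid A)$, so summing yields $\ICost{\prot}{\distU} \leq 2\log(t+1) + \EX_{(a,b)}\bracket{\ICost{\prot_{a,b}}{\distU(a,b)}} \leq 2\log(t+1) + (c_0/2) \cdot t$. For sufficiently large $t$ this is strictly below $c_0 \cdot t$, contradicting Lemma~\ref{lem:ic-ghd-uniform} and hence establishing the existence of some $(a,b)$ with $\IC{\GHD}{\distU(a,b)}{\delta} = \Omega(t)$.

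There is essentially no hard step here; this claim serves as a mild preprocessing lemma that fixes the input set sizes before the main technical work of Lemma~\ref{lem:ic-ghd}, which will switch the conditioning from $\distU$-style to $\distGHDN$-style via an information-odometer argument analogous to Lemma~\ref{lem:disjointness}. The only subtlety worth verifying is that once the players publicly exchange the sizes $M_1, M_2$, the posterior distribution of $(A,B)$ is exactly $\distU(a,b)$, so that the inner protocol operates on the distribution it was designed for and its contribution to the information cost under $\distU$ is exactly an average of $\ICost{\prot_{a,b}}{\distU(a,b)}$; this holds by construction since $\distU$ is a product distribution.
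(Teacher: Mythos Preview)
Your proposal is correct and follows essentially the same approach as the paper: assume by contradiction that every $\prot_{a,b}$ has small information cost, build a combined protocol that first exchanges $\card{A},\card{B}$ and then runs the appropriate $\prot_{a,b}$, and bound its information cost via the chain rule to contradict Lemma~\ref{lem:ic-ghd-uniform}. The only cosmetic difference is that you phrase the contradiction quantitatively (with an explicit constant $c_0/2$) whereas the paper uses $o(t)$; your version is arguably cleaner.
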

	\begin{proof}
		Let $\delta$ be as in Lemma~\ref{lem:ic-ghd-uniform}. Suppose by contradiction that for all $a,b \in [t]$, $\IC{\GHD}{\distU(a,b)}{\delta} = o(t)$, and let $\prot_{a,b}$ be the protocol achieving this bound 
		for a specific choice of $a,b$. We design the following protocol $\prot$ for $\GHD$ on the distribution $\distU$: Given an input $(A,B) \sim \distU$, Alice and Bob first communicate $a=\card{A}$, $b = \card{B}$ 
		to each other and then run $\prot_{a,b}$ on their input and output the same answer as in $\prot_{a,b}$. Since each $\prot_{a,b}$ is computed on the same exact distribution as $\distU(a,b)$ (corresponding
		to the same parameters $a$ and $b$), $\prot$ is a $\delta$-error protocol for $\GHD$ on $\distU$. We now bound the information
		 cost of $\prot$ as follows (in the following, $\Prot$ corresponds to the protocol $\prot$, $\Prot_{a,b}$ 
		corresponds to the protocol $\prot_{a,b}$, and $X_A$ (resp. $X_B$) is a random variable for size of $A$ (resp. $B$)) 
		\begin{align*}
			\ICost{\prot}{\distU} &= \II(A : \Prot \mid B) + \II(B : \Prot \mid A) \\
			&= \II(A : \Prot_{a,b},X_A,X_B \mid B) + \II(B: \Prot_{a,b},X_A,X_B \mid A) \\ 
			&= \II(A: \Prot_{a,b} \mid B,X_A,X_B) +  \II(B : \Prot_{a,b} \mid A,X_A,X_B) + 2\cdot\HH(X_A,X_B)  \tag{by \itfacts{chain-rule}) and \itfacts{info-zero}}\\
			&= \EX_{(a,b)} \Big[ \II(A : \Prot_{a,b} \mid B , X_A = a, X_B = b) + \II(B : \Prot_{a,b} \mid A, X_A = a, X_B = b)\Big] + O(\log{t}) \tag{by~\itfacts{uniform}} \\
			&= \EX_{(a,b)} \Bracket{\II_{\distU(a,b)}(A: \Prot_{a,b} \mid B) + \II_{\distU(a,b)}(B : \Prot_{a,b} \mid A)} + O(\log{t}) \tag{by definition, $\distU(a,b) := \distU \mid X_A = a, X_B = b$} \\
			&= \EX_{(a,b)} \Bracket{\ICost{\prot_{a,b}}{\distU(a,b)}} + O(\log{t}) \tag{by definition of information cost of $\prot_{a,b}$}\\
			&= o(t) + O(\log{t}) = o(t) 
		\end{align*}
		where in the second last inequality we used  assumption that $\ICost{\prot_{a,b}}{\distU(a,b)} = o(t)$ for all $a$ and $b$.

		Consequently, we obtained a $\delta$-error protocol $\prot$ for $\GHD$ on the distribution $\distU$ with information cost of $o(t)$, a contradiction with Lemma~\ref{lem:ic-ghd-uniform}. This means that
		there should exists at least on pair $a,b$ such that $\IC{\GHD}{\distU(a,b)}{\delta}$ is $\Omega(t)$, proving the claim. 
	\end{proof}
	
	Now fix $a$ and $b$ as in Claim~\ref{clm:ghd-a-b} and define $\distGHD$ accordingly. Suppose by contradiction
	that $\ICost{\protGHD}{\distGHD}$ is some $\tau = o(t)$. We can use the previous information odometer argument (i.e., Lemma~\ref{lem:one-sided-two-sided}) 
	to create a protocol $\protGHD'$ that solves $\GHD$ on the distribution $\distU(a,b)$ and has information cost of $\ICost{\protGHD'}{\distU(a,b)} = o(t)$, a
	contradiction with Claim~\ref{clm:ghd-a-b}. We can simply create $\protGHD'$ as follows: run the protocol 
	$\protGHD$ and the information odometer in parallel; whenever the information cost of $\protGHD$ is larger than $c \cdot \tau$ (for a sufficiently large constant $c$), 
	terminate the protocol and output an arbitrary answer, otherwise output the same answer as $\protGHD$. This ensures that  $\ICost{\protGHD'}{\distU(a,b)} = o(t)$. 
	By definition of the distribution $\distGHD$, and the fact that $\ICost{\protGHD}{\distGHD} = \tau$, in the cases that we terminate the protocol $\protGHD$, the answer to $\GHD$ can be arbitrary
	w.p. $1-o(1)$ and hence the new protocol is $(\delta+o(1))$-error protocol for $\GHD$ on $\distU(a,b)$. This can be made formal exactly as in the proof of Lemma~\ref{lem:one-sided-two-sided}. 
	The rest of the proof now follows from Lemma~\ref{lem:one-sided-two-sided} exactly as in Lemma~\ref{lem:disjointness}.
\end{proof}

\end{document}